\setlist[enumerate]{nosep,topsep=0.3em}
\setlist[enumerate,1]{label=(\roman*)}
\setlist[enumerate,2]{label=(\alph*)}
\setlist[itemize]{nosep,topsep=0.1em}
\newlist{algostepsarabic}{enumerate}{1}
\setlist[algostepsarabic]{label=\arabic*., ref=\arabic*, nosep, itemsep=.3em, labelwidth=0.9em, labelindent=-.8em, leftmargin=!, rightmargin=8mm}
\patchcmd\blx@bblinput{\blx@blxinit}
                      {\blx@blxinit
                      }{}{\fail}
\addspace\mkbibbrackets{\thefield{eprintclass}}}}}
\addspace\mkbibbrackets{\thefield{eprintclass}}}}}
\tikzstyle{guessedIn}+=[line width=3pt, blue]
\tikzstyle{guessedOut}+=[densely dotted, red]
\tikzstyle{ns}+=[thick,draw=black,fill=white,circle,minimum size=1.5mm, inner sep=0pt]
\definecolor{darkblue}{rgb}{0,0,0.38}
\definecolor{darkred}{rgb}{0.6,0,0}
\definecolor{darkgreen}{rgb}{0.1,0.35,0}
\colorlet{cutcol}{white!50!black}
\newcommand{\labeltarget}[1]{\Hy@raisedlink{\hypertarget{#1}{}}}
\crefname{theorem}{Theorem}{Theorems}
\crefname{lemma}{Lemma}{Lemmas}
\Crefname{claim}{Claim}{Claims}
\Crefname{fact}{Fact}{Facts}
\Crefname{remark}{Remark}{Remarks}
\Crefname{observation}{Observation}{Observations}
\Crefname{algocf}{Algorithm}{Algorithms}
\Crefname{property}{Property}{Properties}
\Crefname{line}{Line}{Lines}
\Crefname{figure}{Figure}{Figures}
\Crefname{algostepsarabici}{Step}{Steps}
\newtheorem{theorem}{Theorem}
\newtheorem{lemma}[theorem]{Lemma}
\newtheorem{proposition}[theorem]{Proposition}
\newtheorem{definition}[theorem]{Definition}
\newtheorem{observation}[theorem]{Observation}
\newtheorem{property}[theorem]{Property}
\newlength\bxheight
\newcommand\TJdomRaw[1]{\ensuremath{P_{#1\textrm{-join}}^\uparrow}}
\newcommand\TJdom[1]{\raisebox{0pt}[\bxheight]{\TJdomRaw{#1}}}
\newcommand\PST{\ensuremath{P_{\mathrm{ST}}}\xspace}
\newcommand\OPT{\ensuremath{\mathrm{OPT}}\xspace}
\renewcommand\P{\ensuremath{\mathrm{P}}\xspace}
\newcommand\NP{\ensuremath{\mathrm{NP}}\xspace}
\newcommand\Oh{\mathcal{O}}
\newcommand\sfc[1][]{\ensuremath{(S_{#1},F_{#1},\mathcal{C}_{#1})}}
\newcommand\sfcbar[1][]{\ensuremath{(\smash{\overline{S}}_{#1},\smash{\overline{F}}_{#1},\smash{\overline{\mathcal{C}}}_{#1})}}
\newcommand\sfcprime{\ensuremath{(S',F',\mathcal{C}')}}
\DeclareMathOperator{\conv}{conv}
\DeclareMathOperator{\cone}{cone}
\DeclareMathOperator{\supp}{supp}
\DeclareMathOperator{\argmin}{argmin}
\DeclareMathOperator{\odd}{odd}
\newcommand{\symdiff}{\mathbin{\bigtriangleup}}
\newcommand{\manuallabel}[1]{\def\@currentlabel{#1}\label{#1}}
\newcommand{\MCCST}{\hyperlink{prb:MCCST}{MCCST}\xspace}
\newcommand{\MLCST}{\hyperlink{prb:MLCST}{MLCST}\xspace}
\newcommand{\pathTSP}{\hyperlink{prb:pathTSP}{Path TSP}\xspace}
\newcommand{\MSCJ}[1][T]{\hyperlink{prb:MSCJ}{MSCJ\textsubscript{$#1$}}\xspace}
\newcommand{\MBDST}{\hyperlink{prb:MBDST}{MBDST}\xspace}
\newcommand{\TSP}{\hyperlink{prb:TSP}{TSP}\xspace}
\newcommand{\ATSP}{\hyperlink{prb:ATSP}{ATSP}\xspace}
\newcommand{\bATSP}{\hyperlink{prb:bATSP}{Bottleneck ATSP}\xspace}
\title{A New Dynamic Programming Approach for Spanning Trees with Chain Constraints and Beyond%
\thanks{%
Funded through the Swiss National Science Foundation grants 200021\_184622 and P500PT\_206742, the European Research Council (ERC) under the European Union's Horizon 2020 research and innovation programme (grant agreement No 817750), and the Deutsche Forschungsgemeinschaft (DFG, German Research Foundation) under Germany's Excellence Strategy~--~EXC~2047/1~--~390685813.

A short version of this work appeared in the proceedings of the 30th annual ACM-SIAM Symposium on Discrete Algorithms (SODA 2019)~\cite{nagele_2019_new}.
}%
}
\author{%
Martin N\"agele\thanks{%
Research Institute for Discrete Mathematics and Hausdorff Center for Mathematics, University of Bonn, Bonn, Germany.
Email: \href{mailto:mnaegele@uni-bonn.de}%
{mnaegele@uni-bonn.de}.
Most of this work was done while the author was employed at ETH Zurich.
}
\and
Rico Zenklusen\thanks{%
Department of Mathematics, ETH Zurich, Zurich, Switzerland.
Email: \href{mailto:ricoz@math.ethz.ch}%
{ricoz@math.ethz.ch}.
}
}
\date{}
\begin{document}

\maketitle

\begin{abstract}
Short spanning trees subject to additional constraints are important building blocks in various approximation algorithms, and, moreover, they capture interesting problem settings on their own.
Especially in the context of the Traveling Salesman Problem (TSP), new techniques for finding spanning trees with well-defined properties have been crucial in recent progress.
We consider the problem of finding a spanning tree subject to constraints on the edges in a family of cuts forming a laminar family of small width.
Our main contribution is a new dynamic programming approach where the value of a table entry does not only depend on the values of previous table entries, as it is usually the case, but also on a specific representative solution saved together with each table entry.
This allows for handling a broad range of constraint types.

In combination with other techniques---including negatively correlated rounding and a polyhedral approach that, in the problems we consider, allows for avoiding potential losses in the objective through the randomized rounding---we obtain several new results.
We first present a quasi-polynomial time algorithm for the Minimum Chain-Constrained Spanning Tree Problem with an essentially optimal guarantee.
More precisely, each chain constraint is violated by a factor of at most $1+\varepsilon$, and the cost is no larger than that of an optimal solution not violating any chain constraint.
The best previous procedure is a bicriteria approximation violating each chain constraint by up to a constant factor and losing another factor in the objective.
Moreover, our approach can naturally handle lower bounds on the chain constraints, and it can be extended to constraints on cuts forming a laminar family of constant width.

Furthermore, we show how our approach can also handle parity constraints (or, more precisely, a proxy thereof) as used in the context of (Path) TSP and one of its generalizations, and discuss implications in this context.
\end{abstract}

\begin{tikzpicture}[overlay, remember picture, shift = {(current page.south east)}]
\coordinate (anchor) at (0,0);
\node[anchor=south east, outer sep=5mm] at (anchor) {
\begin{tikzpicture}[outer sep=0] %
\node (ERC) {\includegraphics[height=13mm]{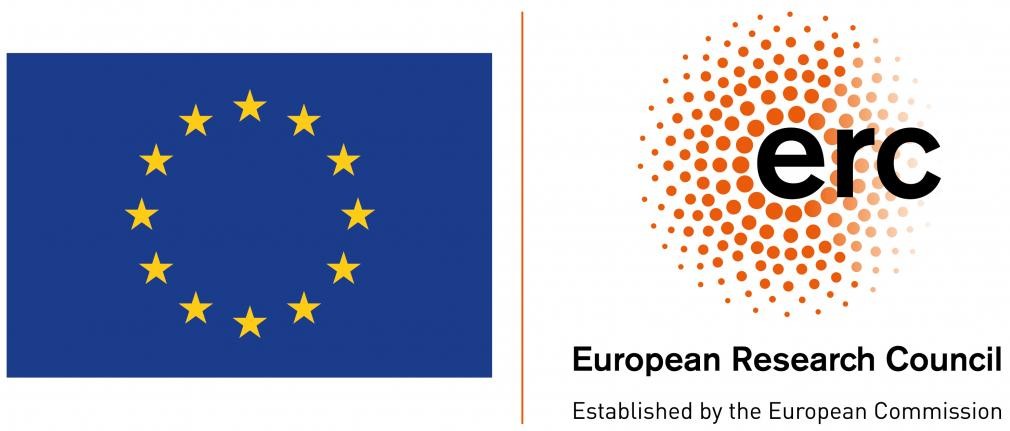}};
\node[left=5mm of ERC] (SNSF) {\includegraphics[height=7mm]{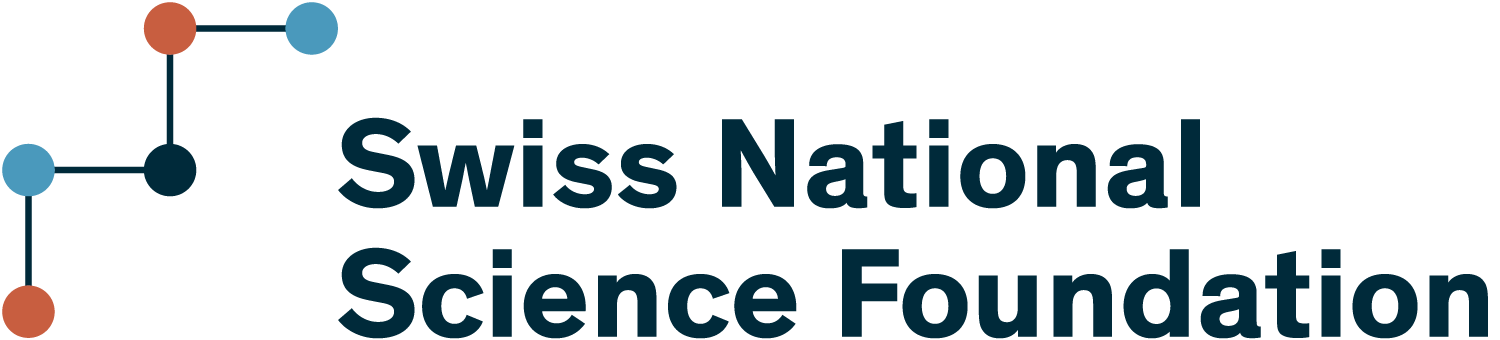}};
\node[right=5mm of ERC] (DFG) {\includegraphics[height=5mm]{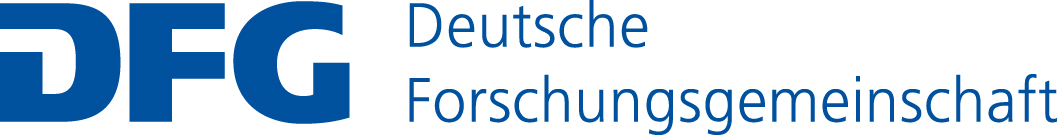}};%
\end{tikzpicture}
};
\end{tikzpicture}

\thispagestyle{empty}
\addtocounter{page}{-1}
\newpage

\section{Introduction}\label{sec:introduction}

Given a graph $G=(V,E)$ and edge costs $c\colon E\to\mathbb{R}_{\geqslant 0}$, the problem of finding a minimum cost spanning tree in $G$ with respect to $c$ is one of the most classical network design problems.
A variety of applications in areas like chip design, vehicle routing, and telecommunication networks triggered interest in constrained spanning tree problems.
Moreover, such problems are regularly used as building blocks in the design of approximation algorithms.
In particular, many approaches used in recent progress on the Traveling Salesman Problem (\labeltarget{prb:TSP}{\TSP}), where the goal is to find a shortest cycle in $G$ covering all vertices, and its path version have a constrained spanning tree problem as a key component.%
\footnote{
We remark that when referring to \TSP and its variants, we always assume that the involved edge lengths $c$ are metric.
}

The arguably most classical example of a constrained spanning tree problem is the minimum bounded degree spanning tree problem (\labeltarget{prb:MBDST}{\MBDST}).
Here, the goal is to find a spanning tree $T\subseteq E$ in $G$ of minimum cost subject to $T$ satisfying a degree constraint $|T\cap\delta(v)|\leqslant d(v)$ at every vertex $v$, where $d\colon V\to\mathbb{Z}_{>0}$ are given degree bounds.
Already just finding a feasible solution for \MBDST can easily be seen to be \NP-hard, even in the special case where $d(v)=2$ for all $v$, as this captures the Hamiltonian path problem.
This is typical for most constrained spanning tree problems.
The focus has therefore been on approximation algorithms that allow for a slight violation of the additional constraints.
This led to algorithms with various trade-offs between cost and constraint violation.
After a series of papers with progress on the approximation guarantees (see~\cite{konemann2000matter, konemann2003primaldual, goemans2006mbdst, chaudhuri2009pushrelabel, chaudhuri2009what} and references therein), an essentially best possible approximation algorithm for \MBDST was given by \textcite{singh2007approximating}.
Using iterative relaxation, they return a spanning tree violating each degree constraint by at most $1$ unit, and of cost no more than that of an optimal solution not violating the degree constraints.
\textcite{bansal2009additive} presented an elegant generalization of this result to upper bounds on the number of edges picked in a family of arbitrary edge sets $E_1,\ldots, E_k\subseteq E$.
More precisely, they show that a spanning tree violating each constraint by at most $\max_{e\in E}|\{i\in [k]\mid e\in E_i\}|-1$ and with cost no more than that of an optimal solution can be found.
If each edge is only contained in a constant number of constraints, this still leads to a constraint violation by only an additive constant.
Whereas iterative relaxation is undoubtedly a very strong tool to find constrained spanning trees, it is difficult to obtain constraint violations of at most a constant (either additively or multiplicatively) through this technique when edges can be in a super-constant number of constraints (see~\cite{zenklusen2012matroidal} for one rare example of this type).

However, constrained spanning tree problems appearing in the design of approximation algorithms, especially within problems related to \TSP, are often of this type.
For example, \textcite{asadpour_2017_atsp} established a beautiful connection between the asymmetric version of \TSP (\labeltarget{prb:ATSP}{\ATSP}), where the edge lengths may be asymmetric but are still assumed to satisfy the triangle inequality, and so-called thin trees, which are trees with constraints on all cut sets.
More precisely, if there is a constant $c$ such that for any $k\in \mathbb{Z}_{>0}$, one can efficiently find in any $k$-edge-connected graph $G=(V,E)$ a spanning tree $T\subseteq E$ with $|T\cap \delta(S)| \leqslant \sfrac{c}{k}\cdot |\delta(S)|$ for all $S\subseteq V$, then this can be transformed into an $\Oh(1)$-approximation for \ATSP.
Such trees are sometimes referred to as constantly-thin trees.
The existence of a weaker version of constantly-thin trees was conjectured by \textcite{goddyn_2004_open} and remains open.
We highlight that recently, \textcite{svensson_2020_constant-factor} obtained a $506$-approximation for \ATSP through different techniques, which has subsequently been improved to a $(22+\varepsilon)$-approximation by \textcite{traub_2022_improved}.
These algorithms are LP-based with respect to the Held-Karp relaxation, whose integrality gap is known to have a lower bound of $2$.
Finding constantly-thin spanning trees may be one path to advance on the approximability of \ATSP and the integrality gap of the Held-Karp relaxation, and they are also a natural path to obtain a first $O(1)$-approximation for \labeltarget{prb:bATSP}{\bATSP}, where the goal is find a Hamiltonian cycle where the edge of largest length is as small as possible.
(See~\cite{an_2021_approximation}, which also presents the currently best $O(\sfrac{\log |V|}{\log \log |V|})$-approximation for \bATSP.)
Moreover, especially for \pathTSP, where the task is to find a shortest Hamiltonian $s$-$t$ path in a complete graph with metric lengths, finding spanning trees with various additional constraints/properties has been crucial in recent progress~\cite{an_2015_improving,sebo_2013_eight-fifth,vygen_2016_reassembling,gottschalk_2018_better,sebo_2019_salesman,traub_2019_approaching,zenklusen2018tsp}.
Interestingly, the type of tree properties considered for \pathTSP are often on the edges contained in a family of $s$-$t$ cuts that form a chain.\footnote{Throughout this paper, a \emph{cut} of a vertex set $V$ is a non\-empty set $S\subsetneq V$.
An edge $e$ \emph{lies in} a cut $S$ if $e\in\delta(S)$.} More generally, the metric shortest connected $T$-join problem (\MSCJ), which generalizes both \pathTSP and classical \TSP, naturally leads to a laminar family of cuts to be considered~\cite{cheriyan_2015_approximating}.\footnote{For some even cardinality vertex set $T\subseteq V$ in a graph $G=(V,E)$, a $T$-join is an edge set $U\subseteq E$ such that the vertices of odd degree in the subgraph $(V,U)$ are precisely $T$.
Moreover, in \MSCJ, one is allowed to choose as $U$ a multiset of edges in $E$.}
The appearance of cut families with laminar or chain structure in this context stems from the use of combinatorial uncrossing arguments, which are ubiquitous in the context of \TSP, and is thus not surprising.
Clearly, when constraints are imposed on the edges in a family of cuts that are laminar, or even just a chain, then edges can appear in a large number of constraints.

The arguably most canonical constrained spanning tree problem with constraints on a laminar family of cuts is when there are upper bounds on the number of edges in each cut.
This setting was considered by \textcite{bansal2013generalizations}, who designed an iterative relaxation approach for returning a spanning tree violating each constraint by at most $\Oh(\log |V|)$ units and being of cost no more than the cost of an optimal solution not violating the constraints.
As later shown by \textcite{olver_2018_chain-constrained}, this is almost optimal because an additive violation of $\sfrac{c \log |V|}{\log\log |V|}$ units, for some constant $c$, cannot be achieved unless $\P=\NP$.
It remains open whether $\Oh(1)$-multiplicative violations are possible.

In summary, constrained spanning tree problems where edges can appear in a large number of constraints are still badly understood, and new approaches and techniques are needed.

\medskip

The goal of this paper is to introduce a versatile dynamic programming type approach to deal with a variety of constraint types on laminar cut families of small width, with applications to chain-constrained spanning trees, \pathTSP and beyond.
Dynamic programming did not play a crucial role in the above-mentioned problems until a very recent breakthrough result by \textcite{traub_2019_approaching} in the context of \pathTSP, which inspired this work, and later results~\cite{zenklusen2018tsp,traub_2021_reducing} in the context of \pathTSP and variants thereof.
A key new technical ingredient in our approach is to introduce a generalized form of dynamic programming, where the value of a table entry does not only depend on the values of previous table entries, as it is usually the case, but also on a fixed representative solution saved together with each table entry.
This leads to the peculiar situation that it is hard to define upfront the solution set over which our dynamic program optimizes.
However, we can show that it optimizes over a relaxation of the problems we are interested in, and returns solutions with well-defined properties to be exploited later on, which is all we need.
For chain-constrained problems, our dynamic program can be leveraged to return a fractional point in the spanning tree polytope, which can then be rounded to an actual spanning tree.
We show that good spanning trees can be obtained by using negatively correlated rounding procedures together with an alteration procedure that may be of independent interest, and which we therefore present in a more general context.

\subsection{Our results}

Here, we provide a summary of the results that we obtain by combining our dynamic programming approach with various other techniques.
We start with a natural special case of laminarly constrained spanning trees that has been studied previously, namely the \emph{minimum chain-constrained spanning tree} problem (\MCCST), where upper bounds are imposed on the number of edges that can be chosen in a family of cuts that form a chain.
Opposed to previous results, we also allow for lower bounds on the number of edges in the cuts, which can be handled with our methods without additional complications.

\begin{mdframed}[leftmargin=0.025\linewidth,rightmargin=0.025\linewidth]%
{\textbf{Minimum Chain-Constrained Spanning Tree Problem (\labeltarget{prb:MCCST}{\MCCST}):}}
Let $G=(V,E)$ be a graph with edge costs $c\colon E\to\mathbb{R}_{\geqslant 0}$, and let $\emptyset\subsetneq S_1\subsetneq S_2\subsetneq \ldots\subsetneq S_k\subsetneq V$ and $a_1,\ldots,a_k,b_1,\ldots,b_k\in\mathbb{Z}_{\geqslant0}$.
Find a spanning tree $T\subseteq E$ minimizing $c(T)\coloneqq\sum_{e\in T}c(e)$ among all trees satisfying
\vspace{-0.5em}
\begin{equation*}
a_i \leqslant |T\cap \delta(S_i)| \leqslant b_i \quad\text{for all $i\in[k]\coloneqq \{1,\ldots, k\}$.}
\end{equation*}
\end{mdframed}
For $\alpha,\beta \geqslant 1$, we say that an algorithm returning a spanning tree $T$ is an \emph{$(\alpha,\beta)$-approximation} for \MCCST if $\frac{1}{\beta} \cdot a_i \leqslant |T\cap \delta(S_i)| \leqslant \beta\cdot b_i$ for all $i\in [k]$, and $c(T)\leqslant \alpha\cdot c(\OPT)$, where $\OPT$ is a spanning tree of minimum cost among all spanning trees not violating the chain constraints.
For \MCCST without lower bounds, i.e., $a_1=\ldots=a_k=0$, \textcite{linhares_2018_reduction} recently presented an efficient $(\frac{\lambda}{\lambda-1},9\lambda)$-approximation for any $\lambda >1$ by extending a prior approach of \textcite{olver_2018_chain-constrained} that did not handle costs.
Our main result is a quasi-polynomial algorithm for \MCCST (with lower bounds) with essentially best possible guarantees.
\begin{theorem}\label{thm:MCCST}
For every $\varepsilon>0$, there is a randomized $(1,1+\varepsilon)$-approximation algorithm for \MCCST with running time $|V|^{\Oh(\sfrac{\log |V|}{\varepsilon^2})}$.
\end{theorem}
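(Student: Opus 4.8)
The plan is to split the chain into a ``wide'' regime, handled by concentration of measure, and a ``narrow'' regime, handled exactly by a dynamic program, and to round a carefully chosen fractional spanning tree.

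\emph{Step 1 (what the dynamic program should deliver).} Fix a threshold $\tau = \Theta(\log|V|/\varepsilon^2)$ with $\tau\varepsilon^2 = \Theta(\log|V|)$. Call a cut $S_i$ \emph{narrow} for a point $x\in\PST$ if $x(\delta(S_i))\leqslant\tau$, and \emph{wide} otherwise. We want to produce a point $x^*\in\PST$ such that (i) $c^\top x^* \leqslant c(\OPT)$; (ii) $a_i \leqslant x^*(\delta(S_i)) \leqslant b_i$ for all $i\in[k]$; (iii) $x^*(\delta(S_i)) =: r_i \in \mathbb{Z}_{\geqslant 0}$ for every $S_i$ narrow for $x^*$; and (iv) $x^*$ is a convex combination of spanning trees each of which crosses every narrow cut $S_i$ in exactly $r_i$ edges. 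Feasibility of (i)--(ii) is immediate since $\OPT$ is itself such a point; the substance is obtaining (iii)--(iv) from a dynamic program with a table of size $|V|^{\Oh(\log|V|/\varepsilon^2)}$. Here the new feature of the dynamic program is essential: the chain $S_1\subsetneq\cdots\subsetneq S_k$ is processed in order, a table entry at stage $i$ is indexed by $i$ together with a bounded ``boundary state'' that records, for a guessed set of $\Oh(\log|V|/\varepsilon^2)$ edges, which of them cross $S_i$ and how the partial forest is connected across $S_i$, and each entry additionally stores a concrete \emph{representative} partial solution that the transitions extend. This is what simultaneously keeps the table small and lets a globally feasible spanning tree be reconstructed; reading out the best reachable solution --- for the appropriate linear objective, or as the minimizer of a linear program over the fractional solutions reachable through the table --- yields $x^*$.

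\emph{Step 2 (rounding).} Apply a marginal-preserving, negatively correlated rounding scheme to $x^*$ inside the graphic matroid base polytope $\PST$ (e.g.\ randomized swap rounding~\cite{chekuri2010dependent}), giving a random spanning tree $T_0$ with $\expval[|T_0\cap\delta(S_i)|] = x^*(\delta(S_i))$ for all $i$ and $\expval[c(T_0)] = c^\top x^* \leqslant c(\OPT)$. For a wide cut we have $x^*(\delta(S_i)) > \tau$, so the Chernoff bound for negatively correlated indicators gives $|T_0\cap\delta(S_i)| \in (1\pm\varepsilon)\,x^*(\delta(S_i))$ with failure probability at most $2e^{-\varepsilon^2\tau/3}\leqslant|V|^{-3}$ by the choice of $\tau$; a union bound over the at most $|V|$ cuts of the chain then shows that with probability $1-o(1)$ every wide cut is simultaneously within the required $(1+\varepsilon)$-factor of its bounds (using $a_i \leqslant x^*(\delta(S_i))\leqslant b_i$ and $1-\varepsilon\geqslant 1/(1+\varepsilon)$). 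Narrow cuts are handled differently: guided by property (iv) of $x^*$, an \emph{alteration} procedure repairs any narrow cut with $|T_0\cap\delta(S_i)|\neq r_i$ by a short sequence of edge exchanges, producing a spanning tree $T$ with $|T\cap\delta(S_i)| = r_i$ for every narrow cut. The polyhedral ingredient is what makes these exchanges cost-neutral in expectation --- they only swap edges whose exchange the polytope ``permits'', so $\expval[c(T)] \leqslant \expval[c(T_0)] \leqslant c(\OPT)$ --- and also preserves the event that all wide cuts stay within a $(1+\varepsilon)$-factor. Outputting $T$ (and, if one wants a single tree certified to satisfy both guarantees, repeating polynomially many times and keeping a good run) yields the claimed randomized $(1,1+\varepsilon)$-approximation; the running time is dominated by the dynamic program, i.e.\ $|V|^{\Oh(\log|V|/\varepsilon^2)}$.

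\emph{Main obstacle.} The hard part is Step 1: arranging that a dynamic program with a $|V|^{\Oh(\log|V|/\varepsilon^2)}$-size table can still reach a point that matches $\OPT$ on cost, respects all chain constraints, \emph{and} is a convex combination of spanning trees that are exact on the narrow cuts. A conventional dynamic program breaks down because what the ``future'' of the chain needs to know about the ``past'' is essentially the global structure of the partial tree; the point of storing a representative solution with each table value is precisely to reason about feasibility through these representatives rather than through the table values alone, and verifying that the transitions never lose $\OPT$ is the crux. A secondary technical point, used in Step 2, is to couple the dynamic program's output with the rounding so that the alteration enforcing exactness on the narrow cuts provably does not increase the expected cost --- the ``polyhedral approach'' that yields loss-free randomized rounding in this setting.
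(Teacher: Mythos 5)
Your overall architecture matches the paper's (a DP with stored representative solutions producing a strengthened fractional point, negatively correlated marginal-preserving rounding for the wide cuts, and a polyhedral alteration step to avoid the $(1+\varepsilon)$ loss in cost), but the interface you specify between Step~1 and Step~2 is too weak, and this creates a genuine gap in Step~2. You ask the DP only for \emph{value}-integrality on narrow cuts ($x^*(\delta(S_i))=r_i\in\mathbb{Z}_{\geqslant 0}$) together with a decomposition of $x^*$ into trees that are exact on the narrow cuts. Neither property transfers to the rounded tree: swap-type rounding schemes give concentration of $|T_0\cap\delta(S_i)|$ around $x^*(\delta(S_i))$, which is useless when $r_i$ is small (say $r_i=1$ and $|T_0\cap\delta(S_i)|=3$), and the final tree produced by merging the decomposition need not agree with any single tree of the decomposition on those cuts. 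Your proposed fix --- ``a short sequence of edge exchanges repairs any narrow cut with $|T_0\cap\delta(S_i)|\neq r_i$'' while simultaneously being cost-neutral in expectation and not disturbing the other (nested!) narrow and wide cuts --- is exactly the kind of multi-constraint repair that is not known to exist and is not supplied by the alteration theorem you allude to: that theorem performs a \emph{single} edge swap, fixes only the cost (with probability $\geqslant 1/(|V|-1)$, not in expectation), and says nothing about restoring cut loads.

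The paper avoids needing any such repair by demanding a strictly stronger property from the DP: the returned point $y$ is \emph{integral on every edge} of each small cut (this is what ``$\tau$-integral'' means), which the DP achieves because it explicitly guesses the at most $\tau$ edges $F\subseteq\delta(S_i)$ crossing each small cut. Marginal preservation then forces the rounded tree to coincide with $y$ on all small-cut edges automatically --- no narrow-cut alteration is ever performed. The single-swap alteration is used purely to bring the cost down to $c^\top y$; since it only exchanges edges on which $y$ is fractional it cannot touch a small cut, and it changes any large cut by at most one edge, which is absorbed by $y(\delta(S_i))\geqslant\tau+1$. To close your gap you should strengthen requirement (iii)--(iv) to edge-wise integrality of $x^*$ on narrow cuts and delete the narrow-cut repair from Step~2; as written, Step~2 does not go through.
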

The approximation guarantee is essentially best possible in the sense that finding a tree that fulfills all chain constraints is \NP-hard as shown in~\cite{olver_2018_chain-constrained}.
Our randomized algorithm returns a $(1,1+\varepsilon)$-approximation with high probability, and can also be transformed into a Las Vegas algorithm.
Moreover, \cref{thm:MCCST} gives hopes that such best possible guarantees may be achievable with an efficient procedure.
We prove the theorem through a combination of our new dynamic programming approach, which we will introduce in this context, together with negatively correlated rounding and an alteration step to improve the value of the final solution.
As we discuss in \cref{sec:localCorrections}, in this context, the alteration step we use could also be replaced by a technique introduced by \textcite{linhares_2018_reduction}.

\medskip

It turns out that with some modifications, the technical insights outlined above are enough to push our results beyond pure chain constraints towards the more general problem of \emph{minimum laminarly constrained spanning trees} (\MLCST), which is defined as follows.

\begin{mdframed}[leftmargin=0.025\linewidth,rightmargin=0.025\linewidth]%
{\textbf{Minimum Laminarly Constrained Spanning Tree Problem (\labeltarget{prb:MLCST}{\MLCST}):}}
Let $G=(V,E)$ be a graph with edge costs $c\colon E\to\mathbb{R}_{\geqslant 0}$, let $\mathcal{L}\subseteq 2^V\setminus\{\emptyset,V\}$ be a laminar family, and $a_S,b_S\in\mathbb{Z}_{\geqslant0}$ for $S\in\mathcal{L}$.
Find a spanning tree $T\subseteq E$ minimizing $c(T)\coloneqq\sum_{e\in T}c(e)$ among all trees satisfying
\vspace{-0.5em}
\begin{equation*}
a_S \leqslant |T\cap \delta(S)| \leqslant b_S \quad\text{for all $S\in\mathcal{L}$.}
\end{equation*}
\end{mdframed}
Note that if $\mathcal{L}$ contains precisely all singletons, then the above problem setting reduces to the minimum bounded degree spanning tree problem (\MBDST) mentioned in the introduction.
From a structural point of view, constraint types in the special cases \MCCST and \MBDST are ``orthogonal'' in the sense that the role of $\mathcal{L}$ is taken by a chain in one case and by an antichain in the other.
Currently, there is no efficient approach covering both cases.
For a step towards \MLCST using our dynamic programming framework, we parametrize laminar families by their \emph{width}, which is the smallest integer $k$ such that the laminar family does not contain any $k+1$ disjoint sets.
We denote the width of a laminar family $\mathcal{L}$ by $\operatorname{width}(\mathcal{L}$.
Using this notion, we can generalize \cref{thm:MCCST} to obtain the following result.
\begin{theorem}\label{thm:MLCST}
For every $\varepsilon>0$, there is a randomized $(1,1+\varepsilon)$-approximation algorithm for \MLCST with running time $|V|^{\Oh(\sfrac{k\log |V|}{\varepsilon^2})}$, where $k=\operatorname{width}(\mathcal{L})$.
\end{theorem}
Observe that the running time in the above result depends exponentially on the width of the laminar family.
For width $k$ up to the order $\mathcal{O}(\log|V|)$, we thus still achieve quasi-polynomial running time.
Unfortunately, the exponential dependence on $k$ seems to be intrinsic to our approach.

\medskip

Our dynamic programming approach is very versatile in terms of constraint types that can be handled.
To highlight this fact, we show how it can be employed in the context of the Traveling Salesman Problem (TSP), where one is often interested in finding spanning trees with parity constraints on a laminar family of cuts.
More precisely we consider the \emph{metric shortest connected $T$-join problem} (\MSCJ), which is a generalization of \pathTSP defined as follows.

\begin{mdframed}[leftmargin=0.025\linewidth,rightmargin=0.025\linewidth]%
{\boldmath\textbf{Metric Shortest Connected $T$-Join Problem (\labeltarget{prb:MSCJ}{\MSCJ}):}}
Let $G=(V,E)$ be a complete graph with metric edge lengths $\ell\colon E\to\mathbb{R}_{\geqslant 0}$, and let $T\subseteq V$ be nonempty with $|T|$ even.
Find a $T$-join $J\subseteq E$ minimizing $\ell(J)\coloneqq\sum_{e\in J}\ell(e)$ among all $T$-joins $J$ such that $(V,J)$ is connected.
\end{mdframed}
We show how a slight adaptation of our DP approach leading to \cref{thm:MLCST} allows for finding a spanning tree that readily leads to a $(1.5+\varepsilon)$-approximation for \MSCJ when $|T|$ is constant.
\begin{theorem}\label{thm:TJoins}
For every $\varepsilon>0$, there is a $(1.5+\varepsilon)$-approximation algorithm for \MSCJ with running time $|V|^{\Oh(\sfrac{|T|}{\varepsilon})}$.
\end{theorem}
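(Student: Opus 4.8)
The plan is to realize the $(1.5+\varepsilon)$-guarantee by a Christofides-type argument built on a Held--Karp-type relaxation, where the ``Christofides tree'' is produced by a parity-aware version of the dynamic program behind \cref{thm:MLCST} and the parity correction is charged against the relaxation.

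\emph{A relaxation and a lower bound.} First I would solve
\[
  \min\Bigl\{\, \ell^\top x \;:\; x(\delta(S))\geqslant 2 \text{ whenever } |S\cap T| \text{ is even},\ \ x(\delta(S))\geqslant 1 \text{ whenever } |S\cap T| \text{ is odd},\ \ x\geqslant 0 \,\Bigr\},
\]
which is polynomial-time solvable (separation reduces to minimum cuts and minimum odd cuts). For any connected $T$-join $J$, a handshake argument gives $|J\cap\delta(S)|\equiv |S\cap T|\pmod 2$ for every cut $S$, and connectivity gives $|J\cap\delta(S)|\geqslant 1$; hence $\mathbb{1}_J$ is feasible and the optimum $x^*$ satisfies $\ell^\top x^*\leqslant\OPT$. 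Moreover $x^*(\delta(S))\geqslant 1$ for every cut $S$, so $x^*$ lies in the dominant of the spanning tree polytope.

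\emph{Narrow cuts are laminar of bounded width.} Call a cut $S$ with $|S\cap T|$ odd \emph{$\varepsilon$-narrow} if $x^*(\delta(S))<2-\varepsilon$. By an uncrossing argument in the style of \textcite{cheriyan_2015_approximating} one obtains a laminar family $\mathcal{L}$ of representatives of the $\varepsilon$-narrow cuts; its width is at most $|T|$, since $k$ pairwise disjoint sets, each containing an odd (hence positive) number of vertices of $T$, consume at least $k$ of the $|T|$ elements of $T$. Thus $\mathcal{L}$ has constant width when $|T|$ is constant.

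\emph{The dynamic program and the parity correction.} Next I would run the variant of the dynamic program behind \cref{thm:MLCST} that, for each $S\in\mathcal{L}$, keeps track not of the exact cardinality $|F\cap\delta(S)|$ but only of its parity (this is the ``proxy'' for a parity constraint). Exactly as for \cref{thm:MLCST}, this DP optimizes over a relaxation of the associated constrained spanning tree problem --- nonemptiness and cheapness of the relaxation being certified by $x^*$, resp.\ a point of the spanning tree polytope dominated by it, together with the laminar structure of $\mathcal{L}$ --- and it returns an honest spanning tree $F$ with $\ell(F)\leqslant\OPT$ that crosses every cut of $\mathcal{L}$ an odd number of times. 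Now set $Q := T\symdiff\{v\in V:\deg_F(v)\text{ odd}\}$. Since $|S\cap Q|\equiv |S\cap T|+|F\cap\delta(S)|\pmod 2$, any cut $S$ with $|S\cap Q|$ odd either has $|S\cap T|$ even, whence $x^*(\delta(S))\geqslant 2$, or has $|S\cap T|$ odd together with $|F\cap\delta(S)|$ even, which --- because $F$ crosses the cuts of $\mathcal{L}$ oddly --- forces $S$ to be not $\varepsilon$-narrow, i.e.\ $x^*(\delta(S))\geqslant 2-\varepsilon$. Either way $\tfrac{1}{2-\varepsilon}\,x^*(\delta(S))\geqslant 1$, so $\tfrac{1}{2-\varepsilon}\,x^*$ lies in the $Q$-join dominator and a minimum-cost $Q$-join $J$ (computable in polynomial time) has $\ell(J)\leqslant\tfrac{1}{2-\varepsilon}\,\ell^\top x^*\leqslant(\tfrac12+\Oh(\varepsilon))\OPT$. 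The multiset $F\uplus J$ is connected (it contains the spanning tree $F$) and a $T$-join (its odd-degree vertices are $\mathrm{odd}(F)\symdiff Q=T$), with $\ell(F\uplus J)\leqslant(\tfrac32+\Oh(\varepsilon))\OPT$; after rescaling $\varepsilon$ this is the claimed approximation. The running time is dominated by the DP: tracking only parities rather than magnitudes, I expect it to lose the $\log|V|$ factor and one $\tfrac1\varepsilon$ relative to \cref{thm:MLCST} and to run in time $|V|^{\Oh(k/\varepsilon)}=|V|^{\Oh(|T|/\varepsilon)}$.

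\emph{Main obstacle.} The hard part is the step producing $F$: arguing that the relaxation the parity-proxy DP optimizes over still contains a spanning tree of cost at most $\OPT$ that crosses (essentially all of) the $\varepsilon$-narrow cuts an odd number of times. Unlike a cardinality bound, a parity requirement cannot be read off a fractional point --- a cheap point of the spanning tree polytope with $x(\delta(S))<2$ need not decompose into trees each crossing $S$ oddly --- so the ``proxy'' has to be a carefully weakened parity condition that is at once attainable by a tree of cost $\leqslant\OPT$ and strong enough to keep the parity correction of cost $(\tfrac12+\varepsilon)\OPT$; the representative-solution mechanism of our dynamic program is precisely the device that makes this balancing act work, and the certificate underlying it mirrors the refined best-of-many-Christofides analyses used for \pathTSP.
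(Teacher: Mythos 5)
Your overall architecture (Held--Karp-type relaxation, laminar narrow cuts of width $\Oh(|T|)$, a parity-aware DP tree, Wolsey-style parity correction) matches the paper's, but there is a genuine gap exactly where you flag your ``main obstacle'', and your $\varepsilon$-narrow workaround does not close it. You ask the DP to return a spanning tree $F$ with $\ell(F)\leqslant\OPT$ that crosses \emph{every} cut of $\mathcal{L}$ an odd number of times; deciding whether such a tree exists is \NP-hard already when $\mathcal{L}$ is a chain (a reduction from Hamiltonian $s$-$t$ path: forcing odd crossings on the full chain $\{v_1\},\{v_1,v_2\},\ldots$ forces degrees $1,2,\ldots,2,1$). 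So no variant of the DP can guarantee oddness on all narrow cuts, and once even one cut $S\in\mathcal{L}$ has $|F\cap\delta(S)|$ even, your certificate $\tfrac{1}{2-\varepsilon}x^*$ for the $Q$-join dominant fails on that cut, since there $x^*(\delta(S))$ may be as small as $1$.

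The paper's resolution is that the DP outputs a \emph{pair}: a fractional $\tau$-odd point $y\in\mathbb{R}^E_{\geqslant0}$ (with $\tau\approx\sfrac1\varepsilon$) together with a tree $T$ that agrees with $y$ on the $y$-small cuts only. The cuts where oddness of $T$ is not guaranteed are precisely those with $y(\delta(C))\geqslant\tau+2\geqslant\sfrac1\varepsilon$, and these are handled not by a threshold on $x^*$ but by feeding $y$ into the parity-correction certificate: one shows $z=\tfrac12(x^*+\varepsilon y)\in\TJdom{Q_T}$, which costs only an extra $\tfrac{\varepsilon}{2}\ell^\top y\leqslant\tfrac{\varepsilon}{2}\OPT$. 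Your scheme discards $y$ at the correction stage, and the cuts on which the tree is even are characterized by $y(\delta(C))$ being large, not by $x^*(\delta(C))\geqslant2-\varepsilon$, so the two are not interchangeable. The remaining ingredient you would also need to make the cost bound $\ell^\top y\leqslant\OPT$ precise is that some optimal connected $T$-join is itself a spanning tree (Cheriyan--Friggstad--Gao), whose characteristic vector is $\tau$-odd by the handshake argument and hence feasible for the DP's relaxation.
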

We remark that, similar to the way we adapt our technique to \MSCJ, both the algorithms by \textcite{traub_2019_approaching,zenklusen2018tsp}, introduced in the context of \pathTSP (giving $(1.5+\varepsilon)$- and $1.5$-approximation algorithms, respectively), can be generalized to \MSCJ at no loss in the approximation guarantee.
Hence, for constant $|T|$, all three approaches imply an efficient method improving on a prior $1.6$-approximation by \textcite{sebo_2013_eight-fifth} and a more recent $\tfrac{11}{7}\approx 1.571$-approximation by \textcite{traub_2020_improving}´.
Nevertheless, we expand on our approach here in order to highlight another quite direct implication of our new techniques and showcase how different constraint types can be handled, in the hope that this may be of interest for possible future applications.

\paragraph{Organization of the paper}
We start by introducing our techniques in the context of \MCCST.
\cref{sec:approachMCCST} provides a clear outline of what we want to achieve with our dynamic program, and why this implies \cref{thm:MCCST} together with negatively correlated rounding procedures and the solution alteration technique mentioned ealier.
\cref{sec:approachDPMCCST} then provides a thorough discussion of the key aspects of our dynamic programming technique.
\cref{sec:localCorrections} contains additional details on the local alteration approach that we use to obtain a unicriteria approximation for \MCCST, and shows a further application of this technique to turn bicriteria approximations into unicriteria ones.
In \cref{sec:extensionMLCST}, we discuss in detail why the natural generalization of our techniques to laminar constraint families fails, and how these difficulties can be overcome to obtain results for \MLCST and a proof of \cref{thm:MLCST}.
\cref{sec:approachPathTSP} shows how our technique can be used in the context of \TSP, in particular for \MSCJ, leading to \cref{thm:TJoins}.
\cref{sec:relaxationWeak} discusses why, for \MCCST, the natural LP relaxation is not strong enough to obtain results with guarantees as in \cref{thm:MCCST}, thus further motivating the use of a dynamic programming approach to strengthen the relaxation.
Finally, \cref{sec:exampleBacktracingOPT} presents an example showing that a classical analysis of our DP, namely by backtracing an optimal solution, is impossible in the laminarly constrained setting.%

\section[Overview of our approach for MCCST]{Overview of our approach for \MCCST}\label{sec:approachMCCST}

The first step of our approach for \MCCST relies on finding a solution to a suitable polyhedral relaxation.
The canonical relaxation, which was also used in prior results on chain-constrained trees~\cite{linhares_2018_reduction,olver_2018_chain-constrained}, enhances the spanning tree polytope $\PST$ with cut constraints.
We recall that $\PST$ is the convex hull of all characteristic vectors of spanning trees in $G=(V,E)$, and, by a seminal result of \textcite{edmonds_1971_matroids}, can be described by
\begin{equation*}%
\PST \coloneqq \left\{
x\in \mathbb{R}^E_{\geqslant 0} \,\middle|\, \begin{aligned}
x(E)    &= |V|-1\\
x(E[S]) &  \leqslant |S|-1 & \forall S\subsetneq V,\  |S|\geqslant 2
\end{aligned}
\right\}\enspace,
\end{equation*}
where $E[S]\subseteq E$ are all edges with both endpoints in $S$.
The polytope $Q\subseteq \mathbb{R}^E$ below describes the natural relaxation of \MCCST:
\begin{equation*}
Q \coloneqq \left\{ x\in \PST \,\middle|\, a_i \leqslant x(\delta(S_i)) \leqslant b_i\ \forall i\in [k] \right\}\enspace.
\end{equation*}

Unfortunately, solutions of the linear programming relaxation $\min\{c^\top x \mid x\in Q\}$ are too weak for our purposes.
In particular, there are instances where there exists a solution $y\in Q$ fulfilling the chain-constraints, even though any spanning tree must violate at least one chain constraint by a factor of at least $2$.
(We provide such an example in \cref{sec:relaxationWeak}.)
Hence, when comparing any integral solution to $y$, it will be impossible to stay within a factor of $1+\varepsilon$ regarding the violation of constraints---but this is precisely what we want to achieve.
This also shows a hard limit for prior approaches, which are all based on $Q$.

We therefore aim for a stronger relaxation.
It turns out that the reason why $Q$ can be a bad relaxation is the potential existence of small bounds $a_i, b_i$.
Indeed, assume that all $a_i,b_i$ for $i\in [k]$ were at least $c\cdot \log k$ for a sufficiently large constant $c$ (depending on $\varepsilon$).
Then one could first find an optimal solution $x^*$ to $\min\{c^\top x \mid x\in Q\}$, and then round $x^*$ to a spanning tree by using one of several negatively correlated rounding procedure (see~\cite{asadpour_2017_atsp,chekuri2010dependent}), which lead to Chernoff-type concentration bounds.
The theorem below summarizes a simplified form of the properties obtained by those procedures.\footnote{More generally, randomized rounding procedures with these properties can be obtained for any matroid base polytope and Chernoff-type concentration holds for any linear function with small non-negative coefficients (see~\cite{chekuri2010dependent}).}
\begin{theorem}[see~\cite{asadpour_2017_atsp,chekuri2010dependent}]\label{thm:rounding}
Let $y\in \PST$.
There exists an efficient randomized rounding scheme for rounding $y$ to a random spanning tree $T$ in $G$ such that the following holds.
\begin{enumerate}
\item\label{item:marginals} $\Pr[e\in T] = y_e$ for all $e\in E$.

\item\label{item:chernoff} For any $\lambda >0$ and $U\subseteq E$, we have
\begin{equation*}\label{eq:cernoffUpper}
\Pr\big[(1-\lambda)y(U) \leqslant |T\cap U| \leqslant (1+\lambda) y(U) \big] \geqslant 1-2e^{-\sfrac{y(U)\lambda^2}{3}}\enspace.
\end{equation*}
\end{enumerate}
\end{theorem}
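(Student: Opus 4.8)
Since \cref{thm:rounding} is quoted from prior work, my plan is to reconstruct its proof; I would use the swap-rounding route of \textcite{chekuri2010dependent}, which keeps the marginals exact and makes the concentration estimate essentially self-contained, and sketch the maximum-entropy alternative of \textcite{asadpour_2017_atsp} at the end. \emph{Step 1 (convex decomposition).} First I would write $y$ as an explicit convex combination of polynomially many spanning trees, $y=\sum_{i=1}^{m}\beta_i\,\chi_{T_i}$ with $\beta_i>0$ and $\sum_i\beta_i=1$. This is possible because membership in, and linear optimization over, $\PST$ are efficient: one repeatedly extracts a spanning tree from the support of the current fractional point via an LP over $\PST$ and subtracts a maximal multiple of its characteristic vector, which makes at least one more facet tight, so the loop terminates in $\poly(|E|)$ iterations. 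This also disposes of the fact that $y$ may lie on the boundary of $\PST$: edges $e$ with $y_e=0$ occur in no $T_i$, and edges with $y_e=1$ occur in all of them.

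\emph{Step 2 (the rounding, and property~\ref{item:marginals}).} Given the list $(\beta_i,T_i)_{i\in[m]}$, I would repeatedly \emph{merge} the first two trees: while $T_1\neq T_2$, pick $f\in T_2\setminus T_1$ and, by the exchange property of the graphic matroid, an $e\in T_1\setminus T_2$ such that both $T_1-e+f$ and $T_2-f+e$ are spanning trees; then with probability $\beta_1/(\beta_1+\beta_2)$ replace $T_2$ by $T_2-f+e$, and otherwise replace $T_1$ by $T_1-e+f$; once $T_1=T_2$, combine the two entries into a single one of weight $\beta_1+\beta_2$. Each swap strictly decreases $|T_1\symdiff T_2|$, so after $\poly(|E|,m)$ steps the multiset collapses to a single random spanning tree $T$. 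Property~\ref{item:marginals} follows by a one-line induction: a single swap preserves, for every edge $g$, the expectation of $\sum_i\beta_i\,\mathbf{1}[g\in T_i]$, which equals $y_g$ initially and equals $\Pr[g\in T]$ at the end.

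\emph{Step 3 (concentration, property~\ref{item:chernoff}).} Here I would invoke the lemma underlying swap rounding: for every $w\in\mathbb{R}^E_{\geqslant 0}$ the variable $\sum_g w_g\,\mathbf{1}[g\in T]$ is dominated by the corresponding sum of \emph{independent} Bernoulli variables with the same means, in the strong sense that $\expval\big[e^{t\sum_g w_g\mathbf{1}[g\in T]}\big]\le\prod_{g\in E}\big(1-y_g+y_g e^{t w_g}\big)$ for all $t\in\mathbb{R}$, proved by induction over the merge steps (a single swap can only decrease this moment-generating function). Specializing to $w=\chi_U$ and applying the standard Chernoff--Hoeffding bound to each tail gives $\Pr\big[(1-\lambda)y(U)\le|T\cap U|\le(1+\lambda)y(U)\big]\ge 1-2e^{-y(U)\lambda^2/3}$. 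The maximum-entropy alternative instead solves $\max\{\sum_{T}\mu_T\ln(1/\mu_T)\}$ over distributions $\mu$ on spanning trees with $\sum_{T\ni e}\mu_T=y_e$; the optimizer has product form $\mu_T\propto\prod_{e\in T}\gamma_e$, such $\gamma$-weighted spanning-tree measures are determinantal and hence strongly Rayleigh and negatively associated, and negative association again yields the Chernoff bound for $|T\cap U|$.

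The main obstacle is \emph{Step 3}: showing the dependent rounding is never worse than independent rounding for linear functions. In the swap-rounding argument this is exactly the central lemma of \textcite{chekuri2010dependent} and needs the careful inductive monotonicity check on the moment-generating function; in the max-entropy argument it rests on the (nontrivial) negative-association property of weighted spanning-tree measures. Everything else---the convex decomposition, the exactness of the marginals, polynomial termination, and the final Markov step---is routine.
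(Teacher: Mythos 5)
The paper does not prove \cref{thm:rounding} at all—it is imported as a black box from \textcite{chekuri2010dependent} and \textcite{asadpour_2017_atsp}—and your reconstruction faithfully follows exactly those sources: the Carath\'eodory-type decomposition of $y$ into spanning trees, randomized swap rounding with its marginal-preservation and negative-correlation (MGF-domination) lemma, and the max-entropy/strongly-Rayleigh alternative. Your account is correct and identifies the right crux (the inductive moment-generating-function bound, resp.\ negative association), so there is nothing to add.
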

Consider applying \cref{thm:rounding} to $x^*$ to obtain a spanning tree $T$.
By choosing $U=\delta(S_i)$ in the above theorem for any $i\in [k]$, one obtains that $|T\cap \delta(S_i)|$ is within a $(1\pm \varepsilon)$-factor of $x^*(\delta(S_i))$ with probability $1-k^{\Omega(1)}$ if $x^*(\delta(S_i)) \geqslant c \cdot \log k$.
Moreover, $x^*\in Q$ implies $a_i \leqslant x^*(\delta(S_i)) \leqslant b_i$ for $i\in [k]$.
Hence, a union bound over all chain constraints shows that $T$ is unlikely to violate any chain constraint by a large factor.

Motivated by this observation, we design a dynamic programming approach to find points $y\in Q$ of small cost that, for each $i\in [k]$, are either integral on the edges $\delta(S_i)$, or have a large value $y(\delta(S_i))$.
To formalize this idea, we introduce the notion of $\tau$-integral solutions.

\begin{definition}[$\tau$-integral]\label{def:tauIntegral}
For $\tau \in \mathbb{Z}_{\geqslant 0}$, we say that a point $y\in \mathbb{R}^E$ is \emph{$\tau$-integral} (with respect to the cuts $S_1,\ldots, S_k$) if for each $i\in [k]$, either
\begin{enumerate}
\item\label{item:smallCuts} $y(\delta(S_i))\leqslant \tau$ and $y$ is integral on the edges in $\delta(S_i)$, or
\item\label{item:largeCuts} $y(\delta(S_i))\geqslant \tau+1$.
\end{enumerate}
We call the cuts $S_i$ satisfying \cref{item:smallCuts,item:largeCuts} the $y$-\emph{small} and $y$-\emph{large} cuts, respectively.
\end{definition}

Clearly, every integral point is $\tau$-integral for any $\tau\in \mathbb{Z}_{\geqslant 0}$.
The key implication of our dynamic programming approach in the context of \MCCST is the following.
\begin{theorem}\label{thm:MCCSTdPguarantee}
For any $\tau\in \mathbb{Z}_{\geqslant 0}$, there is an algorithm that returns in $|V|^{\Oh(\tau)}$ time a $\tau$-integral point $y\in Q$ with $c^\top y \leqslant c(\OPT)$, where $\OPT$ is an optimal solution to \MCCST.
\end{theorem}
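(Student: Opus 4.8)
The plan is to construct the point $y$ by a dynamic program that sweeps the chain $S_1\subsetneq\cdots\subsetneq S_k$ from the innermost cut outward, deciding at every cut $S_i$ whether it is to become $y$-small or $y$-large. On a cut declared $y$-small we additionally commit to the concrete set $F_i\subseteq\delta(S_i)$ of at most $\tau$ edges that will cross it (requiring $a_i\leqslant|F_i|\leqslant b_i$); between two consecutive $y$-small cuts each cut is only asked to stay $y$-large, i.e.\ to satisfy the ``soft'' inequality $y(\delta(S_i))\geqslant\tau+1$ on top of $a_i\leqslant y(\delta(S_i))\leqslant b_i$ and membership in $\PST$. Since an optimal \MCCST solution $\OPT$ is an integral spanning tree, it is in particular $\tau$-integral and lies in $Q$; in keeping with the remarks in the introduction, I would not try to pin down the exact set of points the DP optimizes over, but only establish the two facts that suffice: (a) every cell of the table corresponds to a genuine partial $\tau$-integral point compatible with $Q$, so that the final output lies in $Q$ and is $\tau$-integral; and (b) the DP's search space contains (a run corresponding to) $\OPT$, so that the minimum cost it reports is at most $c(\OPT)$.

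Concretely, a cell reached after processing $S_1,\dots,S_i$ would be indexed by: whether $S_i$ is $y$-small or $y$-large; if $y$-small, the edge set $F_i$ described above; the last $y$-small cut $S_j$ with $j\leqslant i$ together with its committed set $F_j$ (the inner boundary of the ``ring'' currently under construction); and a bounded amount of bookkeeping recording which edges of $F_j$ are still ``active'', i.e.\ leave $S_i$ and hence may cross later cuts, together with the connectivity pattern of their endpoints inside the partial solution. Each relevant object -- two edge sets of size at most $\tau$, a subset thereof, a pattern on at most $2\tau$ marked vertices -- has only $|V|^{\Oh(\tau)}$ possibilities, and since $k\leqslant|V|$ there are $|V|^{\Oh(\tau)}$ cells in total.

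The novel twist, as advertised, is that a cell stores not merely its optimal partial cost, but also a concrete \emph{representative} fractional partial solution on the edges inside $S_i$, of minimum cost among those consistent with the cell's index. A transition from $S_i$ to $S_{i+1}$ takes this representative and extends it over the newly included vertices $S_{i+1}\setminus S_i$ by solving a linear program over the appropriate face of $\PST$: if $S_{i+1}$ is to be $y$-large, minimize the additional edge cost subject to keeping all intermediate cuts $y$-large and respecting the bounds $a,b$; if $S_{i+1}$ is to be $y$-small with a guessed boundary $F_{i+1}$, close off the ring by the analogous LP that forces exactly the edges of $F_{i+1}$ (integrally) to cross $S_{i+1}$, check consistency with the active edges of $F_j$, and then open a fresh ring. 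Whether such an extension exists at all, and at what extra cost, depends on the actual fractional values and combinatorial structure of the stored partial solution, not merely on its cost -- which is precisely why a representative point, and not just a number, has to be carried along.

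The main obstacle is that this is not a textbook dynamic program: because only one representative is retained per cell, correctness cannot be argued by the usual ``backtrace the optimum''/exchange argument. The substitute I would aim for is a \emph{gluing lemma}: if the chain is cut into rings by a choice of $y$-small cuts $S_{i_1}\subsetneq\cdots\subsetneq S_{i_m}$ with integral boundaries $F_{i_1},\dots,F_{i_m}$, and each ring carries a fractional structure lying in the corresponding face of the spanning-tree polytope and matching these boundaries, then the union of the ring solutions lies in $\PST$, is $\tau$-integral, and respects the bounds of $Q$; and, conversely, every $\tau$-integral point of $Q$ arises in this way. Granting this, fact (a) is immediate, and fact (b) follows by letting the DP guess exactly the types and boundary edge sets of $\OPT$ and checking that $\OPT$'s own ring-restrictions are feasible for the LP subproblems. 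The genuinely delicate point is to prove that the information kept in a cell's index -- in essence the boundary of the last $y$-small cut together with the induced rank/connectivity data and the set of active edges -- is \emph{exactly} the invariant on which feasibility and cost of every future transition depend, so that the minimum-cost representative we keep is always at least as extendable as $\OPT$'s partial solution would have been. Handling edges that ``skip'' several rings (which are forced to lie among the at most $\tau$ boundary edges of every $y$-small cut they cross, hence are covered by the ``active'' bookkeeping) and settling the precise polyhedral description of ``a partial spanning tree compatible with a prescribed boundary'' are the steps that require real care; the LP subproblems at the transitions and the running-time accounting are then routine and yield the stated $|V|^{\Oh(\tau)}$ bound.
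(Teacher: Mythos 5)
Your architecture coincides with the paper's: sweep the chain, guess which cuts are $y$-small together with the exact edge sets crossing them, carry a \emph{representative} fractional partial solution per DP cell (not just a value), and extend it at each step by a linear program over a face of $\PST$; infeasibility of small bounds is what forces the representative to be carried along. The running-time accounting and the observation that $\OPT$ itself is $\tau$-integral are also fine. The gap is in the correctness argument. Your ``gluing lemma'' is false as stated: if a small cut $S_{i_j}$ has two or more crossing edges, then having each ring lie in the appropriate face of the spanning-tree polytope and agree on the integral boundaries $F_{i_1},\dots,F_{i_m}$ does \emph{not} imply that the union is in $\PST$ --- the pieces on the two sides of $S_{i_j}$ can close a cycle (or leave the graph disconnected), exactly the phenomenon of \cref{fig:depProblems}. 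So the cell index must carry connectivity information across each small cut, and the entire difficulty of the theorem is to say \emph{which} connectivity information and to prove it suffices; you explicitly flag this (``the genuinely delicate point\dots'') but do not resolve it, and your formulation of the target --- that the stored minimum-cost representative be ``at least as extendable as $\OPT$'s partial solution'' --- points toward an exchange argument that is not what makes the DP work.

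The paper's resolution is worth contrasting with your sketch. For each small cut one guesses, in addition to $F$, a partition $\mathcal{C}$ of the \emph{outer} endpoints of $F$ (how they will be connected in $G[V\setminus S_i]$), and defines a partial solution on $E[S_i]$ to be \emph{left-compatible} with $(S_i,F,\mathcal{C})$ if it becomes a point of $\PST$ when completed by an arbitrary forest on $V\setminus S_i$ (non-edges of $G$ allowed) realizing $\mathcal{C}$; crucially this property does not depend on which completion is chosen, so ``extendability'' is fully captured by the triple and no comparison with $\OPT$'s own left part is ever needed. Equally important, the extension LP is only required to be left-compatible with the \emph{new} triple; it is deliberately \emph{not} required to realize the old triple's pattern on its right-hand side --- demanding that would embed a vertex-disjoint-paths problem and cannot be done by an LP. \cref{lem:completionOk} then shows that any integral tree compatible with both triples, restricted to the ring, yields a feasible solution of the extension LP on top of the stored representative, which is what gives $c^\top y\leqslant c(\OPT)$. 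Without this completion-independence of left-compatibility (or an equivalent invariant), your plan does not yet constitute a proof.
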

Not surprisingly, to obtain \cref{thm:MCCSTdPguarantee}, we want our dynamic program to guess edges in the cuts that will later be $y$-small.
However, this simple high-level plan comes with some important technical hurdles.
In particular, even if we knew the edges used in some cut $\delta(S_i)$, completing the two parts of the spanning tree on the left-hand side of the cut (on the vertices $S_i$) and on its right-hand side (on $V\setminus S_i$), respectively, are two highly dependent subproblems.
Interestingly, it is not easy to separate them into independent ones by guessing further structure, like the connectedness on each side, without creating \NP-hard subproblems.
We expand on these, and further issues, in \cref{sec:approachDPMCCST}, and show how one can address them.
A key difference between classical dynamic programs and our approach is that our propagation step requires a fractional solution of a previous subproblem, and not just a small fingerprint of previously obtained solutions.

The issue of small cuts is now resolved through \cref{thm:MCCSTdPguarantee} by setting $\tau=\Theta(k)$ and rounding a $\tau$-integral point $y\in Q$ using a randomized rounding procedure with the guarantees stated in \cref{thm:rounding}: Because $y$ is integral on $y$-small cuts, the rounding procedure will return a tree $T$ such that $\chi^T$ coincides with $y$ on all $y$-small cuts, because it is marginal-preserving (\cref{item:marginals} in \cref{thm:rounding}).

One last technical hurdle to overcome to obtain a $(1,1+\varepsilon)$-approximation for \MCCST is that the properties of a negatively correlated rounding procedure, as stated in \cref{thm:rounding}, are not enough to get a spanning tree that both
\begin{enumerate*}
\item violates chain constraints at most slightly, and
\item has cost no more than $c^\top y$.
\end{enumerate*}
Indeed, typical applications of such rounding procedures only lead to $(1+\varepsilon)$-approximations in terms of the objective (see~\cite{chekuri2010dependent,chekuri2009dependent_arxiv} for examples).
We show that this loss in the objective is avoidable in \MCCST, and other settings, by using a simple alteration step that modifies the obtained spanning tree by swapping one edge.
\begin{theorem}\label{thm:neighborhood}
Let $y\in\PST$ and $c\in\mathbb{R}^E$.
Let $T$ be a random spanning tree in $G=(V,E)$ drawn from a distribution satisfying $\Pr[e\in T]=y(e)$ for all $e\in E$.
Let $\overline{T}$ be a spanning tree minimizing $c(U)$ among all spanning trees $U$ whose symmetric difference $U\symdiff T \coloneqq (U\setminus T) \cup (T \setminus U)$ with $T$ satisfies $|U\symdiff T| \leqslant 2$ and such that $y(e)\in (0,1)$ for $e\in U\symdiff T$.
Then
\vspace{-0.3em}
\begin{equation*}
\Pr\big[c(\overline{T}) \leqslant c^\top y \big] \geqslant (|V|-1)^{-1}\enspace.
\vspace{-0.2em}
\end{equation*}
\end{theorem}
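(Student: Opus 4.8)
The plan is to reduce the probabilistic statement to a deterministic matroid‑exchange argument together with a short two‑case split on $\Pr[c(T)>c^\top y]$. Write $\supp(y)=\{e:y(e)>0\}$ and $E_1=\{e:y(e)=1\}$; since $\Pr[e\in T]=y(e)$, every spanning tree $T'$ in the support of the distribution of $T$ satisfies $E_1\subseteq T'\subseteq\supp(y)$. Let $T^{\mathrm{ref}}$ be a minimum‑cost spanning tree among those in the support of the distribution. Then $c(T^{\mathrm{ref}})\leqslant\expval[c(T)]=c^\top y$, we have $c(T^{\mathrm{ref}})\leqslant c(T)$ almost surely, and $E_1\subseteq T^{\mathrm{ref}}\subseteq\supp(y)$. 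Consequently, for every tree $T'$ in the support and every $e\in T'\symdiff T^{\mathrm{ref}}$ we have $y(e)\in(0,1)$, so any single‑edge exchange between $T'$ and $T^{\mathrm{ref}}$ (and $\overline T=T$ itself) counts as an admissible modification in the sense of the theorem.

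Next I would apply the symmetric exchange property of matroid bases (Brualdi's theorem, cf.\ \cite{schrijver2003combinatorial,korte_2018_combinatorial}) to the graphic matroid: for any realization $T$ with $c(T)>c(T^{\mathrm{ref}})$ there is a bijection $\phi\colon T\setminus T^{\mathrm{ref}}\to T^{\mathrm{ref}}\setminus T$ with $T-e+\phi(e)$ a spanning tree for every $e$. Since $\sum_{e\in T\setminus T^{\mathrm{ref}}}\bigl(c(e)-c(\phi(e))\bigr)=c(T)-c(T^{\mathrm{ref}})$ and $|T\setminus T^{\mathrm{ref}}|\leqslant|V|-1$, some edge $e^\ast$ attains $c(e^\ast)-c(\phi(e^\ast))\geqslant\frac{c(T)-c(T^{\mathrm{ref}})}{|T\setminus T^{\mathrm{ref}}|}$, so $T-e^\ast+\phi(e^\ast)$ is an admissible modification of cost at most $c(T)-\frac{c(T)-c(T^{\mathrm{ref}})}{|T\setminus T^{\mathrm{ref}}|}$. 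Combining with the modification $\overline T=T^{\mathrm{ref}}$ in the case $|T\setminus T^{\mathrm{ref}}|=1$ (whose cost is $c(T^{\mathrm{ref}})\leqslant c^\top y$) and unwinding the inequality, one obtains: on the event $\{c(\overline T)>c^\top y\}$ it must be that $d_T\coloneqq|T\setminus T^{\mathrm{ref}}|\geqslant 2$ and $c(T)>c^\top y+\frac{c^\top y-c(T^{\mathrm{ref}})}{d_T-1}\geqslant c^\top y+\frac{c^\top y-c(T^{\mathrm{ref}})}{|V|-2}$ (the cases $|V|\leqslant 2$ being trivial).

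To finish, set $m'=c(T^{\mathrm{ref}})$. Since $c(T)\geqslant m'$ a.s.\ and $\expval[c(T)]=c^\top y$, writing $x=\max\{0,x\}-\max\{0,-x\}$ gives $\expval[\max\{0,c(T)-c^\top y\}]=\expval[\max\{0,c^\top y-c(T)\}]\leqslant(c^\top y-m')\cdot\Pr[c(T)<c^\top y]$, because $\max\{0,c^\top y-c(T)\}\leqslant(c^\top y-m')\cdot\mathbb{1}[c(T)<c^\top y]$ pointwise. If $c^\top y=m'$ then $c(T)=c^\top y$ a.s.\ and $c(\overline T)\leqslant c^\top y$ a.s.; otherwise, applying Markov to $\max\{0,c(T)-c^\top y\}$ at level $\frac{c^\top y-m'}{|V|-2}$ together with the containment from the previous paragraph yields
\begin{equation*}
\Pr[c(\overline T)>c^\top y]\;\leqslant\;(|V|-2)\cdot\Pr[c(T)<c^\top y].
\end{equation*}
Since also $\Pr[c(\overline T)>c^\top y]\leqslant\Pr[c(T)>c^\top y]$ (because $c(\overline T)\leqslant c(T)$), a case split finishes: if $\Pr[c(T)>c^\top y]\leqslant\frac{|V|-2}{|V|-1}$ the second bound gives $\Pr[c(\overline T)\leqslant c^\top y]\geqslant\frac1{|V|-1}$; otherwise $\Pr[c(T)<c^\top y]<\frac1{|V|-1}$ and the displayed bound gives the same conclusion.

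The main obstacle, which this choice of $T^{\mathrm{ref}}$ is designed to overcome, is twofold. First, a single edge swap cannot help a tree whose cost is far above $c^\top y$, so neither an expectation bound $\expval[c(\overline T)]\leqslant c^\top y$ (too weak for a probability statement) nor $\Pr[c(T)\leqslant c^\top y]$ alone (which can be arbitrarily small) suffices; one genuinely needs to argue that the bad event forces $c(T)$ only mildly above $c^\top y$. Second, one cannot take $T^{\mathrm{ref}}$ to be a global minimum‑cost tree, since its exchange edges with $T$ could have $y$‑value $1$ and thus be inadmissible. Choosing $T^{\mathrm{ref}}$ to be a minimum‑cost tree in the distribution's support simultaneously makes all exchanges admissible, gives the mean inequality $c(T^{\mathrm{ref}})\leqslant c^\top y$, and supplies the almost‑sure lower bound $c(T)\geqslant c(T^{\mathrm{ref}})$ that tightens the Markov step.
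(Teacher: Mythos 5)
Your proof is correct, and it follows the same overall architecture as the paper's (a reference solution all of whose exchanges with $T$ are admissible, a single-swap averaging argument, and Markov's inequality), but both key steps are implemented differently. For the reference, you take a minimum-cost tree in the \emph{support of the distribution} of $T$, whereas the paper takes a minimum-cost set on the minimal face of the polytope containing $y$; both choices serve the identical purpose you articulate (forcing all exchanged edges to have $y(e)\in(0,1)$ while keeping the reference cost at most $c^\top y$), and your observation that the support is contained in $\{T' : E_1\subseteq T'\subseteq\supp(y)\}$ is the right justification. For the exchange step, you invoke Brualdi's symmetric exchange bijection for matroid bases; the paper instead writes $\chi^Q-\chi^T$ as a conic combination of polytope edge directions at the vertex $\chi^T$ and averages over that. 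For the graphic matroid these are essentially the same decomposition (polytope edges of a base polytope are exactly single exchanges), but the paper's polyhedral version is deliberately stated for arbitrary $\{0,1\}$-polytopes and $q$-neighborhoods (Theorem~\ref{thm:neighborhoodGen}), which it reuses for the packing application in Section~\ref{sec:localCorrections}; your argument is more elementary but tied to matroids. Finally, your probabilistic step (Markov applied to $(c(T)-c^\top y)^+$ combined with $\expval[Z^+]=\expval[Z^-]\leqslant(c^\top y-m')\Pr[Z<0]$ and a two-case split on $\Pr[c(T)>c^\top y]$) is more roundabout than the paper's single application of Markov to the nonnegative variable $c(T)-\eta$ at level $\tfrac{\mu}{\mu-1}(c^\top y-\eta)$ with $\mu=|V|-1$, but it is valid and yields the same bound $(|V|-1)^{-1}$.
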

In \cref{sec:localCorrections}, we show that \cref{thm:neighborhood} holds even in a much more general context and has implications outside \MCCST.
For the specific setting of \MCCST, we observe in \cref{sec:localCorrections} that also a method introduced by \textcite{linhares_2018_reduction} can be adapted to avoid the $(1+\varepsilon)$-factor loss in the objective.

We can now put together the above ingredients to obtain our quasi-polynomial $(1,1+\varepsilon)$-approximation for \MCCST, stated as \cref{alg:MCCST} below.

\begin{algorithm2e}[H]
\caption{Quasi-polynomial $(1,1+\varepsilon)$-approximation for MCCST\strut}\label{alg:MCCST}

\begin{algostepsarabic}
 \item\label{algostep:gety} Let $\tau\coloneqq \lfloor\sfrac{96\ln(2|V|)}{\varepsilon^2}\rfloor$, and use \cref{thm:MCCSTdPguarantee} to find a $\tau$-integral point $y\in Q$ with $c^\top y \leqslant c(\OPT)$.\strut

 \item\label{algostep:round} Let $\ell\coloneqq\lceil 2|V|\ln|V|\rceil$, and randomly round $y$ with a rounding procedure as guaranteed by \cref{thm:rounding}, $\ell$ times independently, to obtain spanning trees $T_1,\ldots, T_{\ell}$.

\item\label{algostep:alterate} For each $j\in [\ell]$, find a minimum cost spanning tree $\overline{T}_j$ among all spanning trees $T$ with $|T\symdiff T_j| \leqslant 2$ and such that $y(e)\in (0,1)$ for all $e\in T \symdiff T_j$.

\item\label{algostep:return} Among all $\overline{T}_j$ for $j\in [\ell]$ with $\frac{a_i}{1+\varepsilon} \leqslant |\overline{T}_j \cap \delta(S_i)| \leqslant (1+\varepsilon) b_i$ for all $i\in [k]$, return one of smallest cost.

\end{algostepsarabic}
\end{algorithm2e}

We now show that the above results---in particular \cref{thm:MCCSTdPguarantee}, which follows from our dynamic program, and \cref{thm:neighborhood}---imply that \cref{alg:MCCST} is a quasi-polynomial $(1,1+\varepsilon)$-approximation for \MCCST.

\begin{proof}[Proof of \cref{thm:MCCST}]
We will show that with probability at least $1-\sfrac{1}{|V|}$, there is one spanning tree $\overline{T}_j$ among the trees $\overline{T}_1,\ldots, \overline{T}_\ell$ computed by \cref{alg:MCCST} that satisfies both
\begin{enumerate}
\item\label{item:goalPropTbj1} $\frac{1}{1+\varepsilon}\cdot y(\delta(S_i)) \leqslant |\overline{T}_j \cap \delta(S_i)| \leqslant (1+\varepsilon) \cdot y(\delta(S_i))$ for all $i\in [k]$, and
\item\label{item:goalPropTbj2}
$c(\overline{T}_j)\leqslant c^\top y$,
\end{enumerate}
which indeed implies that the returned solution is $(1,1+\varepsilon)$-approximate because $a_i\leqslant y(\delta(S_i)) \leqslant b_i$ for all $i\in [k]$ due to $y\in Q$, and $y$ satisfies $c^\top y \leqslant c(\OPT)$ as guaranteed by \cref{algostep:gety} of the algorithm.

We first analyze a single random spanning tree among the spanning trees $T_1,\ldots, T_\ell$ determined in \cref{algostep:round} of the algorithm.
We denote by $T$ such a spanning tree that was obtained by randomly rounding $y$ with a randomized rounding procedure as guaranteed by \cref{thm:rounding}.
Observe that because the rounding is marginal-preserving, $\chi^T$ coincides with $y$ on any edge $e\in E$ with $y(e)\in \{0,1\}$.
As $y$ is $\tau$-integral, all edges within $y$-small cuts are of this type and $T$ thus fulfills all chain constraints corresponding to $y$-small cuts.

Together with Chernoff-type concentration bounds guaranteed by \cref{thm:rounding}, applied with $\lambda = \sfrac{\varepsilon}{4}$ and using $y(\delta(S_i))\geqslant \tau + 1$ for $y$-large cuts, we have%
\begin{equation}\label{eq:oneChainTGood}
\Pr\left[
\left(1-\frac{\varepsilon}{4}\right)y(\delta(S_i)) \leqslant |T\cap\delta(S_i)| \leqslant \left(1+\frac{\varepsilon}{4}\right)y(\delta(S_i))
\right] \geqslant 1 - \frac{1}{2|V|^2} \quad \forall i\in [k]\enspace.
\end{equation}%
Now let $\overline{T}$ be a spanning tree of minimum cost among all spanning trees $U\subseteq E$ with $|U\symdiff T| \leqslant 2$ and $y(e)\in (0,1)$ for $e\in U\symdiff T$.
The cost of this spanning tree has the same distribution as the cost of the spanning trees $\overline{T}_1,\ldots, \overline{T}_\ell$ computed in \cref{algostep:alterate} of \cref{alg:MCCST}.
By \cref{thm:neighborhood}, we have
\begin{equation}\label{eq:costTbGood}
\Pr\left[
c(\overline{T}) \leqslant c^\top y
\right] \geqslant \frac{1}{|V|-1}\enspace.
\end{equation}
Using a union bound over the $k$ events described in~\eqref{eq:oneChainTGood} and the one described in~\eqref{eq:costTbGood}, we obtain that $T$ and $\overline{T}$ simultaneously fulfill
\begin{enumerate}[label=(\alph*)]
\item\label{item:tChainGood}
$\left(1-\frac{\varepsilon}{4}\right)y(\delta(S_i)) \leqslant |T\cap\delta(S_i)| \leqslant \left(1+\frac{\varepsilon}{4}\right)y(\delta(S_i))$ for all $i\in [k]$, and

\item\label{item:costTbGood}
$c(\overline{T}) \leqslant c^\top y$,
\end{enumerate}
with probability at least
\begin{equation*}
1 - \left( k \cdot \frac{1}{2|V|^2} + \left(1-\frac{1}{|V|-1}\right) \right) \geqslant \frac{1}{2|V|}\enspace,
\end{equation*}
where we used $k\leqslant |V|$ in the above inequality.
Next, we show that \cref{item:tChainGood} above implies
\begin{equation}\label{eq:tbChainGood}
\left(1-\frac{\varepsilon}{2}\right)y(\delta(S_i))
\leqslant |\overline{T}\cap\delta(S_i)|
\leqslant \left(1+\frac{\varepsilon}{2}\right)y(\delta(S_i))
\quad\forall i\in [k]\enspace,
\end{equation}%
which in turn implies
$\frac{1}{1+\varepsilon}\cdot y(\delta(S_i)) \leqslant |\overline{T}\cap\delta(S_i)| \leqslant \left(1+\varepsilon\right)\cdot y(\delta(S_i))$ for all $i\in [k],$
providing the property that we seek as highlighted in \cref{item:goalPropTbj1,item:goalPropTbj2} above.
To see that~\eqref{eq:tbChainGood} holds for any $i\in [k]$ that corresponds to a $y$-small cut, notice that for such $i$ we have $\overline{T}\cap \delta(S_i) = T\cap \delta(S_i)$, as $\overline{T}$ and $T$ only differ on edges on which $y$ has a fractional value, and, due to $\tau$-integrality of $y$, small cuts do not contain such edges.
Hence, consider $i\in [k]$ with $y(\delta(S_i))\geqslant \tau + 1$.
Because $|T\symdiff \overline{T}|\leqslant 2$, $\overline{T}$ is either the same as $T$ or obtained from $T$ by replacing one edge by a different one, so
\begin{equation*}
|T\cap \delta(S_i)| - 1  \leqslant
|\overline{T} \cap \delta(S_i)| \leqslant
|T\cap \delta(S_i)| + 1\enspace.
\end{equation*}
The relation~\eqref{eq:tbChainGood} for $y$-large cuts now follows from the inequality in \cref{item:tChainGood}:
\begin{align*}
|\overline{T}\cap \delta(S_i)| \geqslant \left(1-\frac{\varepsilon}{4}\right)\cdot y(\delta(S_i)) -1 &\geqslant \left(1-\frac{\varepsilon}{2}\right) \cdot y(\delta(S_i))\enspace,\quad\text{and}\\
|\overline{T}\cap \delta(S_i)| \leqslant \left(1+\frac{\varepsilon}{4}\right)\cdot y(\delta(S_i)) +1 &\leqslant \left(1+\frac{\varepsilon}{2}\right)\cdot y(\delta(S_i))\enspace,
\end{align*}%
where the second inequality in each of the two above lines follows from $y(\delta(S_i))\geqslant \tau+1 \geqslant \sfrac{96\ln(2|V|)}{\varepsilon^2}$, because $S_i$ is $y$-large.

In summary, the tree $\overline{T}$ satisfies the two desired properties highlighted in \cref{item:goalPropTbj1,item:goalPropTbj2} with probability at least $(2|V|)^{-1}$.
Because the algorithm computes $\ell=\lceil 2|V| \ln |V|\rceil$ independent random trees $\overline{T}_1,\ldots, \overline{T}_j$ with the same distribution as $\overline{T}$, the probability that at least one of them fulfills the properties in~\cref{item:goalPropTbj1,item:goalPropTbj2} is at least
\begin{equation*}
1 - \left(1-\frac{1}{2|V|} \right)^{\ell} \geqslant 1 - e^{-\frac{\ell}{2|V|}} \geqslant 1 - \frac{1}{|V|}\enspace,
\end{equation*}
as desired.
Finally, the running time is dominated by the quasi-polynomial time dynamic programming approach used to find a cheap $\tau$-integral point $y\in Q$ in \cref{algostep:gety} of \cref{alg:MCCST}.
(All other steps of the algorithm can be performed efficiently.)
By \cref{thm:MCCSTdPguarantee}, we thus get a running time bound $|V|^{\Oh(\tau)} = |V|^{\Oh(\sfrac{\log |V|}{\varepsilon^2})}$.
\end{proof}

\section[The dynamic programming approach for MCCST]{The dynamic programming approach for \MCCST}\label{sec:approachDPMCCST}

First observe that to prove \cref{thm:MCCSTdPguarantee}, it suffices to consider $\tau\leqslant |V|-1$, because any $\tau$-integral point in $y\in Q$ for $\tau \geqslant |V|-1$ is integral as $Q\subseteq \PST$, and the $y$-value on any cut is at most $|V|-1$ for any point in $\PST$.
Thus, any $\tau \geqslant |V|$ can be replaced by $\tau = |V|-1$, so we assume $\tau\leqslant |V|-1$ in what follows.

Our dynamic program to find a cheap $\tau$-integral point in $Q$ is inspired by recent dynamic programming approaches in the context of \pathTSP~\cite{traub_2019_approaching,zenklusen2018tsp}, but faces important new technical challenges that require novel conceptual insights.
To highlight this point, let us first consider the significantly simpler special case of $\tau=1$.
The dynamic programming approaches for \pathTSP are essentially algorithms for this case.\footnote{More precisely, dynamic programs for \pathTSP are looking for points in the Held-Karp relaxation of \pathTSP instead of the spanning tree polytope, but this is only a minor technical difference without significant impact on the dynamic program.}

\subsection[Brief overview to find cheap one-integral solution following prior techniques]{Brief overview to find cheap \boldmath$1$\unboldmath-integral solution following prior techniques}

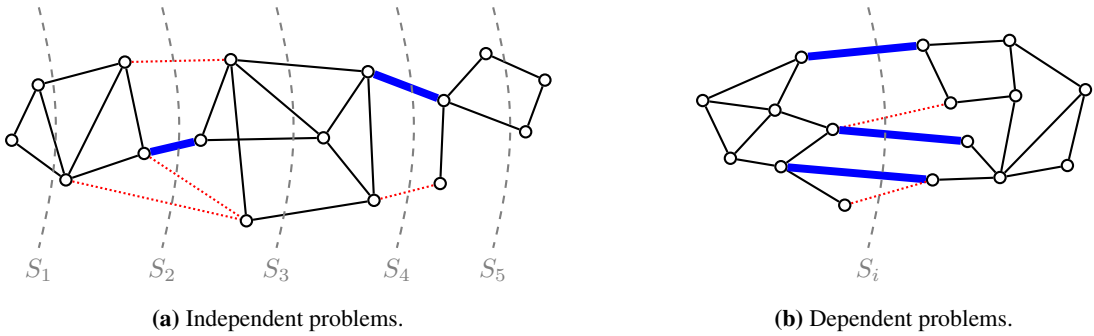
\begin{figure}[b] %
\begin{subfigure}[b]{.49\textwidth}
\centering
\begin{tikzpicture}[scale=0.8, font=\small]

\begin{scope}[every node/.style={ns}]
\node  (1) at (2.10,-4.71) {};
\node  (2) at (2.54,-3.79) {};
\node  (3) at (3.00,-5.37) {};
\node  (4) at (3.98,-3.41) {};
\node  (5) at (4.30,-4.93) {};
\node  (6) at (5.74,-3.37) {};
\node  (7) at (5.24,-4.71) {};
\node  (8) at (6.00,-6.05) {};
\node  (9) at (8.02,-3.57) {};
\node (10) at (7.28,-4.67) {};
\node (11) at (8.12,-5.71) {};
\node (12) at (9.28,-4.05) {};
\node (13) at (9.22,-5.43) {};
\node (14) at (9.98,-3.27) {};
\node (15) at (10.96,-3.71) {};
\node (16) at (10.64,-4.57) {};
\end{scope}

\begin{scope}[thick]
\draw  (1) --  (2);
\draw  (1) --  (3);
\draw  (2) --  (3);
\draw  (2) --  (4);
\draw  (3) --  (4);
\draw  (3) --  (5);
\draw[guessedOut]  (3) --  (8);
\draw  (4) --  (5);
\draw[guessedOut]  (4) --  (6);
\draw[guessedIn]  (5) --  (7);
\draw[guessedOut]  (5) --  (8);
\draw  (6) --  (8);
\draw  (6) --  (9);
\draw  (6) -- (10);
\draw  (7) --  (6);
\draw  (7) -- (10);
\draw  (8) -- (11);
\draw  (9) -- (10);
\draw  (9) -- (11);
\draw[guessedIn]  (9) -- (12);
\draw (10) -- (11);
\draw[guessedOut] (11) -- (13);
\draw (12) -- (13);
\draw (12) -- (14);
\draw (12) -- (16);
\draw (14) -- (15);
\draw (15) -- (16);
\end{scope}

\begin{scope}[shift={(2,-2.5)},thick,cutcol,dashed,bend left=14pt, every node/.style={below}]
\draw (0.55,0) to ++(0,-4) node {$S_1$};
\draw (2.6,0) to ++(0,-4) node {$S_2$};
\draw (4.5,0) to ++(0,-4) node {$S_3$};
\draw (6.5,0) to ++(0,-4) node {$S_4$};
\draw (8.1,0) to ++(0,-4) node {$S_5$};
\end{scope}

\end{tikzpicture}
 \caption{Independent problems.}\label{fig:oneEdgeIndepDecomp}
\label{fig:indepProblems}
\end{subfigure}
\begin{subfigure}[b]{.49\textwidth}
\centering
\begin{tikzpicture}[scale=0.8,font=\small]

\begin{scope}[every node/.style={ns}]
\node  (1) at (2.55,-1.85) {};
\node  (2) at (4.19,-1.13) {};
\node  (3) at (3.75,-2.01) {};
\node  (4) at (3.01,-2.81) {};
\node  (5) at (4.71,-2.33) {};
\node  (6) at (3.85,-2.95) {};
\node  (7) at (4.91,-3.59) {};
\node  (8) at (6.21,-0.93) {};
\node  (9) at (6.67,-1.89) {};
\node (10) at (6.95,-2.53) {};
\node (11) at (6.37,-3.17) {};
\node (12) at (7.65,-0.85) {};
\node (13) at (7.75,-1.77) {};
\node (14) at (7.49,-3.13) {};
\node (15) at (8.61,-2.93) {};
\node (16) at (8.91,-1.67) {};
\end{scope}

\begin{scope}[thick]
\draw  (1) --  (2);
\draw  (1) --  (3);
\draw  (1) --  (4);
\draw  (2) --  (3);
\draw[guessedIn]  (2) --  (8);
\draw  (3) --  (4);
\draw  (3) --  (5);
\draw  (4) --  (6);
\draw  (5) --  (6);
\draw[guessedOut]  (5) --  (9);
\draw[guessedIn]  (5) -- (10);
\draw  (6) --  (7);
\draw[guessedIn]  (6) -- (11);
\draw[guessedOut]  (7) -- (11);
\draw  (8) --  (9);
\draw  (8) -- (12);
\draw  (9) -- (13);
\draw (10) -- (14);
\draw (11) -- (14);
\draw (12) -- (13);
\draw (12) -- (16);
\draw (13) -- (14);
\draw (14) -- (15);
\draw (14) -- (16);
\draw (15) -- (16);
\end{scope}

\begin{scope}[shift={(5.3,-0.3)},thick,cutcol,dashed,bend left=14pt, every node/.style={below}]
\draw (0,0) to ++(0,-4) node {$S_i$};
\end{scope}

\end{tikzpicture}
 \caption{Dependent problems.}\label{dependendDecomp}
\label{fig:depProblems}
\end{subfigure}
\caption{\textbf{(a)} Here, the small cuts are $S_{i_1}=S_2$ and $S_{i_2}=S_4$, and we assume that only a single edge, drawn in thick and blue, crosses each small cut.
Finding a $\tau$-integral point $y\in \PST$ boils down to solving independent subproblems in $G[S_{2}]$, $G[S_{4}\setminus S_{2}]$, and $G[V\setminus S_{4}]$.
\textbf{(b)} If more than one edge is in the small cut $S_i$, the problem does not decompose into independent subproblems on the left-hand side and right-hand side of $S_i$.}
\end{figure}

To gain intuition for this special case, which nicely allows for showcasing later on the added difficulty faced for general $\tau$, assume that we knew upfront the small cuts with respect to an optimal solution $\OPT\subseteq E$, i.e., the cuts among $S_1,\ldots, S_k$ in which $\OPT$ contains a single edge.
Let $S_{i_1}, \ldots, S_{i_\ell}$ for $1\leqslant i_1 < \ldots < i_\ell \leqslant k$ be these small cuts.
For notational convenience, we set $S_{i_0}\coloneqq \emptyset$ and $S_{i_{\ell+1}}\coloneqq V$.
Now consider the $\ell+1$ induced subgraphs $G[S_{i_1}]$, $G[S_{i_2}\setminus S_{i_1}]$,\ldots, $G[S_{i_\ell}\setminus S_{i_{\ell-1}}]$, $G[V\setminus S_{i_\ell}]$.\footnote{For $W\subseteq V$, $G[W]$ is the subgraph of $G$ induced by $W$.}
It is not hard to observe that the edges of $\OPT$ within each of these subgraphs must form a spanning tree in that subgraph.
Moreover, for $j\in [\ell]$, the single edge $e_j\in \OPT \cap \delta(S_{i_{j}})$ must go from $S_{i_j}\setminus S_{i_{j-1}}$ to $S_{i_{j+1}}\setminus S_{i_j}$ for $\OPT$ to be a spanning tree (see \cref{fig:indepProblems}).
If, moreover, we even knew the single edge $e_j\in \delta(S_{i_j})\cap \OPT$ for each $j\in [\ell]$, then the problem of finding a \emph{corresponding} cheapest $1$-integral point $y\in \PST$---i.e., with $y$-small cuts $S_{i_1}, \ldots, S_{i_\ell}$ and edges $e_1,\ldots, e_\ell$ contained in them---decomposes into $\ell+1$ independent linear programs, one within each of the above-mentioned induced subgraphs.
More precisely, one has to find, for $j\in [\ell+1]$, a cheapest point $y^j$ in the spanning tree polytope of $G[S_{i_{j}}\setminus S_{i_{j-1}}]$ with lower bounds on each cut $S_i$ with $S_{i_j}\subsetneq S_i \subsetneq S_{i_{j-1}}$ to make sure that $y^j$, together with the guessed edges in small cuts, has a load $y^j(\delta(S_i))$ of at least $2$ on these cuts.

The above observations now naturally lead to a dynamic programming approach that extends solutions from left to right, i.e., a $1$-integral solution in some subgraph $G[S_i]$ for some $i$ is extended to one on $G[S_j]$ for $j>i$.
This way, one can use a dynamic program to optimize over all possibilities of small cuts and edges contained in them (see~\cite{traub_2019_approaching,zenklusen2018tsp} for more details of this approach in the context of \pathTSP).

\subsection[Toward general tau with connectivity patterns and resulting challenges]{Toward general \boldmath$\tau$\unboldmath\ with connectivity patterns and resulting challenges}

However, if $\tau\geqslant 2$, i.e., if there are two or more edges in small cuts, splitting the problem into independent ones along small cuts comes with significant additional challenges linked to obtaining connectivity and acyclicity globally from independent solutions of the subproblems (see \cref{fig:depProblems}).%

\begin{figure}[ht]
\begin{center}
\begin{tikzpicture}[scale=0.8, font=\small]

\begin{scope}[every node/.style={ns}]
\node  (1) at (1.37,-1.61) {};
\node  (2) at (1.41,-2.95) {};
\node  (3) at (2.23,-2.27) {};
\node  (4) at (3.11,-0.77) {};
\node  (5) at (3.23,-2.05) {};
\node  (6) at (3.77,-3.23) {};
\node  (7) at (3.47,-4.13) {};
\node  (8) at (4.65,-0.63) {};
\node  (9) at (4.81,-1.67) {};
\node (10) at (5.05,-2.55) {};
\node (11) at (5.01,-3.33) {};
\node (12) at (5.05,-4.23) {};
\node (13) at (6.65,-0.75) {};
\node (14) at (6.75,-1.53) {};
\node (15) at (7.37,-2.27) {};
\node (16) at (6.85,-3.19) {};
\node (17) at (6.89,-4.05) {};
\node (18) at (6.91,-4.57) {};
\node (19) at (8.31,-0.85) {};
\node (20) at (8.57,-1.77) {};
\node (21) at (8.27,-2.59) {};
\node (22) at (8.85,-3.47) {};
\node (23) at (9.01,-4.69) {};
\node (24) at (9.69,-1.43) {};
\node (25) at (9.89,-2.57) {};
\node (26) at (9.65,-3.61) {};
\end{scope}

\begin{scope}
\node at (8)[above] {$u_1$};
\node at (13)[above] {$v_1$};
\node at (9)[above left=-2pt] {$u_2$};
\node at (14)[above] {$v_2$};
\node at (10)[above] {$u_3\!=\!u_4$};
\node at (15)[above] {$v_3$};
\node at (16)[above]{$v_4$};
\node at (11)[above]{$u_5$};
\node at (12)[below]{$u_6$};
\node at (17)[above,xshift=2mm]{$v_5\!=\!v_6$};
\end{scope}

\begin{scope}[thick]
\draw  (1) --  (2);
\draw  (1) --  (3);
\draw  (1) --  (4);
\draw  (2) --  (3);
\draw  (2) --  (7);
\draw  (3) --  (6);
\draw  (4) --  (5);
\draw  (4) --  (8);
\draw  (5) --  (9);
\draw  (5) -- (10);
\draw  (6) -- (10);
\draw  (7) -- (11);
\draw  (7) -- (12);
\draw  (8) --  (9);
\draw[guessedIn]  (8) -- (13);
\draw[guessedIn]  (9) -- (14);
\draw[guessedOut]  (9) -- (15);
\draw[guessedIn] (10) -- (15);
\draw[guessedIn] (10) -- (16);
\draw[guessedOut] (10) -- (21);
\draw[guessedIn] (11) -- (17);
\draw[guessedIn] (12) -- (17);
\draw[guessedOut] (12) -- (18);
\draw (13) -- (19);
\draw (13) -- (20);
\draw (14) -- (20);
\draw (15) -- (21);
\draw (16) -- (21);
\draw (16) -- (22);
\draw (17) -- (23);
\draw (18) -- (23);
\draw (19) -- (20);
\draw (19) -- (24);
\draw (20) -- (21);
\draw (20) -- (25);
\draw (22) -- (23);
\draw (22) -- (25);
\draw (23) -- (26);
\draw (24) -- (25);
\draw (25) -- (26);
\end{scope}

\begin{pgfonlayer}{bg}
\def\d{0.2}
\begin{scope}[densely dashed,draw=darkgreen, rounded corners, thick,fill=darkgreen!20]
\filldraw ($(13)+(-1.5*\d,3*\d)$) coordinate (tl1) rectangle  ($(14)+1.3*(\d,-0.8*\d)$) coordinate (br1);

\filldraw ($(15)+(-1.5*\d,2.8*\d)$) coordinate (tl2) rectangle ($(15)+1.3*(\d,-0.8*\d)$) coordinate (br2);

\filldraw ($(16)+(-1.9*\d,2.8*\d)$) coordinate (tl3) rectangle ($(17)+1.3*(3.7*\d,-0.8*\d)$) coordinate (br3);
\end{scope}
\end{pgfonlayer}

\begin{scope}[darkgreen]
\node at ($(tl1 -| br1)+(0.35,-0.3)$) {$C_1$};
\node at ($(tl2 -| br2)+(0.35,-0.3)$) {$C_2$};
\node at ($(tl3 -| br3)+(0.35,-0.4)$) {$C_3$};
\end{scope}

\begin{scope}[shift={(5.6,-0.25)},thick,cutcol,dashed,bend left=14pt, every node/.style={below}]
\draw (0,0) to ++(0,-4.5) node {$S_i$};
\end{scope}

\begin{scope}
\node[blue] at ($(13)+(-1.3,-0.4)$) {$F$};
\end{scope}

\end{tikzpicture}
 \end{center}
 \caption{A connectivity triple $(S_i,F,\mathcal{C})$ with connectivity pattern $\mathcal{C}=\{C_1,C_2,C_3\}$.}
\label{fig:components}
\end{figure}
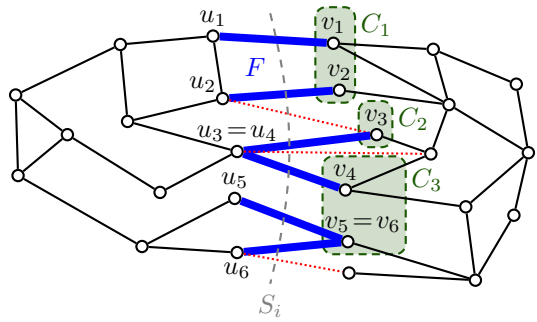
One natural approach to try to address such challenges is to maintain more structure in the dynamic program, by for example also enumerating over potential \emph{connectivity patterns} of edges in small cuts, i.e., ways of how the edges in a small cut could be connected on one or either side of the cut.
For ease of presentation, consider a situation with a single small cut $S_i$, and a given selection of $t\leqslant \tau$ many edges $F \coloneqq\{\{u_j,v_j\}\mid j\in[t]\}\subseteq\delta(S_i)$ with $u_j\in S_i$ and $v_j\notin S_i$ for all $j\in[t]$.
A \emph{connectivity pattern} for the right-hand side of the cut is a partition $\mathcal{C}$ of $\{v_1,\ldots,v_t\}$, where a set $C\in\mathcal{C}$ indicates that the vertices in $C$ shall be connected in $G[V\setminus S_i]$ .
We call the triple $(S_i,F,\mathcal{C})$ a \emph{connectivity triple} (see \cref{fig:components}).

\begin{definition}[Compatibility with a connectivity triple]\leavevmode
\begin{enumerate}
\item A spanning tree $T\subseteq E$ is \emph{compatible} with the connectivity triple~$(S_i,F,\mathcal{C})$ if $T\cap \delta(S_i) = F$ and the partition on $\{v_1,\ldots, v_t\}$ induced by the connected components of $T\cap E[V\setminus S_i]$ equals $\mathcal{C}$.
\item A set $R\subseteq \binom{V\setminus S_i}{2}$ is \emph{right-compatible} with $(S_i,F,\mathcal{C})$ if $R$ is a forest and the partition on $\{v_1,\ldots, v_t\}$ induced by the connected components of $R$ equals $\mathcal{C}$.
\item Let $U\subseteq E[S_i]$, and let $R\subseteq \binom{V\setminus S_i}{2}$ be right-compatible with $(S_i,F,\mathcal{C})$.
Then $U$ is \emph{left-compatible} with $(S_i,F,\mathcal{C})$ if $U\cup F\cup R$ is a spanning tree.
\item Let $x\in \mathbb{R}^E$ with $\supp(x)\subseteq E[S_i]$, and let $R\subseteq \binom{V \setminus S_i}{2}$ be right-compatible with $(S_i,F,\mathcal{C})$.
Then $x$ is \emph{left-compatible} with $(S_i,F,\mathcal{C})$ if $x + \chi^F + \chi^R$ is in the spanning tree polytope of $(V,E\cup R)$.
\end{enumerate}
\end{definition}
We highlight that right-compatible sets $R$ are not required to be a subset of the edges of $G$, but can contain any pairs of vertices within $V\setminus S_i$.
This makes sure that right-compatible sets exist for any connectivity triple, which simplifies the exposition.
Moreover, one can observe that the above definitions of left-compatibility do not depend on which right-compatible set $R$ is chosen, and are thus well-defined.

Knowing the correct connectivity triple, the desired separation into independent subproblems can actually be achieved.
However, this comes at the cost that the subproblem on the side where we guessed the connectivity pattern becomes substantially harder than in the simple case $\tau=1$.
This is nicely highlighted by a simple connectivity pattern $\mathcal{C}$: Assume that all right endpoints $\{v_1,\ldots, v_t\}$ of $F$ are distinct, $t$ is even, and let $\mathcal{C}=\{\{v_1,v_2\}, \{v_3,v_4\}, \ldots, \{v_{t-1}, v_t\}\}$ be a grouping of the endpoints into pairs.
To pinpoint the difficulties, consider the question of whether there exists a right-compatible edge set $U\subseteq E[V\setminus S_i]$.
For this to be the case, $U$ must be a forest with $\sfrac{t}{2}$ components, one for each pair $v_j,v_{j+1}$ that connects that pair.
Such a set $U$ exists if and only if there are $\sfrac{t}2$ vertex-disjoint paths in $G[V\setminus S_i]$, one between $v_j$ and $v_{j+1}$ for each $j\in \{1,3,\ldots, t-1\}$.
Hence, just determining whether there exists a right-compatible solution is at least as difficult as the vertex-disjoint paths problem.
Though this problem is efficiently solvable for a constant number of paths $\sfrac{t}{2}$, due to the seminal results by Robertson and Seymour in the context of the Graph Minor Project (also see~\cite{kawarabayashi_2012_disjoint} for a faster procedure), the known techniques for disjoint paths are highly non-trivial, non-polyhedral, cannot handle costs, and, last but not least, the connectivity pattern $\mathcal{C}$ leading to the disjoint paths problem remains a very special case of connectivity patterns we have to deal with.

\subsection{Efficiently extending subsolutions through relaxed connectivity requirements}

To overcome this issue in our approach, we still enumerate over connectivity triples with connectivity patterns on the right-hand side of small cuts as described above, but will later not require that right-hand side solutions are right-compatible with it; they only have to properly complete an existing left-hand side solution.
The left-hand side solutions, however, will be left-compatible with the guessed connectivity triples.

We start by observing a simple way to describe left-compatibility.
Let $(S_i,F,\mathcal{C})$ be a connectivity triple.
Then, left-compatible edge sets $U\subseteq E[S_i]$ are simply spanning trees in an auxiliary graph $G(S_i, F, \mathcal{C})$ which we obtain from $G[S_i]$ through the following operations:
\begin{enumerate*}
\item add the edges $F$ and their endpoints to $G[S_i]$,
\item contract the vertex sets in $\mathcal{C}$, and
\item contract the edges in $F$.
\end{enumerate*}
There is a canonical one-to-one relation between the edges in $G(S_i,F,\mathcal{C})$ and $E[S_i]$, and we therefore treat them as the same edge set.
By $\PST(S_i,F,\mathcal{C})$, we denote the spanning tree polytope of $G(S_i,F,\mathcal{C})$.
Hence, $\PST(S_i,F,\mathcal{C})$ are all points that are left-compatible with $(S_i,F,\mathcal{C})$.

We are now ready to describe our dynamic programming approach.
For ease of notation, we set $S_0\coloneqq\emptyset$ and $S_{k+1}\coloneqq V$.
Consider the set $\mathcal{K}$ of all connectivity triples $(S_i,F,\mathcal{C})$, where $i\in \{0,\ldots, k+1\}$ and the crossing edges $F\subseteq \delta(S_i)$ satisfy $a_i\leqslant |F| \leqslant \min\{\tau,b_i\}$.
In particular, there are no connectivity triples for $i\in\{0,\ldots,k+1\}$ for which $\tau < a_i$.
For each $(S_i,F,\mathcal{C})\in \mathcal{K}$ we determine via our dynamic program a point (partial solution) $y_{(S_i,F,\mathcal{C})}\in \mathbb{R}^{E[S_i]}$ with the following property.
\begin{property}\label{prop:yPartialProps}\leavevmode
\begin{enumerate}
\item\label{item:yIsInPST} $y_{(S_i, F, \mathcal{C})}\in \PST(S_i,F,\mathcal{C})$.

\item\label{item:yIsAlphInt} $y_{(S_i,F,\mathcal{C})}+\chi^F$ is $\tau$-integral on $S_1,\ldots, S_{i-1}$.

\item\label{item:yFulfillsBounds} $a_h \leqslant y_{(S_i,F,\mathcal{C})}(\delta(S_h))+|F\cap\delta(S_h)|\leqslant b_h$ for all $h\in [i-1]$.

\item\label{item:yIsCheap} $c^\top y_{(S_i,F,\mathcal{C})}$ is at most the cost of a cheapest edge set $U\subseteq E[S_i]$ such that $\chi^U$ fulfills \cref{item:yIsInPST,item:yIsAlphInt,item:yFulfillsBounds}.
\end{enumerate}
\end{property}
For simplicity of notation, we consider all vectors defined on some subset $U\subseteq E$ of the edges, like $y_{(S_i,F,\mathcal{C})}$ above, to be vectors in $\mathbb{R}^E$, where all entries on which the vector was not defined are set to~$0$.\footnote{This makes sure that expressions like $c^\top y_{(S_i,F,\mathcal{C})}$ or $y_{(S_i,F,\mathcal{C})}(\delta(S_h))$ are well-defined.}
Clearly, if we can obtain such points in $|V|^{\Oh(\tau)}$ time, then we are done because $y\coloneqq y_{(V,\emptyset,\{\emptyset\})}$ is in $Q$ because of \cref{item:yIsInPST,item:yFulfillsBounds}; $y$ is a $\tau$-integral point due to \cref{item:yIsAlphInt}; and $y$ satisfies $c^\top y \leqslant c(\OPT)$ due to \cref{item:yIsCheap}, as desired.

To construct the vectors $y_{(S_i,F,\mathcal{C})}$, we initialize $y_{(S_0,\emptyset,\{\emptyset\})}$ to the zero vector, and consider triples $(S_i, F, \mathcal{C})\in \mathcal{K}$ in increasing order of $i$.
Let $i\in [k+1]$, let $(S_i,\overline{F},\overline{\mathcal{C}})\in \mathcal{K}$, and assume that for each $(S_j, F, \mathcal{C})\in \mathcal{K}$ with $j<i$, we already computed a vector $y_{(S_j,F,\mathcal{C})}$ satisfying \cref{prop:yPartialProps}.
To compute a vector $y_{(S_i,\overline{F},\overline{\mathcal{C}})}$ satisfying \cref{prop:yPartialProps}, we consider all $(S_j, F, \mathcal{C})\in \mathcal{K}$ with $j<i$ and for each such triple, we solve the following linear program, which finds a cheapest extension $z$ of $y_{(S_j,F,\mathcal{C})}$ that is left-compatible with $(S_i,\overline{F},\overline{\mathcal{C}})$:
\begin{equation}\tag{exLP}\label{eq:lpCompDP}
\begin{aligned}
&& \min\quad c^\top z \\
&& z &\in\PST(S_i,\overline{F},\overline{\mathcal{C}})\\
&& \max\{\tau+1,a_h\}\leqslant z(\delta(S_h))&+|\overline{F}\cap\delta(S_h)| \leqslant b_h && \forall h\in \{j+1,\ldots, i-1\}\\
&& z(e) &= y_{(S_j,F,\mathcal{C})}(e) && \forall e \in E[S_j]\\
&& z(e) &= 1 && \forall e\in F\\
&& z(e) &= 0 && \forall e\in \delta(S_j)\setminus F\enspace.
\end{aligned}
\end{equation}%
Among all linear programs of type~\eqref{eq:lpCompDP}, i.e., one for each triple $(S_j, F, \mathcal{C})$ with $j<i$, we determine the one achieving the smallest optimal value and set $y_{(S_j,\overline{F},\overline{\mathcal{C}})}$ to be an optimal solution of that linear program.%
\footnote{%
We use the usual convention that if some linear program~\eqref{eq:lpCompDP} is infeasible, then its objective value is interpreted as $\infty$ (and we will never use a solution to an infeasible linear program later on).
Infeasibility occurs, for example, for choices of $(S_j, F, \mathcal{C})$ where $F$ does not contain all edges of $\overline{F}\cap\delta(S_j)$.}
This finishes the description of our dynamic program.
The bottleneck of the running time is the repeated solving of linear programs of type~\eqref{eq:lpCompDP}.
A simple bound on the number of such LPs that we solve is $|\mathcal{K}|^2$, and the running time of $|V|^{\Oh(\tau)}$ then follows from the following bound and the fact that we can solve~\eqref{eq:lpCompDP} in strongly polynomial time through standard techniques.
A more formal treatment, including a proof of the simple statement below, is given in \cref{sec:DPdetails}.

\begin{proposition}\label{prop:sizeK}
$|\mathcal{K}| = |V|^{\Oh(\tau)}$.
\end{proposition}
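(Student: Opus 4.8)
The plan is to bound the number of connectivity triples $(S_i,F,\mathcal{C})\in\mathcal{K}$ by bounding, separately, the number of choices for each of the three components and multiplying. First, the index $i$ ranges over $\{0,1,\ldots,k+1\}$, so there are at most $k+2\leqslant |V|+1$ choices for $S_i$ (recall the chain has length $k\leqslant|V|-1$, as noted at the start of \cref{sec:approachDPMCCST}). Second, fixing $S_i$, the edge set $F$ is a subset of $\delta(S_i)$ of size at most $\tau$; since $|\delta(S_i)|\leqslant\binom{|V|}{2}<|V|^2$, the number of such subsets is at most $\sum_{t=0}^{\tau}\binom{|V|^2}{t}\leqslant (\tau+1)\cdot|V|^{2\tau}=|V|^{\Oh(\tau)}$, using $\tau\leqslant|V|-1$ so that the factor $\tau+1$ is itself polynomial. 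Third, fixing $F$ with $|F|=t\leqslant\tau$, the connectivity pattern $\mathcal{C}$ is a partition of the $t$-element set $\{v_1,\ldots,v_t\}$ of right endpoints, and the number of partitions of a $t$-element set is the Bell number $B_t$, which satisfies the crude bound $B_t\leqslant t^t\leqslant\tau^\tau\leqslant|V|^\tau$.

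Combining these, $|\mathcal{K}|\leqslant(|V|+1)\cdot(\tau+1)\cdot|V|^{2\tau}\cdot|V|^{\tau}=|V|^{\Oh(\tau)}$, which is the claimed bound. The only mild subtlety is keeping the polynomial prefactors (like $k+2$ and $\tau+1$) absorbed into the $|V|^{\Oh(\tau)}$ notation, which is immediate because $\tau\geqslant 0$ means $|V|^{\Oh(\tau)}$ already subsumes any fixed polynomial in $|V|$; and we have reduced to $\tau\leqslant|V|-1$ at the outset, so even the $\tau^\tau$-type bounds are at most $|V|^{|V|}$-free in the exponent's dependence on $\tau$. One should also note that $F$ determines the pair $(u_j,v_j)$ for each edge (each edge of $\delta(S_i)$ has a well-defined endpoint in $S_i$ and one outside), so the partition $\mathcal{C}$ really is a partition of a set of size exactly $|F|$ and no separate enumeration of endpoint labels is needed.

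There is no real obstacle here — this is a counting exercise. The one place to be slightly careful is the bound on the number of subsets $F$: writing $\binom{|\delta(S_i)|}{\leqslant\tau}$ naively as $|V|^{\Oh(\tau)}$ is fine, but one must make sure the implicit constant in the exponent does not secretly depend on $|V|$; it does not, since $\binom{m}{t}\leqslant m^t$ for all $m,t\geqslant 0$ and here $m\leqslant|V|^2$, giving the clean bound $|V|^{2t}\leqslant|V|^{2\tau}$. Everything else is routine, and the proof in \cref{sec:DPdetails} presumably just records these estimates.
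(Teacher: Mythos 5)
Your proposal is correct and follows essentially the same counting argument as the paper: at most $|V|+1$ choices for $S_i$, at most $|V|^{\Oh(\tau)}$ subsets $F\subseteq\delta(S_i)$ of size at most $\tau$, and at most $|V|^{\Oh(\tau)}$ partitions $\mathcal{C}$ of the right endpoints (using $\tau\leqslant|V|-1$). The only negligible imprecision is your remark that $\mathcal{C}$ partitions a set of size exactly $|F|$ — distinct edges of $F$ may share a right endpoint, so the set can be smaller — but this only improves the bound.
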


We now highlight a few key aspects of our dynamic program.
First, even though~\eqref{eq:lpCompDP} seeks to complete a prior solution $y_{(S_j,F,\mathcal{C})}$ to one for the triple $(S_i,\overline{F},\overline{C})$, we do not require the completion to be right-compatible with $(S_j,F,\mathcal{C})$.\footnote{Notice that we did not formally define right-compatibility for fractional points because we do not need it, but a natural extension would be to say that it is a convex combination of right-compatible integral solutions.}
This connectivity pattern is completely disregarded in~\eqref{eq:lpCompDP} and was only used on the left-hand side of $S_j$ to construct $y_{(S_j,F,\mathcal{C})}$.
The key observation is that any integral solution that is compatible with both triples $(S_j,F,\mathcal{C})$ and $(S_i,\overline{F},\overline{\mathcal{C}})$, and has only large cuts between $S_i$ and $S_j$, provides a legal way to complete $y_{(S_j,F,\mathcal{C})}$ as formally described by the following statement.
\begin{lemma}\label{lem:completionOk}
Consider $i>j$ such that $b_h>\tau$ for all $h \in \{j+1,\ldots,i-1\}$, and let $R\subseteq \binom{V\setminus S_i}{2}$ be right-compatible with $(S_i,\overline{F},\overline{\mathcal{C}})$.
Then for any $U\subseteq E[S_i]$ such that $T\coloneqq U\cup \overline{F} \cup R$ is a spanning tree compatible with both $(S_j,F,\mathcal{C})$ and $(S_i,\overline{F},\overline{\mathcal{C}})$, and $|T\cap \delta(S_h)|\in [\max\{\tau+1,a_h\},b_h]$ for all $h\in \{j+1,\ldots, i-1\}$,
the following vector is a feasible solution to~\eqref{eq:lpCompDP}:
\begin{equation*}
z\coloneqq y_{(S_j,F,\mathcal{C})} + \chi^{F\setminus \overline{F}} + \chi^{U\cap E[S_i\setminus S_j]}\enspace.
\end{equation*}
\end{lemma}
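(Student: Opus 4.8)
The plan is to check, one family of constraints at a time, that the explicitly given vector $z$ satisfies every constraint of~\eqref{eq:lpCompDP}, exploiting two things: the decomposition of $E[S_i]$ into the three ``zones'' determined by the inner cut $S_j$, and the fact that the spanning tree $T$ supplies both the connectivity witness needed on the $S_j$-side and the cut values on the layers strictly between $S_j$ and $S_i$.

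First I would record the relevant set arithmetic. Since $j<i$ we have $S_j\subseteq S_i$, and $S_j\subseteq S_h\subseteq S_i$ for each $h\in\{j+1,\dots,i-1\}$. Compatibility of $T$ with $(S_i,\overline F,\overline{\mathcal C})$ and with $(S_j,F,\mathcal C)$ gives $\overline F=T\cap\delta(S_i)$, $F=T\cap\delta(S_j)$, and in particular $F\cap\delta(S_i)\subseteq T\cap\delta(S_i)=\overline F$; hence $F\setminus\overline F=F\setminus\delta(S_i)\subseteq E[S_i]$, since an $F$-edge that does not cross $S_i$ has both endpoints in $S_i$. Consequently $E[S_i]=E[S_j]\cup(\delta(S_j)\cap E[S_i])\cup E[S_i\setminus S_j]$ (a disjoint union), and the three summands of $z$ — namely $y_{(S_j,F,\mathcal C)}$, $\chi^{F\setminus\overline F}$, and $\chi^{U\cap E[S_i\setminus S_j]}$ — are supported in these three zones, respectively, so $\supp(z)\subseteq E[S_i]$. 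Reading off $z$ zone by zone immediately yields the three ``boundary'' constraints of~\eqref{eq:lpCompDP}: on $E[S_j]$ only the first term survives, so $z(e)=y_{(S_j,F,\mathcal C)}(e)$; for $e\in F$ lying in $E[S_i]$, i.e.\ $e\in F\setminus\overline F$, only the second term contributes, so $z(e)=1$ (the remaining edges of $F$ lie in $\overline F$ and are exactly the edges contracted when forming $G(S_i,\overline F,\overline{\mathcal C})$, so ``$z(e)=1$ for $e\in F$'' is a genuine constraint only on $F\setminus\overline F$); and an edge of $\delta(S_j)\setminus F$ lying in $E[S_i]$ is in none of the three supports, so there $z=0$.

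Next I would treat the layer constraints $\max\{\tau+1,a_h\}\le z(\delta(S_h))+|\overline F\cap\delta(S_h)|\le b_h$ for $j<h<i$ by showing the middle quantity equals $|T\cap\delta(S_h)|$, which lies in $[\max\{\tau+1,a_h\},b_h]$ by hypothesis. Since $T=U\cup\overline F\cup R$ is a disjoint union and $R\subseteq\binom{V\setminus S_i}{2}$ has both endpoints in $V\setminus S_i\subseteq V\setminus S_h$, we get $|T\cap\delta(S_h)|=|U\cap\delta(S_h)|+|\overline F\cap\delta(S_h)|$. It remains to see $z(\delta(S_h))=|U\cap\delta(S_h)|$: the $y$-term contributes nothing to $\delta(S_h)$ (its support lies inside $S_j\subseteq S_h$), and the other two terms together account for exactly $U\cap\delta(S_h)$, because an edge of $U$ crossing $S_h$ runs either from $S_j$ to $S_i\setminus S_h$ (such edges cross $S_j$, hence lie in $T\cap\delta(S_j)=F$, hence in $F\setminus\overline F$, and they comprise all of $(F\setminus\overline F)\cap\delta(S_h)$) or from $S_h\setminus S_j$ to $S_i\setminus S_h$ (these comprise all of $(U\cap E[S_i\setminus S_j])\cap\delta(S_h)$).

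The main obstacle is the polytope membership $z\in\PST(S_i,\overline F,\overline{\mathcal C})$, i.e.\ that $z+\chi^{\overline F}+\chi^R$ lies in the spanning tree polytope of $(V,E\cup R)$. For this I would construct the right-compatible witness for the inner triple. Let $R':=(U\cap E[S_i\setminus S_j])\cup(\overline F\setminus\delta(S_j))\cup R$; classifying edges of $T$ by where their endpoints sit shows $R'=T\cap E[V\setminus S_j]$, so $R'$ is a forest and, by compatibility of $T$ with $(S_j,F,\mathcal C)$, the partition of the right endpoints of $F$ induced by its components is $\mathcal C$ — thus $R'$ is right-compatible with $(S_j,F,\mathcal C)$. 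A short edge-count, whose only non-obvious input is $\overline F\cap\delta(S_j)\subseteq T\cap\delta(S_j)=F$, then gives the identity $z+\chi^{\overline F}+\chi^R=y_{(S_j,F,\mathcal C)}+\chi^F+\chi^{R'}$. By \cref{item:yIsInPST} in \cref{prop:yPartialProps} we have $y_{(S_j,F,\mathcal C)}\in\PST(S_j,F,\mathcal C)$, and since left-compatibility does not depend on which right-compatible set is used, this means $y_{(S_j,F,\mathcal C)}+\chi^F+\chi^{R'}$ lies in the spanning tree polytope of $(V,E\cup R')$. Finally, $R\subseteq R'$ while $z+\chi^{\overline F}+\chi^R$ is supported on $E\cup R$; since the spanning tree polytope of a subgraph is the face of that of the supergraph on which the extra edges vanish, $z+\chi^{\overline F}+\chi^R$ lies in the spanning tree polytope of $(V,E\cup R)$, as required. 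Collecting the four groups shows that $z$ is feasible for~\eqref{eq:lpCompDP}.
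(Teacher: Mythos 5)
Your proof is correct and follows essentially the same route as the paper: the heart of both arguments is the identity $z+\chi^{\overline F}+\chi^{R}=y_{(S_j,F,\mathcal C)}+\chi^{F}+\chi^{R'}$ with $R'=T\cap\binom{V\setminus S_j}{2}$ right-compatible with $(S_j,F,\mathcal C)$, combined with left-compatibility of $y_{(S_j,F,\mathcal C)}$. You additionally spell out the equality and layer constraints (which the paper dismisses as immediate), and these verifications are accurate.
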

The above lemma is crucial to make sure that our dynamic program remains a relaxation of the original \MCCST problem, even though it requires to complement a specific previously computed solution $y_{(S_j,F,\mathcal{C})}$.
Simply speaking, \cref{lem:completionOk} guarantees that the increment in cost when extending the solution $y_{(S_j,F,\mathcal{C})}$ up to the cut $S_i$ through our dynamic program is no more than the best integral extension that realizes both connectivity triples at $S_j$ and $S_i$.
This makes sure that the solutions we compute %
fulfill \cref{item:yIsCheap} of \cref{prop:yPartialProps}.
Intuitively, the reason why \cref{lem:completionOk} holds is the following: No matter what precise point $y_{(S_j,F,\mathcal{C})}$ we computed, as long as it is left-compatible with $(S_j,F,\mathcal{C})$, it can be completed by any edge set that is right-compatible with $(S_j,F,\mathcal{C})$ to obtain a point in $\PST$.

Also note that a crucial difference between our dynamic program and the classical way of using dynamic programming approaches is that we need an \emph{explicit} solution $y_{(S_j,F,\mathcal{C})}$ in our propagation/extension step.
Only knowing the connectivity triple $(S_j,F,\mathcal{C})\in \mathcal{K}$ and the value of a best point $y_{(S_j,F,\mathcal{C})}$ for that triple would not be enough.
To highlight this contrast, consider for example a classical dynamic programming approach for the (integer) knapsack problem~(see, e.g.,~\cite[Section 17.2]{korte_2018_combinatorial}).
Here, the dynamic program computes for every possible cost the smallest total weight of items realizing that cost.
(Sometimes, the dynamic program is presented with the roles of costs and weights exchanged, in which case, for every possible weight, the dynamic program computes a minimum cost solution of that weight.) In the propagation step of this classical dynamic program, to extend existing partial solutions of smaller cost to partial solutions of larger cost, one only needs to know the weight and cost of previously computed partial solutions, but not the exact partial solution.
Partial solutions are sometimes saved in classical dynamic programs to quickly retrieve a solution through backtracking, once the whole dynamic programming table is filled.
However, to just determine the optimal value of a solution, and in particular to perform the propagation steps in the dynamic program, partial solutions are not used.

\subsection[Details of the dynamic programming approach for MCCST]{Details of the dynamic programming approach for \MCCST}\label{sec:DPdetails}

In this section, we complete the proofs that are missing to formally ensure that the dynamic programming approach described above achieves the guarantees claimed by \cref{thm:MCCSTdPguarantee}.
Recall that the dynamic program initializes $y_{(S_0,\emptyset,{\emptyset})}=0$, and propagates to points $y_{(S_i,F,\mathcal{C})}\in\mathbb{R}^E_{\geqslant 0}$ for all $(S_i,F,\mathcal{C})\in\mathcal{K}$ using \cref{alg:propagationChainDP}.

\begin{algorithm2e}[H]
\caption{Propagation to $y_{(S_i,\overline{F},\overline{\mathcal{C}})}$ from all $y_{(S_j,F,\mathcal{C})}$ with $j<i$.\strut}\label{alg:propagationChainDP}

\begin{algostepsarabic}
\item\label{algitem:propagationStepLamExLP} For all $(S_j,F,\mathcal{C})\in\mathcal{K}$\strut{} with $j<i$, solve the linear program~\eqref{eq:lpCompDP} and obtain an optimal solution $y$.
\item Among all solutions $y$ found in \cref{algitem:propagationStepLamExLP}, let
$y_{(S_i,\overline{F},\overline{\mathcal{C}})}$ be one minimizing $c^\top y$.
Return $y_{(S_i,\overline{F},\overline{\mathcal{C}})}$.
\end{algostepsarabic}
\end{algorithm2e}

We start by proving \cref{lem:completionOk}, which essentially shows that the extension found by our dynamic program when solving a linear program of the form~\eqref{eq:lpCompDP} has cost no more than the best integral extension that is compatible with the connectivity triples on both sides of the extension.

\begin{proof}[Proof of \cref{lem:completionOk}]
To obtain feasibility of $z$ for~\eqref{eq:lpCompDP}, it is easy to see that the inequality constraints that are stated explicitly in~\eqref{eq:lpCompDP} follow immediately by definition of $z$.
Thus, it remains to prove that $z\in\PST(S_i,\overline{F},\overline{\mathcal{C}})$, i.e., that $z$ is left-compatible with $(S_i,\overline{F},\overline{\mathcal{C}})$.

By assumption, the tree $T$ is compatible with $(S_j,F,\mathcal{C})$, hence $T\cap \binom{V\setminus S_j}{2}$ is right-compatible with $(S_j,F,\mathcal{C})$.
Moreover, $y_{(S_j,F,\mathcal{C})}$ is left-compatible with $(S_j,F,\mathcal{C})$.
Combining these observations, we see that $y_{(S_j,F,\mathcal{C})} + \chi^{F} + \chi^{T\cap \binom{V\setminus S_j}{2}}$ is in the spanning tree polytope of $(V,E\cup R)$.
By partitioning $T\cap \binom{V\setminus S_j}{2}$ into $T\cap E[S_i\setminus S_j]$, $\overline{F}\setminus F$, and $R$, we get that, equivalently, $y_{(S_j,F,\mathcal{C})} + \chi^{F} + \chi^{T\cap E[S_i\setminus S_j]} + \chi^{\overline{F}\setminus F} + \chi^R$ is in the spanning tree polytope of $(V,E\cup R)$, which implies that $y_{(S_j,F,\mathcal{C})} + \chi^{F\setminus\overline{F}} + \chi^{T\cap E[S_i\setminus S_j]}$ is left-compatible with $(S_i,\overline{F},\overline{\mathcal{C}})$.
As $T\cap E[S_i\setminus S_j]=U\cap E[S_i\setminus S_j]$, the lemma follows.
\end{proof}

With the bound on the incremental cost of an extension from \cref{lem:completionOk}, we can show that propagating along \cref{alg:propagationChainDP}, we maintain \cref{prop:yPartialProps}.

\begin{lemma}\label{lem:propagationMCCST}
Let $i\in[k+1]$ and $(S_i,\overline{F},\overline{\mathcal{C}})\in\mathcal{K}$.
Assume that for all $(S_j,F,\mathcal{C})\in\mathcal{K}$ with $j<i$, we are given points $y_{(S_j,F,\mathcal{C})}$ satisfying \cref{prop:yPartialProps}.
Then $y_{(S_i,\overline{F},\overline{\mathcal{C}})}$ obtained by \cref{alg:propagationChainDP} also has \cref{prop:yPartialProps}.
\end{lemma}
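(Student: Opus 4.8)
The plan is to establish the four items of \cref{prop:yPartialProps} for $y\coloneqq y_{(S_i,\overline{F},\overline{\mathcal{C}})}$ in two blocks. The \emph{feasibility} block consists of \cref{item:yIsInPST,item:yIsAlphInt,item:yFulfillsBounds}, which I would deduce purely from the constraints of the linear program~\eqref{eq:lpCompDP} together with the inductive hypothesis applied to the ``winning'' predecessor; the \emph{cost} block is \cref{item:yIsCheap}, where \cref{lem:completionOk} does the work. Recall that by \cref{alg:propagationChainDP}, $y$ is an optimal solution of~\eqref{eq:lpCompDP} for some fixed $(S_j,F,\mathcal{C})\in\mathcal{K}$ with $j<i$, and that this LP is feasible for that triple; I keep $(S_j,F,\mathcal{C})$ fixed throughout.

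For the feasibility block, \cref{item:yIsInPST} is literally the constraint $z\in\PST(S_i,\overline{F},\overline{\mathcal{C}})$ of~\eqref{eq:lpCompDP}. For \cref{item:yIsAlphInt,item:yFulfillsBounds} I would split the cuts $S_h$, $h\in[i-1]$, into the range $j<h<i$ and the range $h\leqslant j$. In the first range, the explicit constraint $\max\{\tau+1,a_h\}\leqslant z(\delta(S_h))+|\overline{F}\cap\delta(S_h)|\leqslant b_h$ places $S_h$ directly into \cref{item:largeCuts} of \cref{def:tauIntegral} for the point $y+\chi^{\overline{F}}$, so $\tau$-integrality is free and the bounds are immediate. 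In the second range, the equality constraints $z|_{E[S_j]}=y_{(S_j,F,\mathcal{C})}$, $z|_F=1$ and $z|_{\delta(S_j)\setminus F}=0$ of~\eqref{eq:lpCompDP} let me rewrite $y(\delta(S_h))$ as $y_{(S_j,F,\mathcal{C})}(\delta(S_h))+|F\cap\delta(S_h)|$, so for $h<j$ the inductive hypothesis (its \cref{item:yIsAlphInt,item:yFulfillsBounds}) for $y_{(S_j,F,\mathcal{C})}$ transfers, while $h=j$ is covered by $(S_j,F,\mathcal{C})\in\mathcal{K}$, which forces $a_j\leqslant|F|\leqslant\min\{\tau,b_j\}$. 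Here I would also use that feasibility of~\eqref{eq:lpCompDP} forces $F\subseteq E[S_i]$ and hence $F\cap\overline{F}=\emptyset$, which pins down how $\overline{F}$ can meet the cuts at or below $S_j$.

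For the cost block, let $U^*\subseteq E[S_i]$ be a cheapest edge set whose indicator satisfies \cref{item:yIsInPST,item:yIsAlphInt,item:yFulfillsBounds} for $(S_i,\overline{F},\overline{\mathcal{C}})$ (if no such set exists, \cref{item:yIsCheap} holds vacuously). Fix a set $R$ right-compatible with $(S_i,\overline{F},\overline{\mathcal{C}})$, so that $T\coloneqq U^*\cup\overline{F}\cup R$ is a spanning tree of $(V,E\cup R)$ compatible with $(S_i,\overline{F},\overline{\mathcal{C}})$, and read off from $T$ the predecessor it ``passes through'': let $S_j$ be the largest cut among $S_1,\ldots,S_{i-1}$ that is small with respect to $\chi^{U^*}+\chi^{\overline{F}}$, put $F\coloneqq T\cap\delta(S_j)$, and let $\mathcal{C}$ be the partition of the $(V\setminus S_j)$-endpoints of $F$ induced by the connected components of $T\cap\binom{V\setminus S_j}{2}$. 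Then I would check that $(S_j,F,\mathcal{C})\in\mathcal{K}$ (bounding $|F|$ via $\tau$-integrality and \cref{item:yFulfillsBounds} for $U^*$), that $T$ is compatible with $(S_j,F,\mathcal{C})$ by construction, and that every cut strictly between $S_j$ and $S_i$ is $y$-large and lies in $[\max\{\tau+1,a_h\},b_h]$, so that the hypotheses of \cref{lem:completionOk} hold with $U=U^*$. Applying the inductive hypothesis \cref{item:yIsCheap} to $y_{(S_j,F,\mathcal{C})}$ with the witness $\chi^{U^*\cap E[S_j]}$ — which satisfies \cref{item:yIsInPST,item:yIsAlphInt,item:yFulfillsBounds} for $(S_j,F,\mathcal{C})$ because truncating $U^*$ to $E[S_j]$ keeps it left-compatible and preserves $\tau$-integrality and the bounds on $S_1,\ldots,S_{j-1}$ — gives $c^\top y_{(S_j,F,\mathcal{C})}\leqslant c(U^*\cap E[S_j])$. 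Then \cref{lem:completionOk} supplies the feasible solution $z=y_{(S_j,F,\mathcal{C})}+\chi^{F\setminus\overline{F}}+\chi^{U^*\cap E[S_i\setminus S_j]}$ of~\eqref{eq:lpCompDP}, whose cost is $c^\top y_{(S_j,F,\mathcal{C})}+c(F)+c(U^*\cap E[S_i\setminus S_j])\leqslant c(U^*)$, using that $E[S_i]$ is the disjoint union of $E[S_j]$, $E[S_i\setminus S_j]$, and the edges between $S_j$ and $S_i\setminus S_j$, the last of which meet $U^*$ exactly in $F$. Since \cref{alg:propagationChainDP} takes the minimum over all predecessors, $c^\top y_{(S_i,\overline{F},\overline{\mathcal{C}})}\leqslant c^\top z\leqslant c(U^*)$, which is \cref{item:yIsCheap}.

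I expect the main obstacle to be exactly this cut/edge bookkeeping, and above all the choice of predecessor in the last paragraph: one has to pin down $(S_j,F,\mathcal{C})$ so that it lands in $\mathcal{K}$, so that the intermediate cuts come out $y$-large (so the inequality constraints of~\eqref{eq:lpCompDP} are satisfiable), and so that $\overline{F}$ and $F$ interact in the only way the LP tolerates ($F\subseteq E[S_i]$ and $F\cap\overline{F}=\emptyset$). \Cref{lem:completionOk} is precisely the statement that isolates the genuinely delicate step — that relaxing the connectivity requirement on the right-hand side of $S_j$ never makes the extension of $y_{(S_j,F,\mathcal{C})}$ more expensive than the integral extension realized by $T$ — so once its hypotheses are verified, the remainder is routine verification.
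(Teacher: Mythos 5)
Your proposal is correct and follows essentially the same route as the paper's proof: the first three items of \cref{prop:yPartialProps} are read off the constraints of~\eqref{eq:lpCompDP} for cuts $S_h$ with $h\geqslant j$ and inherited from the inductive hypothesis for $h<j$, and \cref{item:yIsCheap} is obtained exactly as in the paper by fixing a cheapest witness $U$, forming $T=U\cup\overline{F}\cup R$, choosing $S_j$ as the maximal $\chi^T$-small cut below $S_i$ with the induced $F$ and $\mathcal{C}$, and combining \cref{lem:completionOk} with the inductive cost bound applied to $U\cap E[S_j]$. The one slip is your side remark that feasibility forces $F\cap\overline{F}=\emptyset$: an edge with one endpoint in $S_j$ and the other in $V\setminus S_i$ lies in both $\delta(S_j)$ and $\delta(S_i)$, and the paper's intended reading (see the footnote after~\eqref{eq:lpCompDP} and the term $\chi^{F\setminus\overline{F}}$ in \cref{lem:completionOk}) is that such edges of $\overline{F}$ must be \emph{contained} in $F$, so $F\cap\overline{F}$ may be nonempty and the crossing edges contributed in the cost accounting are $F\setminus\overline{F}$ rather than $F$; this is a bookkeeping adjustment that does not affect the structure of your argument.
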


\begin{proof}
Note that any solution of~\eqref{eq:lpCompDP} satisfies \cref{item:yIsInPST,item:yIsAlphInt,item:yFulfillsBounds} of \cref{prop:yPartialProps} with respect to $(S_i,\overline{F},\overline{\mathcal{C}})$.
\cref{item:yIsInPST} follows directly from the corresponding constraint in~\eqref{eq:lpCompDP}, while \cref{item:yIsAlphInt,item:yFulfillsBounds} are implied by the constraints in the linear program for cuts $S_h$ with $h\geqslant j$, and follow for cuts $S_h$ with $h<j$ from the fact that $y_{(S_j,F,\mathcal{C})}$ has \cref{prop:yPartialProps}, by assumption.

To see that \cref{item:yIsCheap} holds, it is enough to see that for any edge set $U\subseteq E[S_i]$ fulfilling \cref{item:yIsInPST,item:yIsAlphInt,item:yFulfillsBounds}, there is a connectivity triple $(S_j,F,\mathcal{C})$ with $j<i$ such that the corresponding solution $y$ of the linear program~\eqref{eq:lpCompDP} satisfies $c^\top y\leqslant c(U)$.
To see this, fix such an edge set $U$, which is by definition left-compatible with $(S_i,\overline{F},\overline{\mathcal{C}})$.
Thus, for any $R\subseteq\binom{V\setminus S_i}{2}$ that is right-compatible with $(S_i,\overline{F},\overline{\mathcal{C}})$, the edge set $T\coloneqq U\cup\overline{F}\cup R$ is a spanning tree in $(V,E\cup R)$ compatible with $(S_i,\overline{F},\overline{\mathcal{C}})$.
Let $j<i$ be maximal such that $S_j$ is a $\chi^T$-small cut.
Let $F=T\cap \delta(S_j)$ and let $\mathcal{C}$ be the connectivity pattern such that $T$ is compatible with $(S_j,F,\mathcal{C})$.

We claim that $(S_j,F,\mathcal{C})$ is the connectivity triple that we are looking for, i.e., if $y$ is an optimal solution of~\eqref{eq:lpCompDP} when extending $y_{(S_j,F,\mathcal{C})}$ to $(S_i,\overline{F},\overline{\mathcal{C}})$, then $c^\top y\leqslant c(U)$.
Thereto, observe that by definition, $T=U\cup\overline{F}\cup R$ is a spanning tree in $(V,E\cup R)$ compatible with both $(S_j,F,C)$ and $(S_i,\overline{F},\overline{\mathcal{C}})$, and by the choice of $j$, we also have $|T\cap\delta(S_h)|\in[\max\{\tau+1,a_h\},b_h]$ for all $h\in\{j+1,\ldots,i-1\}$.
Thus, $R$ and $U$ satisfy the assumptions of \cref{lem:completionOk}, and we get that $z\coloneqq y_{(S_j,F,\mathcal{C})}+\chi^{F\setminus\overline{F}}+\chi^{U\cap E[S_i\setminus S_j]}$ is a feasible solution to~\eqref{eq:lpCompDP}.
But $y$ is an optimal solution of the same linear program, hence
\begin{equation*}
c^\top y \leqslant c^\top y_{(S_j,F,\mathcal{C})}+c(\chi^{F\setminus\overline{F}})+c(U\cap E[S_i\setminus S_j])\enspace.
\end{equation*}
By assumption, $y_{(S_j,F,\mathcal{C})}$ satisfies \cref{prop:yPartialProps}, and applying \cref{item:yIsCheap} with the edge set $U\cap E[S_j]$, we get $c^\top y_{(S_j,F,\mathcal{C})}\leqslant c(U\cap E[S_j])$.
Combining this with the above, we obtain the desired inequality
\begin{equation*}
c^\top y \leqslant c(U\cap E[S_j])+c(\chi^{F\setminus\overline{F}})+c(U\cap E[S_i\setminus S_j])=c(U)
\enspace .\qedhere
\end{equation*}
\end{proof}

The final step before wrapping up and proving \cref{thm:MCCSTdPguarantee} is to show the bound on $|\mathcal{K}|$ from \cref{prop:sizeK}.

\begin{proof}[Proof of \cref{prop:sizeK}.]%
The set $\mathcal{K}$ consists of all connectivity triples $(S_i,F,\mathcal{C})$, where $i\in\{0,1,\ldots,k+1\}$, $F\subseteq\delta(S_i)$ has size at most $\tau$, and $\mathcal{C}$ is a corresponding connectivity pattern.
In order to build such a triple, there are at most $|V|+1$ options for choosing a set $S_i$ (one for each $i\in\{0,\ldots,k+1\}$, and we have $k\leqslant |V|-1$).
Furthermore, note that $\delta(S_i)$ can contain at most $\Oh(|V|^2)$ many edges of $G$, hence there are at most $|V|^{\Oh(\tau)}$ many choices for a subset of size at most $\tau$.
Finally, $\mathcal{C}$ is a partition of the at most $\tau$ many endpoints of the edges in $F$ that do not lie inside $S_i$, and the number of such partitions can be bounded by $|V|^{\Oh(\tau)}$ as well, where we use $\tau\leqslant |V|-1$.
Altogether, we get $|\mathcal{K}|\leqslant |V|^{\Oh(\tau)}$.
\end{proof}%

Finally, we are ready to prove \cref{thm:MCCSTdPguarantee}.

\begin{proof}[Proof of \cref{thm:MCCSTdPguarantee}]%
We calculate points $y_{(S_i,F,\mathcal{C})}$ for all $(S_i,F,\mathcal{C})\in\mathcal{K}$ by initializing $y_{(\emptyset,\emptyset,\{\emptyset\})}=0$ and calculating $y_{(S_i,F,\mathcal{C})}$ in increasing order of $i$ using the propagation step described in \cref{alg:propagationChainDP}.
Note that $y_{(\emptyset,\emptyset,\{\emptyset\})}$ has \cref{prop:yPartialProps}, and hence, by an inductive application of \cref{lem:propagationMCCST}, we obtain that all points $y_{(S_i,F,\mathcal{C})}$ for $(S_i,F,\mathcal{C})\in\mathcal{K}$ have \cref{prop:yPartialProps}.
In particular, \cref{prop:yPartialProps} for $y_{(V,\emptyset,\{\emptyset\})}$ immediately implies that this point has the properties claimed by \cref{thm:MCCSTdPguarantee}.
Note that the guarantee on $c^\top y_{(V,\emptyset,\{\emptyset\})}$ follows from the fact that a cheapest edge set $U$ such that $\chi^U$ fulfills \cref{item:yIsInPST,item:yIsAlphInt,item:yFulfillsBounds} of \cref{prop:yPartialProps} with respect to $(S_i,F,\mathcal{C})=(V,\emptyset,\{\emptyset\})$ is in fact an optimal solution.

For the running time bound, observe that the dominating operation of our dynamic programming procedure is repeatedly solving linear programs of type~\eqref{eq:lpCompDP}.
The total number of such linear programs that we solve is bounded from above by $|\mathcal{K}|^{2}$, and therefore, the running time of $|V|^{\Oh(\tau)}$ follows from \cref{prop:sizeK} and the observation that linear programs of the type~\eqref{eq:lpCompDP} can be solved in strongly polynomial time.
The latter can be achieved by using a compact extended formulation for the spanning tree polytope with small coefficients in the constraint matrix (one can, for example, use the one by \textcite{martin1991using}, which has coefficients that are bounded by $1$ in absolute value), and then applying the framework of \textcite{tardos_1986_strongly}.
\end{proof}%

\section[Local correction steps for rounding procedures in {0,1}-polytopes]{Local correction steps for rounding procedures in \boldmath$\{0,1\}$\unboldmath-polytopes}\label{sec:localCorrections}

\newcommand{\Fstar}{\overline{F}}

In this section, we discuss details of the proof of \cref{thm:neighborhood}, used to avoid a $(1+\varepsilon)$-factor loss in the objective value.
We present this result separately because it may be of independent interest, as it applies to a broad class of problem settings.
At the end of this section, in \cref{sec:linharesSwamy}, we briefly discuss how, for \MCCST, an alternative approach introduced by \textcite{linhares_2018_reduction} also allows for avoiding a loss in the objective value.

\smallskip

We show \cref{thm:neighborhood} by proving a more general statement for $\{0,1\}$-polytopes, based on polyhedral neighborhoods.
We therefore start with some basic polyhedral terminology.\footnote{We refer the interested reader to~\cite[Volume A]{schrijver2003combinatorial} for more information on polyhedral combinatorics.}
$\{0,1\}$-polytopes are a representation of \emph{set systems} $(E,\mathcal{F})$, where $E$ is a finite ground set and $\mathcal{F}\subseteq 2^E$.
One can think of $\mathcal{F}$ as the feasible sets of some combinatorial problem over $E$.
For example, $E$ may be the edge set of a graph and $\mathcal{F}$ the family of all spanning trees.
The \emph{combinatorial polytope} $P_{\mathcal{F}}$ that corresponds to $\mathcal{F}$ is defined by
\begin{equation}\label{eq:combPoly}
P_{\mathcal{F}} \coloneqq \conv(\{\chi^F \mid F\in \mathcal{F}\})\enspace,
\end{equation}
where $\conv$ denotes the convex hull.
Hence, if $\mathcal{F}$ are all spanning trees of a graph, $P_{\mathcal{F}}$ is the spanning tree polytope.
For $F_1,F_2\in \mathcal{F}$ and $q\in \mathbb{Z}_{\geqslant 0}$, we say that $F_2$ is in the $q$-neighborhood of $F_1$ on $P_{\mathcal{F}}$ if one can reach the vertex $\chi^{F_2}$ of $P_{\mathcal{F}}$ from the vertex $\chi^{F_1}$ by successively traversing at most $q$ edges of $P_{\mathcal{F}}$.
Notice that this natural notion extends the way we modify $T$ to obtain $\overline{T}$ in \cref{thm:neighborhood}: Indeed, this follows from the well-known property that two spanning trees $T,\overline{T}$ in $G$---or, more generally, any two bases $T,\overline{T}$ of a matroid---have the property that $\chi^T, \chi^{\overline{T}}$ are adjacent in $P_{\mathcal{F}}$ if and only if $|T\symdiff \overline{T}| = 2$ (see, e.g.,~\cite[Volume B]{schrijver2003combinatorial}).
Furthermore, for any $y\in P_{\mathcal{F}}$, we denote by $P_{\mathcal{F}_y}\subseteq P_{\mathcal{F}}$ the minimal face of $P_{\mathcal{F}}$ that contains $y$.
Additionally, $\mathcal{F}_y\subseteq \mathcal{F}$ denotes all sets $F\in \mathcal{F}$ such that $\chi^F\in P_{\mathcal{F}_y}$.
(Note that these definitions of $\mathcal{F}_y$ and $P_{\mathcal{F}_y}$ are consistent with~\eqref{eq:combPoly} in the sense that $P_{\mathcal{F}_y}$ is indeed the combinatorial polytope of the family $\mathcal{F}_y$.)

A key quantity in our derivations is the cardinality $\rho(\mathcal{F})$ of a largest size set in $\mathcal{F}$:
\begin{equation*}
\rho(\mathcal{F}) \coloneqq \max\{|F| \mid F \in \mathcal{F}\}\enspace.
\end{equation*}
Typically, when dealing with a set system $(E,\mathcal{F})$ where all sets have the same cardinality, i.e., $|F|=\rho(\mathcal{F})$ for all $F\in \mathcal{F}$, we can obtain slightly stronger results later.
We call such set systems \emph{equal-cardinality} systems.
Note that the family of spanning trees (or bases of any matroid) is an equal-cardinality system.
We prove the following generalization of \cref{thm:neighborhood}.
\begin{theorem}\label{thm:neighborhoodGen}
Let $(E,\mathcal{F})$ be a set system, let $y\in P_{\mathcal{F}}$, $c\in \mathbb{R}^E$, $q\in \mathbb{Z}_{\geqslant 1}$, and let $T$ be a random set in $\mathcal{F}$ drawn from a distribution satisfying $\Pr[e\in T] = y_e$ for all $e\in E$.
Let $\overline{T}\in \mathcal{F}$ be a set minimizing $c(U)$ among all $U\in \mathcal{F}$ in the $q$-neighborhood of $T$ on $P_{\mathcal{F}_y}$.
Then
\begin{equation*}
\Pr\left[ c(\overline{T}) \leqslant c^\top y \right] \geqslant \frac{q}{2\rho({\mathcal{F}})}\enspace.
\end{equation*}
Moreover, if $(E,\mathcal{F})$ is an equal-cardinality system, then
\begin{equation*}
\Pr\left[ c(\overline{T}) \leqslant c^\top y \right] \geqslant \frac{q}{\rho({\mathcal{F}})}\enspace.
\end{equation*}
\end{theorem}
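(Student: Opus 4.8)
My plan is to pass to the minimal face $P_{\mathcal{F}_y}$ and exploit that the mean of $c(T)$ sits exactly at the threshold $c^\top y$. I would begin with two elementary reductions. First, since $y$ lies in the relative interior of its minimal face $P_{\mathcal{F}_y}$, every distribution on $\mathcal{F}$ with marginals $y$ is supported on vertices of that face, so $T\in\mathcal{F}_y$ almost surely. Second, by linearity of expectation $\expval[c(T)]=c^\top\expval[\chi^T]=c^\top y=:\mu$; in particular $\min\{c^\top x:x\in P_{\mathcal{F}_y}\}\leqslant\mu$, so the set $G$ of vertices of $P_{\mathcal{F}_y}$ of cost at most $\mu$ is nonempty (it contains the $c$-minimal vertex $F^*$).

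Next I would set up a monotone-descent structure on the $1$-skeleton of $P_{\mathcal{F}_y}$. Every vertex $v$ with $c^\top v>\mu$ fails to be a $c$-minimizer of $P_{\mathcal{F}_y}$, hence is incident to a skeleton edge leading to a strictly cheaper vertex; fixing such a choice $\sigma(v)$ and iterating $\sigma$ produces, from any vertex $v$, a walk to $G$ of some length $\ell(v)$. Recalling the adjacency characterization stated just before the theorem (which is exactly what ``$q$-neighborhood on $P_{\mathcal{F}_y}$'' refers to), we get $c(\overline{T})\leqslant\mu$ whenever $\ell(T)\leqslant q$. So it suffices to prove $\Pr[\ell(T)\leqslant q]\geqslant q/(2\rho(\mathcal{F}))$ in general, and $\geqslant q/\rho(\mathcal{F})$ in the equal-cardinality case, for a judicious choice of $\sigma$.

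The heart of the matter is then a purely polyhedral/combinatorial lemma. Writing $A_j:=\{v:\ell(v)\leqslant j\}$ for the depth-$\leqslant j$ vertices of the resulting descent forest, I would choose $\sigma$ so that (i) $A_{2\rho(\mathcal{F})}$ is all of $\mathcal{F}_y$ (resp.\ $A_{\rho(\mathcal{F})}$ in the equal-cardinality case), and (ii) each successive layer carries enough probability mass that $\Pr[T\in A_j]\geqslant j/(2\rho(\mathcal{F}))$ (resp.\ $\geqslant j/\rho(\mathcal{F})$) as long as $A_{j-1}\neq\mathcal{F}_y$; telescoping over $j=1,\dots,q$ then gives the theorem. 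For (i) I would use that every vertex of $P_{\mathcal{F}_y}$ is a $0/1$ point with at most $\rho(\mathcal{F})$ ones and build the descent path from $v$ toward $F^*$ by ``correcting one coordinate at a time'' — deleting a surplus element of $v\setminus F^*$ or inserting a missing element of $F^*\setminus v$ — so that the path has length at most $|v\symdiff F^*|\leqslant 2\rho(\mathcal{F})$; in the equal-cardinality case each step can be made a simultaneous swap (one deletion, one insertion), giving length at most $|v\setminus F^*|\leqslant\rho(\mathcal{F})$, which is precisely where the factor-$2$ improvement originates. For (ii) the lever is the identity $\expval[c(T)]=\mu$: it pins the mean of $c(T)$ at the cheap/expensive threshold and thereby prevents the distribution from concentrating on the deepest, most expensive layers of the forest.

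The step I expect to be the main obstacle is exactly this lemma: establishing the ``one coordinate of progress per cost-non-increasing step toward $F^*$'' property in the required generality (for an arbitrary set system a skeleton edge need not be a single-element swap, so the correcting step must be extracted carefully, using minimality of $F^*$ on $P_{\mathcal{F}_y}$ to keep it cost-non-increasing), and nailing down the layered mass bound (ii) with the clean constants $\rho(\mathcal{F})$ versus $2\rho(\mathcal{F})$; the two reductions and the telescoping are routine by comparison. Finally, \cref{thm:neighborhood} falls out immediately: take $\mathcal{F}$ to be the family of spanning trees of $G$ (an equal-cardinality system with $\rho(\mathcal{F})=|V|-1$) and $q=1$, so the $1$-neighborhood of $T$ on $P_{\mathcal{F}_y}$ is exactly $T$ together with the trees obtained by a single edge swap on coordinates where $y$ is fractional, yielding the claimed probability at least $(|V|-1)^{-1}$.
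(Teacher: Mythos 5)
Your architecture matches the paper's at a high level: descend along edges of $P_{\mathcal{F}_y}$ toward a $c$-minimizer $F^*$ of the face, bound the path length via $|T\symdiff F^*|\leqslant 2\rho(\mathcal{F})$ (resp.\ $\rho(\mathcal{F})$ for equal-cardinality systems, which is indeed where the factor $2$ is won), and extract the probability from $\expval[c(T)]=c^\top y$. But the two steps you yourself flag as "the main obstacle" are exactly where all the content lies, and as you have set them up, your layered mass bound (ii) is false. If each descent step is only required to be cost-decreasing, then $\Pr[T\in A_j]\geqslant j/(2\rho(\mathcal{F}))$ does not follow from the mean condition alone: with $\mu\coloneqq c^\top y$, put mass $p$ on $F^*$ with $c(F^*)=\mu-M$ and mass $1-p$ on sets of cost $\mu+\delta$ sitting at descent depth $2\rho(\mathcal{F})$; choosing $pM=(1-p)\delta$ keeps the mean at $\mu$ while letting $p\to 0$ as $M\to\infty$, so $\Pr[\ell(T)\leqslant q]\to 0$ for $q<2\rho(\mathcal{F})$, and nothing in "strictly cheaper neighbor at each step" rules this configuration out.

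What rescues the argument in the paper is a \emph{quantitative} version of the descent step (\cref{lem:thereIsGoodNeighbor}): for any $F\neq Q$ in $\mathcal{F}$ there is a polytope neighbor $A$ of $F$ with $|Q\symdiff A|\leqslant|Q\symdiff F|-1$ \emph{and} $c(A)-c(Q)\leqslant\bigl(1-\tfrac{1}{|Q\symdiff F|}\bigr)\,(c(F)-c(Q))$, i.e.\ a proportional reduction of the gap to the target, not merely some decrease. This is obtained by writing $\chi^Q-\chi^F$ as a conic combination of the edge directions of $P_{\mathcal{F}}$ at $\chi^F$, bounding the total coefficient mass by $|Q\symdiff F|$ (using $\|\chi^{A_i}-\chi^F\|_1\geqslant 1$, or $\geqslant 2$ for equal-cardinality systems), and averaging; the containments $Q\cap F\subseteq A\subseteq Q\cup F$ then give the drop in symmetric difference. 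Iterating yields $c(A)-\eta\leqslant(1-q/|Q\symdiff T|)(c(T)-\eta)$ after $q$ steps, where $\eta=c(F^*)$. The probabilistic step is then not a layer-by-layer telescoping but a single application of Markov's inequality to the nonnegative variable $c(T)-\eta$, whose mean is $c^\top y-\eta$: with probability at least $1/\mu'=q/(2\rho(\mathcal{F}))$ one has $c(T)-\eta\leqslant\frac{\mu'}{\mu'-1}(c^\top y-\eta)$, and for such $T$ the geometric decay over $q=\lceil 2\rho(\mathcal{F})/\mu'\rceil$ steps lands at cost at most $c^\top y$. Without the proportional-decrease lemma your plan cannot be completed, so the proposal has a genuine gap at its central step rather than just a routine verification left open.
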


First observe that \cref{thm:neighborhoodGen} indeed implies \cref{thm:neighborhood}.

\begin{proof}[Proof of \cref{thm:neighborhood}]%
We set $E$ to be the edges of $G=(V,E)$, the family $\mathcal{F}\subseteq 2^E$ to be all spanning trees in $G$, and $q=1$.
Clearly, in this case we have $\rho(\mathcal{F})=|V|-1$ and $|F|=\rho(\mathcal{F})$ for all $F\in \mathcal{F}$, because every spanning tree has precisely $|V|-1$ edges.
Hence, spanning trees form an equal-cardinality system.
By \cref{thm:neighborhoodGen} we thus obtain
\begin{equation*}
\Pr[c(W) \leqslant c^\top y] \geqslant \frac{1}{|V|-1}\enspace,
\end{equation*}
where $W\in \mathcal{F}$ is a set minimizing $c(U)$ among all $U\subseteq \mathcal{F}$ in the $1$-neighborhood of $T$ on $P_{\mathcal{F}_y}$.

For the above to imply \cref{thm:neighborhood}, it suffices to show that any $U\subseteq \mathcal{F}$ in the $1$-neighborhood of $T$ on $P_{\mathcal{F}_y}$ fulfills $y(e)\in (0,1)$ for all $e\in U\Delta T$.
Because $P_{\mathcal{F}}\subseteq [0,1]^E$---i.e., non-negativity constraints and constraints of type $x(e)\leqslant 1$ are valid for $P_{\mathcal{F}}$---all points on $P_{\mathcal{F}_y}$ coincide with $y$ on the edges where $y$ is integral, i.e.,
\begin{equation*}
P_{\mathcal{F}_y} \subseteq \left\{ x\in [0,1]^E \;\middle\vert\; x(e)=y(e) \;\forall e\in E \text{ with } y(e)\in \{0,1\} \right\}\enspace.
\end{equation*}%
This implies that any $F\subseteq \mathcal{F}_y$ fulfills $y(e)\in (0,1)$ for all $e\in F\symdiff T$.
Hence, this also holds for any $U\subseteq \mathcal{F}$ in the $1$-neighborhood of $T$ on $P_{\mathcal{F}_y}$, as desired, and finishes the proof.
\end{proof}%

\subsection[Proof of \texorpdfstring{\cref{thm:neighborhoodGen}}{Theorem 13}]{Proof of \cref{thm:neighborhoodGen}}

We show \cref{thm:neighborhoodGen} in several steps.
We first derive a bound on the cost of a well-chosen set $A$ in the $1$-neighborhood of another fixed set $F$.
The following lemma formalizes this statement.
The cost improvement is measured with respect to the cost of some target set $\Fstar$, which will later be chosen to be a set in $\mathcal{F}$ of smallest cost.

\begin{lemma}\label{lem:thereIsGoodNeighbor}
Let $(E,\mathcal{F})$ be a set system.
Let $c\in \mathbb{R}^E$, and $F,\Fstar\in \mathcal{F}$ with $F\neq \Fstar$.
Then there exists a neighbor $A\in \mathcal{F}$ of $F$ on $P_{\mathcal{F}}$ satisfying
\begin{enumerate}
\item\label{item:AIsCheap} $c(A) - c(\Fstar) \leqslant \left(1-\frac{1}{|\Fstar\symdiff F|}\right) \cdot (c(F) - c(\Fstar))$, and
\item\label{item:AReducesSymDiff} $|\Fstar\symdiff A| \leqslant |\Fstar\symdiff F| -1$.
\end{enumerate}
Moreover, if $(E,\mathcal{F})$ is an equal-cardinality system, then the above properties can be strengthened to
\begin{enumerate}[label=(\roman*')]
\item\label{item:AIsCheapPrime} $c(A) - c(\Fstar) \leqslant \left(1-\frac{2}{|\Fstar\symdiff F|}\right) \cdot (c(F) - c(\Fstar))$, and
\item\label{item:AReducesSymDiffPrime} $|\Fstar\symdiff A| \leqslant |\Fstar\symdiff F| -2$.
\end{enumerate}
\end{lemma}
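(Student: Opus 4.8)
The plan is to argue purely polyhedrally on $P_{\mathcal{F}}$, exploiting that it is a $\{0,1\}$-polytope together with the standard fact that the tangent cone of a polytope at a vertex is a pointed cone generated by the edges incident to that vertex (see~\cite[Volume A]{schrijver2003combinatorial}). Since $\chi^Q\in P_{\mathcal{F}}$, the direction $d:=\chi^Q-\chi^F$ lies in the tangent cone of $P_{\mathcal{F}}$ at the vertex $\chi^F$, so it decomposes as
\[
\chi^Q-\chi^F=\sum_{j=1}^{m}\lambda_j\bigl(\chi^{A_j}-\chi^F\bigr),\qquad \lambda_j>0,
\]
where $A_1,\dots,A_m\in\mathcal{F}$ are neighbors of $F$ on $P_{\mathcal{F}}$ (in particular $A_j\neq F$). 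I expect this decomposition to be the engine of the whole proof: the set $A$ promised by the lemma will be one of the $A_j$, selected by a weighting argument for the cost, and the symmetric-difference bound will hold for every $A_j$ simultaneously.

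The next step is to read off the combinatorial structure of the $A_j$. Because $\chi^F$ is a $\{0,1\}$-vector, on each coordinate $e$ all entries $(\chi^{A_j}-\chi^F)_e$ share the sign of $(\chi^Q-\chi^F)_e$ (nonnegative on $Q\setminus F$, nonpositive on $F\setminus Q$, and forced to $0$ off $Q\symdiff F$ since a sum of same-sign terms vanishes only if each term does). This yields two facts: \emph{(a)} $A_j\symdiff F\subseteq Q\symdiff F$ for every $j$, hence $|Q\symdiff A_j|=|Q\symdiff F|-|A_j\symdiff F|$; and \emph{(b)} taking absolute values coordinatewise in the displayed identity and summing, $|Q\symdiff F|=\sum_j\lambda_j\,|A_j\symdiff F|$. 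Since $A_j\neq F$, fact~(a) already gives $|Q\symdiff A_j|\leq|Q\symdiff F|-1$, i.e.\ property~(ii) for \emph{every} $A_j$; and in the equal-cardinality case $|A_j\symdiff F|$ is even and positive, hence at least $2$, which upgrades this to property~(ii').

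For the cost bound I would combine the scalar identity $c(Q)-c(F)=\sum_j\lambda_j\bigl(c(A_j)-c(F)\bigr)$ with~(b): dividing by $|Q\symdiff F|$, the quantity $(c(Q)-c(F))/|Q\symdiff F|$ is a convex combination, with weights $\lambda_j|A_j\symdiff F|/|Q\symdiff F|$, of the ratios $(c(A_j)-c(F))/|A_j\symdiff F|$, so some index $j^{*}$ satisfies $(c(A_{j^{*}})-c(F))/|A_{j^{*}}\symdiff F|\leq(c(Q)-c(F))/|Q\symdiff F|$. Set $A:=A_{j^{*}}$; using $|A\symdiff F|\geq 1$ (resp.\ $\geq 2$) and $c(Q)-c(F)\leq 0$ — the regime in which this lemma is applied, as $Q$ will be a minimum-cost set — one multiplies through to get $c(A)-c(F)\leq\tfrac{1}{|Q\symdiff F|}(c(Q)-c(F))$ (resp.\ $\tfrac{2}{|Q\symdiff F|}(c(Q)-c(F))$), which after adding $c(F)-c(Q)$ to both sides is exactly property~(i) (resp.\ (i')); and this same $A$ satisfies~(ii) (resp.\ (ii')) by the previous paragraph. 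The main obstacle I anticipate is not any individual estimate but ensuring that the \emph{single} neighbor $A$ meets both requirements at once: the weighting argument pins down $A$ through its cost ratio, so one must observe independently that the symmetric-difference bound is uniform over all $A_j$. A secondary point of care is cleanly justifying that the tangent-cone decomposition can be taken over genuine edges of $P_{\mathcal{F}}$ (not merely arbitrary feasible directions), for which one invokes the standard structure theory of polytopes.
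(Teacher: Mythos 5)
Your proposal is correct and follows essentially the same route as the paper's proof: decompose $\chi^Q-\chi^F$ over the edge directions of $P_{\mathcal{F}}$ at $\chi^F$, read off $A_j\symdiff F\subseteq Q\symdiff F$ from the coordinatewise sign structure (which gives (ii)/(ii') for \emph{every} $A_j$ at once), and pick $A$ by averaging; your weighting by $\lambda_j|A_j\symdiff F|$ versus the paper's weighting by $\lambda_j$ combined with the bound $\sum_j\lambda_j\leqslant|Q\symdiff F|$ are interchangeable uses of the same identity $|Q\symdiff F|=\sum_j\lambda_j\,|A_j\symdiff F|$. The one point where you invoke $c(Q)\leqslant c(F)$ — not a stated hypothesis, but true in every application since $Q$ is chosen as a minimizer — is needed at exactly the corresponding step of the paper's argument (passing from the factor $1-\sfrac{1}{\lambda}$ to $1-\sfrac{1}{|Q\symdiff F|}$), so your explicit flagging of it is, if anything, more careful than the original.
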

\begin{proof}
Consider the vertex $\chi^F$ of $P_{\mathcal{F}}$, and the family $\{A_1,\ldots, A_\ell\}\in \mathcal{F}$ of all neighboring sets in $\mathcal{F}$ on $P_{\mathcal{F}}$.
We start with a basic polyhedral property, namely that the cone with apex $\chi^F$ spanned by all edges of $P_{\mathcal{F}}$ incident with $\chi^F$ contains the whole polytope, i.e.,
\begin{equation*}
P_{\mathcal{F}} \subseteq \chi^F + \cone\left(\left\{\chi^{A_i} - \chi^F \;\middle\vert\; i\in [\ell]\right\}\right)\enspace.
\end{equation*}
In particular, this implies that there exist coefficients $\lambda_i\geqslant 0$ for $i\in [\ell]$ such that
\begin{equation*}
\chi^{\Fstar} = \chi^F + \sum_{i=1}^{\ell} \lambda_i\cdot \left(\chi^{A_i} - \chi^F \right)\enspace.
\end{equation*}
We are only interested in strictly positive coefficients.
Let $k$ be the number of strictly positive coefficients, and assume, by renumbering the indices, that these are the coefficients $\lambda_1,\ldots,\lambda_k$.
Hence,
\begin{equation}\label{eq:FstarinConeAtF}
\chi^{\Fstar} = \chi^F + \sum_{i=1}^{k} \lambda_i\cdot \left(\chi^{A_i} - \chi^F \right)\enspace,
\end{equation}
and $\lambda_i > 0$ for all $i\in [k]$.
Let $\lambda\coloneqq \sum_{i=1}^k \lambda_i$.

\medskip
\begin{adjustwidth}{1em}{}
{\textbf{Claim.}}
We have
$\lambda \leqslant |\Fstar\symdiff F|$.
Moreover, if $(E,\mathcal{F})$ is an equal-cardinality system, then $\lambda \leqslant \frac{1}{2}|\Fstar\symdiff F|$.

\begin{proof}[Proof of claim]%
We have
\begin{align}
|\Fstar\Delta F| &= \|\chi^{\Fstar} - \chi^F \|_1 \notag\\
   &=  \left\| \sum_{i=1}^k \lambda_i \cdot \left(\chi^{A_i} - \chi^F\right) \right\|_1\notag\\
   &= \left\| \sum_{i=1}^k \lambda_i \cdot \left(\chi^{A_i\setminus F} - \chi^{F\setminus A_i}\right) \right\|_1\notag\\
   &=  \sum_{i=1}^k \lambda_i \cdot \left\| \chi^{A_i\setminus F} - \chi^{F\setminus A_i} \right\|_1\notag\\
   &=  \sum_{i=1}^k \lambda_i \cdot |A_i\symdiff F|\enspace,\label{eq:lambdaLargeAA}
\end{align}
where the second equality follows from~\eqref{eq:FstarinConeAtF}, and the forth one from the fact that all vectors $\chi^{A_i\setminus F}-\chi^{F\setminus A_i}$ in the sum are non-positive on entries corresponding to $F$ and non-negative on all other entries.
Now, because $A_i$ are neighbors of $F$ on $P_{\mathcal{F}}$, we have $A_i\neq F$ for $i\in [k]$, and hence $|A_i\symdiff F| \geqslant 1$.
This implies, together with~\eqref{eq:lambdaLargeAA}, the first statement of the claim.
Moreover, if $(E,\mathcal{F})$ is an equal-cardinality system, then $A_i\neq F$ implies $|A_i\symdiff F| \geqslant 2$, which leads to the strengthened statement of the claim for equal-cardinality systems.
\end{proof}%
\end{adjustwidth}
\medskip

\noindent Taking the scalar product of $c$ with both sides of~\eqref{eq:FstarinConeAtF}, and rearranging terms, we get
\begin{equation*}
c(F) - c(\Fstar) = \sum_{i=1}^k \lambda_i \cdot \left(c(F) - c(A_i)\right)\enspace.
\end{equation*}
Using an averaging argument, there exists an index $j\in [k]$ such that
\begin{equation*}
\frac{1}{\lambda} \cdot \left( c(F) - c(\Fstar) \right) \leqslant c(F) - c(A_j)\enspace,
\end{equation*}
which is equivalent to
\begin{equation}\label{eq:AjIsCheapLam}
c(A_j) - c(\Fstar) \leqslant  \left(1 - \frac{1}{\lambda}\right)\left(c(F)-c(\Fstar)\right)\enspace.
\end{equation}

We will show that $A\coloneqq A_j$ fulfills the statement of the lemma.
First observe that~\eqref{eq:AjIsCheapLam} together with the claim implies that $A$ fulfills \cref{item:AIsCheap,item:AIsCheapPrime}, respectively.
To show \cref{item:AReducesSymDiff,item:AReducesSymDiffPrime}, we show that any $A_i$ for $i\in [k]$ fulfills $|\Fstar\symdiff A_i| < |\Fstar\symdiff F|$.
This indeed implies both \cref{item:AReducesSymDiff,item:AReducesSymDiffPrime}, because when dealing with equal-cardinality systems, the symmetric difference between any two sets in the system has even cardinality.
Hence, it remains to show $|\Fstar\symdiff A_i| < |\Fstar\symdiff F|$.

We start by observing that equation~\eqref{eq:FstarinConeAtF} implies $A_i\subseteq \Fstar\cup F$.
Indeed, if there were any $e\in A_i$ with $e\not\in \Fstar\cup F$, then this would lead to a strictly positive entry for $e$ on the right-hand side of~\eqref{eq:FstarinConeAtF}, whereas $\chi^{\Fstar}$, which appears on the left-hand side of~\eqref{eq:FstarinConeAtF}, has a $0$-entry at $e$.
Analogously, we can derive that $\Fstar\cap F\subseteq A_i$, because if there was $e\in (\Fstar\cap F)\setminus A_i$, then this would imply that the right-hand side of~\eqref{eq:FstarinConeAtF} has as its entry at $e$ a value strictly less than $1$, contradicting that the left-hand side has a value of $1$ at entry $e$.
In summary, we have $A_i\subseteq \Fstar\cup F$ and $\Fstar\cap F \subseteq A_i$.
However, among all sets satisfying these properties, the set $F$ is the unique set that maximizes the symmetric difference with $\Fstar$.
Because $A_i\neq F$ we thus have $|\Fstar\symdiff A_i| < |\Fstar\symdiff F|$, as desired, which finishes the proof of \cref{lem:thereIsGoodNeighbor}.
\end{proof}

\cref{lem:thereIsGoodNeighbor} is a statement about finding good sets in the $1$-neighborhood of any set $F$.
By repeatedly applying the lemma, we obtain the following generalization for $q$-neighborhoods.
\begin{lemma}\label{lem:thereIsGoodQNeighbor}
Let $(E,\mathcal{F})$ be a set system.
Let $c\in \mathbb{R}^E$, and let $F,\,\Fstar\in \mathcal{F}$ with $F\neq \Fstar$.
Then, for any $q\in \{1,\ldots, |\Fstar\symdiff F|\}$, there exists a set $A\in \mathcal{F}$ in the $q$-neighborhood of $F$ on $P_{\mathcal{F}}$ satisfying
\begin{enumerate}
\item\label{item:ACheapCor} $c(A) - c(\Fstar) \leqslant \left(1-\frac{q}{|\Fstar\symdiff F|}\right) \cdot (c(F) - c(\Fstar))$, and
\item\label{item:ACloseToFstarCor} $|\Fstar\symdiff A| \leqslant |\Fstar\symdiff F| - q$.
\end{enumerate}
Moreover, if $(E,\mathcal{F})$ is an equal-cardinality system, then we obtain the following strengthening.
For any $q\in \{1,\ldots, \frac{1}{2}|\Fstar\symdiff F|\}$, there exists a set $A\in \mathcal{F}$ in the $q$-neighborhood of $F$ on $P_{\mathcal{F}}$ satisfying
\begin{enumerate}[label=(\roman*')]
\item $c(A) - c(\Fstar) \leqslant \left(1-\frac{2q}{|\Fstar\symdiff F|}\right) \cdot (c(F) - c(\Fstar))$, and
\item $|\Fstar\symdiff A| \leqslant |\Fstar\symdiff F| - 2q$.
\end{enumerate}
\end{lemma}
\begin{proof}
We prove the lemma by induction on $q$.
For $q=1$ the statement holds due to \cref{lem:thereIsGoodNeighbor}.
Now assume $q>1$, and we show the inductive step for the case where $(E,\mathcal{F})$ is not necessarily an equal-cardinality system.
The extension to equal-cardinality systems is analogous.
By the inductive hypothesis, there is a set $\overline{A}\in \mathcal{F}$ in the $(q-1)$-neighborhood of $F$ on $P_{\mathcal{F}}$ satisfying
\begin{enumerate}[label=(\alph*)]
\item\label{item:indHypACheap} $c(\overline{A}) - c(\Fstar) \leqslant \left(1-\frac{q-1}{|\Fstar\symdiff F|}\right) \cdot (c(F) - c(\Fstar))$, and
\item\label{item:indHypACloseToFstar} $|\Fstar\symdiff \overline{A}| \leqslant |\Fstar\symdiff F| - (q-1)$.
\end{enumerate}
Moreover, applying \cref{lem:thereIsGoodNeighbor} to $F=\overline{A}$, we obtain that there is a set $A\in \mathcal{F}$ in the $1$-neighborhood of $\overline{A}$ in $P_{\mathcal{F}}$---and hence, $A$ is a $q$-neighbor of $F$ in $P_{\mathcal{F}}$---such that
\begin{enumerate}[label=(\alph*)]
\setcounter{enumi}{2}
\item\label{item:ACheapOneStep} $c(A) - c(\Fstar) \leqslant \left(1-\frac{1}{|\Fstar\symdiff \overline{A}|}\right) \cdot (c(\overline{A}) - c(\Fstar))$, and
\item\label{item:ACloseToFstarOneStep} $|\Fstar\symdiff A| \leqslant |\Fstar\symdiff \overline{A}| - 1$.
\end{enumerate}
The fact that $A$ fulfills \cref{item:ACloseToFstarCor} is now an immediate consequence of \cref{item:indHypACloseToFstar,item:ACloseToFstarOneStep}.
Moreover, we have
\begin{align*}
c(A) - c(\Fstar) &\leqslant \left( 1 - \frac{1}{|\Fstar\symdiff \overline{A}|} \right) \cdot \left( 1 - \frac{q-1}{|\Fstar\symdiff F|} \right) \cdot \left( c(F) - c(\Fstar) \right)\\
 &\leqslant \left(1 - \frac{1}{|\Fstar\symdiff F| - (q-1)} \right) \cdot \left( 1 - \frac{q-1}{|\Fstar\symdiff F|} \right) \cdot \left( c(F) - c(\Fstar) \right)\\
 &= \left(1 - \frac{q}{|\Fstar\symdiff F|} \right)\cdot \left( c(F) - c(\Fstar) \right)\enspace,
\end{align*}%
where the first inequality follows from \cref{item:ACheapOneStep,item:indHypACheap}, and the second one from \cref{item:indHypACloseToFstar}.
Hence, this shows that $A$ also fulfills \cref{item:ACheapCor} and finishes the proof.
\end{proof}

Our next lemma, \cref{lem:canAlterCloseSol}, shows that if the value of a set $T\in \mathcal{F}$ is not significantly larger than $c^\top y$, then there is a good solution in its neighborhood.
Afterwards, in \cref{lem:likelyhoodToBeClose}, we provide a lower bound for the probability of this happening if $T$ has a distribution with marginals given by $y$, which is the setting of \cref{thm:neighborhoodGen}.

\begin{lemma}\label{lem:canAlterCloseSol}
Let $(E,\mathcal{F})$ be a set system, let $y\in P_{\mathcal{F}}$, $\mu \geqslant 1$, and $\eta = \min\{c(F) \mid F\in \mathcal{F}_y\}$.
Then for any $T\in \mathcal{F}_y$ with
\begin{equation*}
(\mu-1)\cdot (c(T) - \eta) \leqslant \mu \cdot (c^\top y - \eta)\enspace,
\end{equation*}
there is a set $U\in\mathcal{F}_y$ in the $\lceil \sfrac{2\rho(\mathcal{F}_y)}{\mu}\rceil$-neighborhood of $T$ on $P_{\mathcal{F}_y}$ with $c(U)\leqslant c^\top y$.
Moreover, if $(E,\mathcal{F}_y)$ is an equal-cardinality system, then such a set $U\in \mathcal{F}$ exists in the $\lceil \sfrac{\rho(\mathcal{F}_y)}{\mu} \rceil$-neighborhood of $T$ on $P_{\mathcal{F}_y}$.
\end{lemma}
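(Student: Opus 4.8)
The plan is to apply \cref{lem:thereIsGoodQNeighbor} inside the set system $(E,\mathcal{F}_y)$, whose combinatorial polytope is exactly $P_{\mathcal{F}_y}$, using as target set $Q$ a cheapest member of $\mathcal{F}_y$, so that $c(Q)=\eta$. Before that I would record two elementary facts. First, since $y$ lies in the face $P_{\mathcal{F}_y}$, it is a convex combination of characteristic vectors of sets in $\mathcal{F}_y$, each of cost at least $\eta$; hence $c^\top y\geqslant\eta$. Second, since $T\in\mathcal{F}_y$, we have $c(T)\geqslant\eta$, so $c(T)-\eta\geqslant 0$ — this nonnegativity is what lets me multiply a contraction factor by $c(T)-\eta$ without flipping the inequality. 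If $T=Q$ then $c(T)=\eta\leqslant c^\top y$ and $U=T$ already works, so I may assume $T\neq Q$; then $1\leqslant|Q\symdiff T|\leqslant|Q|+|T|\leqslant 2\rho(\mathcal{F}_y)$.

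For the general case I would set $q\coloneqq\lceil\sfrac{2\rho(\mathcal{F}_y)}{\mu}\rceil$. If $q\geqslant|Q\symdiff T|$, then $Q$ itself lies in the $q$-neighborhood of $T$ on $P_{\mathcal{F}_y}$: indeed, \cref{lem:thereIsGoodQNeighbor} applied with parameter $|Q\symdiff T|$ produces a set with symmetric difference $0$ from $Q$, hence equal to $Q$, placing $Q$ in the $|Q\symdiff T|$-neighborhood $\subseteq$ $q$-neighborhood; then $U\coloneqq Q$ satisfies $c(U)=\eta\leqslant c^\top y$. Otherwise $q<|Q\symdiff T|$, and \cref{lem:thereIsGoodQNeighbor} with $F=T$ gives a set $A\in\mathcal{F}_y$ in the $q$-neighborhood of $T$ on $P_{\mathcal{F}_y}$ with
\begin{equation*}
c(A)-\eta\leqslant\Bigl(1-\tfrac{q}{|Q\symdiff T|}\Bigr)\bigl(c(T)-\eta\bigr).
\end{equation*}
I would then check that $q\geqslant\sfrac{2\rho(\mathcal{F}_y)}{\mu}\geqslant\sfrac{|Q\symdiff T|}{\mu}$, whence $1-\sfrac{q}{|Q\symdiff T|}\leqslant 1-\sfrac{1}{\mu}=\sfrac{(\mu-1)}{\mu}$; multiplying by $c(T)-\eta\geqslant 0$ and invoking the hypothesis $(\mu-1)(c(T)-\eta)\leqslant\mu(c^\top y-\eta)$ yields $c(A)-\eta\leqslant c^\top y-\eta$, so $U\coloneqq A$ does the job.

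The equal-cardinality case runs identically with $q\coloneqq\lceil\sfrac{\rho(\mathcal{F}_y)}{\mu}\rceil$ and the strengthened half of \cref{lem:thereIsGoodQNeighbor}. Here $|Q\symdiff T|$ is even and positive, so $\tfrac12|Q\symdiff T|$ is a positive integer and $|Q\symdiff T|\leqslant 2\rho(\mathcal{F}_y)$. If $q\geqslant\tfrac12|Q\symdiff T|$, then $Q$ lies in the $\tfrac12|Q\symdiff T|$-neighborhood (strengthened lemma with that parameter gives symmetric difference $0$ from $Q$), hence in the $q$-neighborhood, and $U\coloneqq Q$ suffices; otherwise the strengthened lemma with parameter $q$ gives $A$ in the $q$-neighborhood with $c(A)-\eta\leqslant(1-\sfrac{2q}{|Q\symdiff T|})(c(T)-\eta)$, and $2q\geqslant\sfrac{2\rho(\mathcal{F}_y)}{\mu}\geqslant\sfrac{|Q\symdiff T|}{\mu}$ again forces $1-\sfrac{2q}{|Q\symdiff T|}\leqslant\sfrac{(\mu-1)}{\mu}$, so the same computation gives $c(A)\leqslant c^\top y$.

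I do not expect a serious obstacle. The only points that require care are the boundary case where the prescribed neighborhood radius already exceeds the distance to $Q$ (resolved by simply taking $U=Q$, using that $Q$ is reachable from $T$ within $|Q\symdiff T|$ — or $\tfrac12|Q\symdiff T|$ — edge steps of $P_{\mathcal{F}_y}$), together with the bookkeeping that $c(T)-\eta\geqslant 0$ and the bound $|Q\symdiff T|\leqslant 2\rho(\mathcal{F}_y)$, which is exactly what makes the chosen $q$ large enough relative to $|Q\symdiff T|$ to absorb the factor $\sfrac{1}{\mu}$.
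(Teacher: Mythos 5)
Your proof is correct and follows essentially the same route as the paper's: apply \cref{lem:thereIsGoodQNeighbor} within $(E,\mathcal{F}_y)$ with target $Q$ a cheapest set of $\mathcal{F}_y$, use $|Q\symdiff T|\leqslant 2\rho(\mathcal{F}_y)$ to bound the contraction factor by $1-\sfrac{1}{\mu}$, and combine with the hypothesis to get $c(U)-\eta\leqslant c^\top y-\eta$. The only difference is that you treat the boundary cases ($T=Q$, and $q$ exceeding the range allowed in \cref{lem:thereIsGoodQNeighbor}) explicitly by falling back to $U=Q$, which the paper's proof passes over silently; this is a welcome tightening rather than a divergence.
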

\begin{proof}
The statement trivially holds for $\mu=1$.
Hence, assume $\mu>1$.
Let $q\coloneqq \lceil \sfrac{\rho(\mathcal{F}_y)}{\mu} \rceil$ if $(E,\mathcal{F})$ is an equal-cardinality system, and $q\coloneqq \lceil \sfrac{2\rho(\mathcal{F}_y)}{\mu} \rceil$ otherwise.
Furthermore, we define
\begin{equation*}
\Fstar \in \argmin\left\{c(F) \mid F\in \mathcal{F}_y \right\}\enspace,
\end{equation*}
and hence, $c(\Fstar) = \eta$.
By \cref{lem:thereIsGoodQNeighbor}, there is a set $U\in \mathcal{F}_y$ in the $q$-neighborhood of $T$ on $P_{\mathcal{F}_y}$ satisfying
\begin{align*}
c(U) - \eta &\leqslant \left(1 - \frac{2\rho(\mathcal{F}_y)}{|\Fstar\symdiff T| \cdot\mu} \right)\cdot (c(T) - \eta) \\
  &\leqslant \left(1 - \frac{1}{\mu} \right)\cdot (c(T) - \eta)\\
  &\leqslant \left(1 - \frac{1}{\mu} \right)\cdot \frac{\mu}{\mu-1} (c^\top y - \eta)\\
  &= c^\top y - \eta\enspace,
\end{align*}
where the second inequality follows from $2\rho(\mathcal{F}_y)\geqslant |\Fstar\symdiff T|$, and the third one from the inequality given in the statement of \cref{lem:canAlterCloseSol}.
Hence, $U$ fulfills the properties required by \cref{lem:canAlterCloseSol}, which finishes the proof.
\end{proof}

\begin{lemma}\label{lem:likelyhoodToBeClose}
Let $(E,\mathcal{F})$ be a set system, let $y\in P_{\mathcal{F}}$, $\mu \geqslant 1$, and $\eta = \min\{c(F) \mid F\in \mathcal{F}_y\}$.
Moreover, let $T$ be a random set in $\mathcal{F}_y$ drawn from a distribution that satisfies $\Pr[e\in T]=y_e$ for all $e\in E$.
Then
\begin{equation*}
\Pr\left[(\mu-1) \cdot (c(T) - \eta) \leqslant \mu\cdot(c^\top y - \eta) \right] \geqslant \frac{1}{\mu}\enspace.
\end{equation*}
\end{lemma}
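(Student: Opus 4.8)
The plan is to observe that $c^{\top}y$ is nothing but the expected cost of $T$, and then the inequality is a one-line application of Markov's inequality to a suitable non-negative random variable.

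First I would note that because $T$ is drawn from a distribution with $\Pr[e\in T]=y_e$ for all $e\in E$, linearity of expectation gives $\expval[c(T)]=\sum_{e\in E}c(e)\Pr[e\in T]=\sum_{e\in E}c(e)\,y_e=c^{\top}y$. Since every set in the support of $T$ lies in $\mathcal{F}_y$ and $\eta=\min\{c(F)\mid F\in\mathcal{F}_y\}$, we have $c(T)\geqslant\eta$ with probability $1$; hence $X\coloneqq c(T)-\eta$ is a non-negative random variable with mean $m\coloneqq\expval[X]=c^{\top}y-\eta\geqslant 0$ (in particular $c^{\top}y\geqslant\eta$).

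Next I would dispose of the two degenerate cases. If $\mu=1$, the claimed event is simply $0\leqslant c^{\top}y-\eta$, i.e.\ $m\geqslant 0$, which holds deterministically, so the probability equals $1=\frac{1}{\mu}$. If $m=0$, then $X$ is a non-negative random variable of expectation $0$, so $X=0$ almost surely, the event $(\mu-1)X\leqslant\mu m$ holds almost surely, and the probability is again $1\geqslant\frac{1}{\mu}$. So it remains to treat the main case $\mu>1$ and $m>0$.

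In that case I would apply Markov's inequality to $X$ at the threshold $t\coloneqq\tfrac{\mu}{\mu-1}\,m>0$, which yields $\Pr[X\geqslant t]\leqslant\expval[X]/t=\tfrac{m(\mu-1)}{\mu m}=\tfrac{\mu-1}{\mu}$. The complement of the event in the lemma, namely $(\mu-1)(c(T)-\eta)>\mu(c^{\top}y-\eta)$, is exactly $X>\tfrac{\mu}{\mu-1}\,m$, which is contained in $\{X\geqslant t\}$; taking complements gives $\Pr[(\mu-1)(c(T)-\eta)\leqslant\mu(c^{\top}y-\eta)]\geqslant 1-\tfrac{\mu-1}{\mu}=\tfrac{1}{\mu}$, as required. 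There is no genuine obstacle here: the only points requiring a little care are the degenerate cases $\mu=1$ and $c^{\top}y=\eta$, where Markov's inequality either does not apply or is vacuous, and the identity $\expval[c(T)]=c^{\top}y$, which is precisely where the hypothesis on the marginals of $T$ enters.
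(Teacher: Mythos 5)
Your proposal is correct and follows essentially the same route as the paper: identify $c(T)-\eta$ as a non-negative random variable with expectation $c^\top y-\eta$ (using the marginal condition), dispose of the degenerate case $c^\top y=\eta$, and apply Markov's inequality at the threshold $\frac{\mu}{\mu-1}(c^\top y-\eta)$. Your explicit treatment of the case $\mu=1$ is a small bit of extra care that the paper leaves implicit, but the argument is the same.
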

\begin{proof}
The statement is trivial for $\mu=1$, so assume $\mu > 1$ for the rest of this proof.
First observe that $c(T) - \eta$ is a non-negative random variable with expected value $c^\top y-\eta$, as $\Pr[e\in T] = y_e$ for all $e\in E$.
If $c^\top y = \eta$, then the statement holds because $c(T)-\eta$ has expectation zero and is non-negative; thus, it is $0$ with probability~$1$.
Assume from now on $c^\top y > \eta$.
Then, the lemma is a consequence of Markov's inequality, which implies
\begin{align*}
\Pr\left[
c(T) - \eta \geqslant \frac{\mu}{\mu-1} \cdot (c^\top y - \eta)
\right]
&\leqslant \frac{\mu-1}{\mu}\enspace.
\end{align*}
Writing $\frac{\mu-1}{\mu}=1-\frac1\mu$, we see that this implies the statement of the lemma.
\end{proof}

Finally, combining \cref{lem:canAlterCloseSol} and \cref{lem:likelyhoodToBeClose}, \cref{thm:neighborhoodGen} now readily follows.

\begin{proof}[Proof of \cref{thm:neighborhoodGen}]%
By choosing $\mu=\sfrac{2\rho(\mathcal{F})}{q}$ in both \cref{lem:canAlterCloseSol} and \cref{lem:likelyhoodToBeClose} we immediately obtain the first part of \cref{thm:neighborhoodGen}.
The bound for the case of equal cardinality set systems is obtained by setting $\mu=\sfrac{\rho(\mathcal{F})}{q}$ in both \cref{lem:canAlterCloseSol} and \cref{lem:likelyhoodToBeClose}.
\end{proof}

\subsection{Further applications of alteration technique}

The presented alteration technique is a rather general approach that is not tightly linked to the \MCCST setting where we applied it.
It may thus be of independent interest.
In particular, it can be used to avoid a multiplicative loss in the objective in several contexts where (randomized) rounding approaches are used.
We briefly mention one such further application.
The following packing result was shown in~\cite{chekuri2009dependent_arxiv}.

\begin{theorem}[Theorem 6.2 in~\cite{chekuri2009dependent_arxiv}]
Let $P$ be the base polytope of a matroid on ground set $N$, let $A\in [0,1]^{m\times N}$, and let $b\in \mathbb{R}^m$.
Then there is a $(1+\varepsilon, \Oh(\sfrac{\log m}{\log \log m}))$-bicriteria approximation for the problem
\begin{equation*}
\min \left\{ c^\top x \;\middle\vert\; x\in \{0,1\}^N,\ x\in P,\ Ax\leqslant b \right\}\enspace,
\end{equation*}
where the first guarantee is w.r.t.~the cost of the solution and the second one w.r.t.~the overflow on the packing constraints.
\end{theorem}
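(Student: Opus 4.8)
The plan is to combine the natural LP relaxation with a negatively correlated rounding scheme of the kind guaranteed by \cref{thm:rounding} (generalized from the spanning-tree polytope to an arbitrary matroid base polytope, as noted in the footnote to that theorem), handling the two criteria separately: the $m$ packing constraints through a Chernoff bound and a union bound, and the cost through Markov's inequality, with a few independent repetitions to boost the overall success probability.

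Assume, as is standard for packing problems, that $c\geqslant 0$. First solve $\min\{c^\top x \mid x\in P,\ Ax\leqslant b\}$ in polynomial time (using rank-oracle separation for $P$ together with the explicitly given packing rows), obtaining an LP optimum $x^\ast$ with $c^\top x^\ast\leqslant\OPT$ and $Ax^\ast\leqslant b$. Then round $x^\ast$ to a random base $B$ with $\Pr[e\in B]=x^\ast_e$ for all $e$ and such that Chernoff-type concentration holds for every linear function with coefficients in $[0,1]$.

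For the packing criterion, fix a row $j$. Then $(A\chi^B)_j$ is a sum of negatively correlated $[0,1]$-random variables with mean $(Ax^\ast)_j\leqslant b_j$, and the multiplicative upper-tail bound $\Pr[X\geqslant k]\leqslant (e\lambda/k)^{k}$, valid for $k\geqslant\lambda\geqslant\expval[X]$, shows that the probability that row $j$ overflows by more than $\Theta(\sfrac{\log m}{\log\log m})$ is at most $m^{-2}$; a union bound over the $m$ rows gives, with probability at least $1-\sfrac{1}{m}$, overflow $\Oh(\sfrac{\log m}{\log\log m})$ on every packing constraint. It is precisely this threshold---larger than $\Oh(1)$ but smaller than $\Oh(\log m)$---that produces the second component of the guarantee, the same phenomenon as in congestion minimization. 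For the cost criterion, $c^\top\chi^B\geqslant 0$ with $\expval[c^\top\chi^B]=c^\top x^\ast\leqslant\OPT$, so Markov's inequality gives $\Pr[c^\top\chi^B\leqslant(1+\varepsilon)\OPT]\geqslant\sfrac{\varepsilon}{1+\varepsilon}$. Hence a single run is both cost-good and overflow-good with probability at least $\sfrac{\varepsilon}{1+\varepsilon}-\sfrac{1}{m}$, a positive constant whenever $m$ is not tiny (for tiny $m$, $\Oh(\sfrac{\log m}{\log\log m})$ is only a matter of adjusting constants); running the rounding $\Oh(\sfrac{1}{\varepsilon}\log|N|)$ times and returning the cheapest base whose overflow on all rows is $\Oh(\sfrac{\log m}{\log\log m})$ then succeeds with high probability.

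The step I expect to be the main obstacle is the packing bound: one must confirm that the negatively correlated rounding scheme really delivers a genuine Chernoff upper tail for $[0,1]$-valued linear functions (this is the content of the generalization referenced after \cref{thm:rounding}) and then calibrate the overflow threshold so the per-row tail stays below $m^{-2}$ uniformly over all rows, including rows with tiny $b_j$, for which no $\Oh(1)$ overflow is possible; $\sfrac{\log m}{\log\log m}$ is exactly the optimum of that calibration, and everything else is routine. Finally, I would record that the $(1+\varepsilon)$ in the cost bound is itself removable using the alteration technique developed in this paper: instead of outputting $B$, output the cheapest base $\overline{B}$ in the $1$-neighborhood of $B$ on the minimal face $P_{\mathcal{F}_{x^\ast}}$; as base families are equal-cardinality systems, \cref{thm:neighborhoodGen} with $q=1$ gives $\Pr[c(\overline{B})\leqslant c^\top x^\ast]\geqslant\sfrac{1}{\rho(\mathcal{F})}$, while $\overline{B}$ differs from $B$ in at most one element and hence changes every row's overflow by at most $1$, so after $\Oh(\rho(\mathcal{F})\log|N|)$ repetitions this upgrades the guarantee to $(1,\Oh(\sfrac{\log m}{\log\log m}))$.
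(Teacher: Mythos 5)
This statement is an external result (Theorem~6.2 of Chekuri--Vondr\'ak--Zenklusen) that the paper cites without proof; the only argument the paper supplies is the one-sentence remark that the bound comes from rounding an LP-optimal point $y\in P$ with a marginal-preserving, negatively correlated procedure, followed by the observation that the alteration step of \cref{thm:neighborhoodGen} removes the $(1+\varepsilon)$ in the cost. Your sketch reconstructs exactly that argument---LP relaxation over $P$ with the packing rows, Chernoff-type tails plus a union bound over the $m$ rows calibrated to the $\Oh(\sfrac{\log m}{\log \log m})$ threshold, Markov's inequality for the cost, and independent repetitions---and your closing paragraph is precisely the paper's own follow-up remark, so the proposal is correct and takes essentially the same approach.
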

The above approximation was obtained by rounding a point $y\in P$ that fulfills $c^\top y \leqslant c(\OPT)$ through a negatively correlated rounding procedure.
Such a procedure preserves marginals, and hence, falls into the setting of our \cref{thm:neighborhoodGen}, which allows for avoiding the loss in the objective, at an additive $+1$ cost in the second objective, which is negligible.
Through this alteration, one obtains a unicriteria $(1,\Oh(\sfrac{\log m}{\log \log m}))$-approximation.

\subsection[Alternative approach to avoid \texorpdfstring{${1+\varepsilon}$}{1+epsilon} loss via techniques of Linhares and Swamy]{\boldmath Alternative approach to avoid ${1+\varepsilon}$ loss via techniques of Linhares and Swamy\unboldmath}\label{sec:linharesSwamy}

We briefly want to highlight that a recently introduced approach of \textcite{linhares_2018_reduction} also allows for avoiding a $(1+\varepsilon)$-factor loss in the objective.
More precisely, they introduced a Lagrangian relaxation based approach to reduce certain bicriteria weighted packing problems to bicriteria unweighted packing problems.
Within this framework, they also show how it can be modified to avoid losses in the objective value under some conditions.
For this they need an LP-based rounding procedure with certain properties.

In the following, we focus on the specific problem of \MCCST to expand further on this approach and how it can be made to work in this context.
For simplicity, consider an \MCCST problem with only upper bounds on the chain constraints, which falls within the setting of packing constraints considered in~\cite{linhares_2018_reduction}.
Let $y\in Q$ be a fractional point as computed in the first step of \cref{alg:MCCST}.
The point $y$ can be interpreted as an optimal LP solution to a linear program on the minimal face of the spanning tree polytope on which $y$ lies, together with upper bounds on the chain constraints of large $y$-value.
This allows for interpreting $y$ as an LP solution as required by the framework of \citeauthor{linhares_2018_reduction}.
Additionally, the framework needs a rounding procedure that both \begin{enumerate*}
\item rounds $y$ to a spanning tree $T$ on the same minimal face of the spanning tree polytope on which $y$ lies, and
\item the spanning tree $T$ needs to satisfy that $|T\cap \delta(S_i)|$ is within a $(1\pm \varepsilon)$-factor of $b_i$ for each chain constraint corresponding to a set $S_i$ for which $y$ is tight, i.e., $y(\delta(S_i))=b_i$.
\end{enumerate*}
Notice that it is important that the rounding does not just return a tree almost fulfilling the chain constraints, but we also need that $y$-tight chain constraints remain nearly-tight after rounding.
Our alteration step does not require such a property, but has other requirements.
Hence, the two techniques are not strictly comparable.

The negatively correlated rounding procedure that we employ fulfills both requirements stated above.
It always rounds to a spanning tree on the same minimal face, because it is marginal-preserving.
Moreover, equation~\eqref{eq:oneChainTGood} shows that the load on chain constraints does not change much.

Finally, we want to mention that the framework in~\cite{linhares_2018_reduction} can also be adjusted to deal with lower bounds in our context of \MCCST.
\section[Extension to MLCST]{Extension to \MLCST}\label{sec:extensionMLCST}

The key ingredient of the approximation algorithm for \MCCST presented in the previous sections is obtaining a $\tau$-integral point $y$ that is feasible for the linear relaxation of the problem.
Ideally, we would like to find such a point in the more general case of \MLCST as well, and then apply \cref{thm:rounding,thm:neighborhood} for rounding and local corrections, as before.

Unfortunately, analyzing the natural generalization of our dynamic programming approach, where we determine partial solutions $y_{\sfc}$ for all connectivity triples $\sfc$ with $S\in\mathcal{L}$ by continuously extending previously obtained solutions, comes with obstacles even if the laminar family $\mathcal{L}$ has constant width.
To highlight some aspects thereof, consider the instance given in \cref{fig:incompatiblePatterns}, where the laminar family $\mathcal{L}$ consists of precisely two sets $S_1$ and $S_2$, and edge costs $c$ are such that edges in $E[S_1]\cup E[S_2]$ have cost $1$, and all other edges have cost $0$.

\begin{figure}[ht]
\centering
\begin{tikzpicture}[]
\small

\pgfdeclarelayer{background}
\pgfsetlayers{background,main}

\begin{scope}[every node/.style={ns}]
\node (u1) at (0,1) {};
\node (v1) at (0,0) {};
\node (w1) at (0,-1) {};
\node (u1') at (1,1) {};
\node (v1') at (1,0) {};
\node (w1') at (1,-1) {};
\node (u2') at (3,1) {};
\node (v2') at (3,0) {};
\node (w2') at (3,-1) {};
\node (u2) at (4,1) {};
\node (v2) at (4,0) {};
\node (w2) at (4,-1) {};
\end{scope}

\begin{scope}[]
\node [above left=-4pt of u1] {$u_1$};
\node [above=-2pt of v1] {$v_1$};
\node [below left=-4pt of w1] {$w_1$};
\node [above right=-4pt of u2] {$u_2$};
\node [below=-1pt of v2] {$v_2$};
\node [below right=-4pt of w2] {$w_2$};

\node [above=-2pt of u1'] {$u_1'$};
\node [above=-2pt of v1'] {$v_1'$};
\node [above=-2pt of w1'] {$w_1'$};
\node [above=-2pt of u2'] {$u_2'$};
\node [above=-2pt of v2'] {$v_2'$};
\node [above=-2pt of w2'] {$w_2'$};
\end{scope}

\begin{scope}[thick]
\draw[line width=2pt] (u1) -- (u1') -- (u2') -- (u2) -- (v2) -- (v2') -- (v1') -- (v1) -- (w1) -- (w1') -- (w2') -- (w2);
\draw [bend right] (u1) to (w1);
\draw [bend left] (u2) to (w2);
\end{scope}

\begin{scope}[rounded corners, thick]
\draw (0.5,1.5) rectangle (-0.7,-1.5) node[left=-2pt] {$S_1$};
\draw (3.5,1.5) rectangle (4.7,-1.5) node[right=-2pt] {$S_2$};
\end{scope}

\begin{pgfonlayer}{background}
\begin{scope}[fill opacity=0.5]
\fill[fill=green!50!brown] (0.5,1) ellipse [x radius=0.5cm,y radius=0.15cm];
\fill[fill=green!50!brown] (0.5,0) ellipse [x radius=0.5cm,y radius=0.15cm];
\fill[fill=green!50!brown] (0.5,-1) ellipse [x radius=0.5cm,y radius=0.15cm];
\fill[fill=red!50!brown] (3.5,1) ellipse [x radius=0.5cm,y radius=0.15cm];
\fill[fill=red!50!brown] (3.5,0) ellipse [x radius=0.5cm,y radius=0.15cm];
\fill[fill=red!50!brown] (3.5,-1) ellipse [x radius=0.5cm,y radius=0.15cm];
\end{scope}
\begin{scope}[rounded corners, thick, fill opacity=0.2]
\filldraw[fill=green!50!brown!50, draw=green!50!black] (0.7, -0.2) rectangle (1.3,1.6);
\filldraw[fill=green!50!brown, draw=green!50!black] (0.7, -1.2) rectangle (1.3,-0.4);
\filldraw[fill=red!50!brown, draw=red!50!black] (2.7, 0.8) rectangle (3.3,1.6);
\filldraw[fill=red!50!brown, draw=red!50!black] (2.7, -1.2) rectangle (3.3,0.6);
\end{scope}
\end{pgfonlayer}

\node[below=5pt of w1', green!50!black] {$\mathcal{C}_1$};
\node[below=5pt of w2', red!50!black] {$\mathcal{C}_2$};

\draw[line width=2pt] (5.1,0.7) -- (5.9,0.7) node[right] {$\OPT$};
\fill[fill=green!50!brown, fill opacity=0.5] (5.5,0) ellipse [x radius=0.4cm,y radius=0.1cm] node[right=0.5cm, black, opacity=1] {$F_1$};
\fill[fill=red!50!brown, fill opacity=0.5] (5.5,-0.7) ellipse [x radius=0.4cm,y radius=0.1cm] node[right=0.5cm, black, opacity=1] {$F_2$};

\end{tikzpicture} %
\caption{Connectivity triples $\sfc[1]$ and $\sfc[2]$ induced by an optimal solution $\OPT$.}
\label{fig:incompatiblePatterns}
\end{figure}
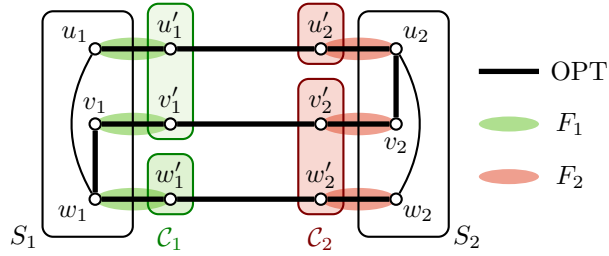

It is easy to see that every minimum spanning tree (and therefore also optimal solution) has cost $2$, as for example the one indicated in bold.
Note that in our dynamic program, we would first construct solutions $y_{(S_i,F,\mathcal{C})}$ compatible with the pattern $(S_i,F,\mathcal{S})$ for all such connectivity patterns, and then try to extend every combination of solutions on $S_1$ and $S_2$ to a global solution.
Typically, such dynamic programming approaches are analyzed by backtracing an optimal solution.
In our example, consider the optimal solution given in \cref{fig:incompatiblePatterns}.
This solution induces the connectivity triples $\sfc[1]$ and $\sfc[2]$ on $S_1$ and $S_2$, respectively, where $F_i=\{\{u_i,u_i'\},\{v_i,v_i'\},\{w_i,w_i'\}\}$ for $i\in\{1,2\}$, $\mathcal{C}_1=\{\{u_1', v_1'\},\{w_1'\}\}$ and $\mathcal{C}_2=\{\{u_2'\}, \{v_2',w_2'\}\}$, as indicated.
Ideally, we would like that the cheapest common extension of $y_{\sfc[1]}$ and $y_{\sfc[2]}$ has cost at most the cost of an optimal solution.
However, note that we could potentially have
\begin{equation*}
y_{\sfc[1]} = \chi^{\{\{u_1,w_1\}\}} \quad\text{and}\quad y_{\sfc[2]} = \chi^{\{\{u_2,w_2\}\}}\enspace,
\end{equation*}
in which case we can easily see that %
there does not even exist a feasible solution that restricts to $y_{\sfc[1]}$ and $y_{\sfc[2]}$ on $S_1$ and $S_2$, respectively.

Note that in the case of \MCCST, where we only had to consider extensions from a single partial solution on a smaller set, the analysis outlined above was enough to obtain our result (see the proof of \cref{lem:completionOk}).
In particular, it was enough to know that the dynamic program considered building a solution along the small cuts and connectivity triples induced by an optimal solution (even though these are not known upfront).
For \MLCST, we deviate from this typical analysis and exploit that the DP considers all potential connectivity triples on the small cuts induced by an optimal solution.
Let us illustrate this in the above example.
We claim that there exist connectivity patterns $\mathcal{C}_1'$ and $\mathcal{C}_2'$ (potentially different from the patterns $\mathcal{C}_1$ and $\mathcal{C}_2$ induced by the optimal solution) such that the best common extension of $y_{(S_1,F_1,\mathcal{C}_1')}$ and $y_{(S_2,F_2,\mathcal{C}_2')}$ has cost at most $2$.
To see this, we proceed iteratively, starting with $\mathcal{C}_1'=\mathcal{C}_1$.
Note that if on $E[S_1]$, we replace the edges of the optimal solution by those of $y_{\sfc[1]}$, the new point is a feasible solution and, by definition of $y_{\sfc[1]}$,  the total cost does not increase.
Observe that the new set of edges induces a different connectivity pattern on $S_2$ than $\OPT$ did, and let this pattern be $\mathcal{C}_2'$ (see \cref{fig:inducingPatterns}).
Now replacing the optimal solution on $E[S_2]$ by $y_{(S_2,F_2,\mathcal{C}_2')}$, which is $\chi^{\{\{u_2,v_2\}\}}$ in our example, we again see that feasibility is guaranteed, and the total cost does again not increase.
To finish the argument, note that we just constructed a common extension of the two partial solutions $y_{(S_1,F_1,\mathcal{C}_1')}$ and $y_{(S_2,F_2,\mathcal{C}_2')}$ of cost no more than the cost of the optimal solution---thus, the best extension will be of cost at most the cost of $\OPT$, proving the desired guarantee.

\begin{figure}[ht]
\centering
\begin{tikzpicture}[]
\small

\pgfdeclarelayer{background}
\pgfsetlayers{background,main}

\begin{scope}[every node/.style={ns}]
\node (u1) at (0,1) {};
\node (v1) at (0,0) {};
\node (w1) at (0,-1) {};
\node (u1') at (1,1) {};
\node (v1') at (1,0) {};
\node (w1') at (1,-1) {};
\node (u2') at (3,1) {};
\node (v2') at (3,0) {};
\node (w2') at (3,-1) {};
\node (u2) at (4,1) {};
\node (v2) at (4,0) {};
\node (w2) at (4,-1) {};
\end{scope}

\begin{scope}[]
\node [above left=-4pt of u1] {$u_1$};
\node [above=-2pt of v1] {$v_1$};
\node [below left=-4pt of w1] {$w_1$};
\node [above right=-4pt of u2] {$u_2$};
\node [below=-1pt of v2] {$v_2$};
\node [below right=-4pt of w2] {$w_2$};

\node [above=-2pt of u1'] {$u_1'$};
\node [above=-2pt of v1'] {$v_1'$};
\node [above=-2pt of w1'] {$w_1'$};
\node [above=-2pt of u2'] {$u_2'$};
\node [above=-2pt of v2'] {$v_2'$};
\node [above=-2pt of w2'] {$w_2'$};
\end{scope}

\begin{scope}[thick]
\draw[line width=2pt] (u1) -- (u1') -- (u2') -- (u2) -- (v2) -- (v2') -- (v1') -- (v1);
\draw[line width=2pt, dashed, gray] (v1) -- (w1);
\draw[line width=2pt] (w1) -- (w1') -- (w2') -- (w2);
\draw [bend right, line width=2pt, blue] (u1) to (w1);
\draw [bend left] (u2) to (w2);
\end{scope}

\begin{scope}[rounded corners, thick]
\draw (0.5,1.5) rectangle (-0.6,-1.5) node[left=-2pt] {$S_1$};
\draw (3.5,1.5) rectangle (4.6,-1.5) node[right=-2pt] {$S_2$};
\end{scope}

\begin{pgfonlayer}{background}
\begin{scope}[fill opacity=0.5]
\fill[fill=green!50!brown] (0.5,1) ellipse [x radius=0.5cm,y radius=0.15cm];
\fill[fill=green!50!brown] (0.5,0) ellipse [x radius=0.5cm,y radius=0.15cm];
\fill[fill=green!50!brown] (0.5,-1) ellipse [x radius=0.5cm,y radius=0.15cm];
\fill[fill=red!50!brown] (3.5,1) ellipse [x radius=0.5cm,y radius=0.15cm];
\fill[fill=red!50!brown] (3.5,0) ellipse [x radius=0.5cm,y radius=0.15cm];
\fill[fill=red!50!brown] (3.5,-1) ellipse [x radius=0.5cm,y radius=0.15cm];
\end{scope}
\begin{scope}[rounded corners, thick, fill opacity=0.2]
\filldraw[fill=green!50!brown!50, draw=green!50!black] (0.7, -0.2) rectangle (1.3,1.6);
\filldraw[fill=green!50!brown, draw=green!50!black] (0.7, -1.2) rectangle (1.3,-0.4);
\filldraw[fill=red!50!brown, draw=red!50!black] (2.3,0.8) -- (2.3,-0.4) -- (2.7,-1.2) -- (3.3,-1.2) -- (3.3,-0.4) -- (2.5,-0.4) -- (2.5, 0.8) -- (3.3,0.8) -- (3.3,1.6) -- (2.7,1.6) -- cycle;
\filldraw[fill=red!50!brown, draw=red!50!black] (2.7, -0.2) rectangle (3.3,0.6);
\end{scope}
\end{pgfonlayer}

\node[below=5pt of w1', green!50!black] {$\mathcal{C}_1$};
\node[below=5pt of w2', red!50!black] {$\mathcal{C}_2'$};

\draw[line width=2pt, blue] (4.9,0.7) -- (5.7,0.7) node[below right=-6pt and 0.1cm] {$y_{\sfc[1]}$};%
\fill[fill=green!50!brown, fill opacity=0.5] (5.3,0) ellipse [x radius=0.4cm,y radius=0.1cm] node[right=0.5cm, black, opacity=1] {$F_1$};
\fill[fill=red!50!brown, fill opacity=0.5] (5.3,-0.7) ellipse [x radius=0.4cm,y radius=0.1cm] node[right=0.5cm, black, opacity=1] {$F_2$};

\end{tikzpicture}
\caption{Patching $y_{\sfc[1]}$ induces $\mathcal{C}_2'$.}
\label{fig:inducingPatterns}
\end{figure}

The above idea of iteratively defining suitable connectivity patterns can be generalized to an arbitrary number of sibling sets $S_1,\ldots, S_w$, and is crucial in the analysis of the propagation step of our dynamic program.
For now, we want to highlight that replacing parts of an optimal solution by a previously obtained partial solution and then inducing new connectivity patterns on other parts requires the partial solutions to be integral, as there is no notion of induced connectivity patterns for fractional solutions.
In our extension steps, however, we find common extensions of partial solutions through a linear program similar to~\eqref{eq:lpCompDP}, which will in general not be integral.
For this reason, we apply the rounding and local correction methods presented in \cref{thm:rounding,thm:neighborhood} after every single extension step, giving integral solutions at every stage and thus allowing for inducing connectivity patterns iteratively as defined above.

The example from \cref{fig:incompatiblePatterns} discussed earlier might seem contrived due to the fact that the highlighted problems could be avoided by breaking ties in the right way, for example by choosing different optimal partial solutions $y_{\sfc[1]}$ and $y_{\sfc[2]}$.
Even though this is the case here, there exist more complex examples exhibiting the same issue that do not have any tie-breaking options.
One such example is presented in \cref{sec:exampleBacktracingOPT}.

It is important to also recall that \cref{thm:rounding,thm:neighborhood} provide guarantees on constraint violation and cost of the integral solution obtained through rounding only with certain probabilities.
For a more concise analysis, our algorithm will at each step apply the rounding and local correction operations repeatedly until---in expected polynomial running time---an integral solution with the desired properties is found.
More formally, we obtain a Las Vegas algorithm with the following guarantees.

\begin{theorem}\label{thm:MLCST_randRunningTime}
For every $\varepsilon > 0$, there is a $(1,1+\varepsilon)$-approximation algorithm for \MLCST with expected running time $|V|^{\Oh(\sfrac{k\log |V|}{\varepsilon^2})}$, where $k$ is the width of the laminar family $\mathcal{L}$.
\end{theorem}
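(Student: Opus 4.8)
The plan is to follow the blueprint of the proof of \cref{thm:MCCST}, replacing the chain-based dynamic program by one that jumps between $\tau$-small cuts along the Hasse forest of $\mathcal{L}$, and --- this is the essential twist foreshadowed in \cref{sec:extensionMLCST} --- interleaving the rounding of \cref{thm:rounding} and the local correction of \cref{thm:neighborhood} into \emph{every} propagation step, so that every partial solution stored by the dynamic program is integral. I would fix $\tau = \Theta(\sfrac{\log|V|}{\varepsilon^2})$ as in \cref{alg:MCCST}. A state of the dynamic program is a configuration of the current frontier: for each maximal already-absorbed cut $S$ a connectivity triple $(S,F,\mathcal{C})\in\mathcal{K}$, together with an integral partial solution on $E[S]$, i.e., a spanning tree of the auxiliary graph $G(S,F,\mathcal{C})$, that is $\tau$-integral on all of $S$'s descendant cuts in $\mathcal{L}$, satisfies the $\mathcal{L}$-bounds exactly on the $\tau$-small ones, and is within a $(1\pm\sfrac{\varepsilon}{2})$-factor of its fractional value on the $\tau$-large ones. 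Since $\mathcal{L}$ has width $k$, the frontier consists of at most $k$ pairwise disjoint cuts, and since there are only $|V|^{\Oh(\tau)}$ connectivity triples on any single cut (as in \cref{prop:sizeK}), the number of states --- and of transitions between them --- is $|V|^{\Oh(k\tau)} = |V|^{\Oh(\sfrac{k\log|V|}{\varepsilon^2})}$.

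A propagation step picks the next $\tau$-small cut $S'$ --- an ancestor of at most $k$ of the current frontier cuts, which it subsumes --- enumerates a connectivity triple $(S',\overline F,\overline{\mathcal{C}})$, solves a linear program of the form~\eqref{eq:lpCompDP} that fills the region of $E[S']$ not covered by the subsumed partial solutions while enforcing a load of at least $\tau+1$ and the $\mathcal{L}$-bounds on the $\tau$-large cuts strictly between, and then rounds and locally corrects the resulting fractional extension exactly as in \cref{algostep:round,algostep:alterate} of \cref{alg:MCCST}. As in the proof of \cref{thm:MCCST}, marginal-preserving rounding keeps the $\tau$-small cuts exact, while \cref{thm:rounding,thm:neighborhood} control the cost and the $\tau$-large cuts. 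A union bound over the $\Oh(|V|)$ cuts of $\mathcal{L}$ shows that a single attempt succeeds with probability $\sfrac{1}{\poly(|V|)}$; repeating each step until success turns the algorithm into a Las Vegas algorithm, and since there are only $\Oh(|V|)$ propagation steps the expected running time remains $|V|^{\Oh(\sfrac{k\log|V|}{\varepsilon^2})}$.

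The heart of the proof is the cost bound $c(T)\le c(\OPT)$, and --- as \cref{fig:incompatiblePatterns} and the example in \cref{sec:exampleBacktracingOPT} show --- one cannot simply backtrack $\OPT$. Instead I would perform an iterative surgery on $\OPT$, bottom-up along the laminar forest: on reaching a cut $S$ all of whose descendants have already been rewritten, let $\mathcal{C}_S$ be the connectivity pattern that the \emph{current}, partially rewritten edge set induces on the right endpoints of $\OPT\cap\delta(S)$, and replace the edges of the current solution inside $E[S]$ by the integral partial solution that the dynamic program stores for the triple $(S,\OPT\cap\delta(S),\mathcal{C}_S)$. Such a replacement keeps the edge set a spanning tree with unchanged crossing edges at $S$ and at all of $S$'s descendant cuts, hence preserves feasibility of every $\mathcal{L}$-constraint inside $S$; and by the optimality guaranteed for the stored partial solution it does not increase the cost. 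After $S=V$ is processed we have exhibited a spanning tree of cost at most $c(\OPT)$ that is a common extension of precisely the partial solutions the dynamic program stores, along the sequence of patterns $(\mathcal{C}_S)_{S\in\mathcal{L}}$, so the value the dynamic program finally obtains is at most $c(\OPT)$; combined with the per-step rounding guarantees, which combine to a final $(\frac{1}{1+\varepsilon},1+\varepsilon)$ bound exactly as in the proof of \cref{thm:MCCST}, this yields the theorem.

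I expect the surgery step to be the main obstacle. One must check, for an arbitrary width-$k$ laminar family and not just the two-cut picture of \cref{fig:inducingPatterns}, that after rewriting $\OPT$ inside a subtree of $\mathcal{L}$ the connectivity pattern induced on the next cut up is well-defined, that the dynamic program indeed stores a partial solution for that pattern, and that its cost is at most that of $\OPT$'s edges there; and one must verify that performing the rounding and local correction after \emph{every} extension step does not destroy the integrality and bookkeeping invariants that make the next surgery step legal. Packaging these invariants so that the induction over the laminar forest goes through cleanly, while keeping the constraint-violation accounting for $\tau$-small versus $\tau$-large cuts tight --- so that each constraint is charged a $(1\pm\sfrac{\varepsilon}{2})$ error by at most one step --- is the technical core of the argument.
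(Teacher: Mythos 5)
Your proposal matches the paper's proof in all essentials: the same DP over connectivity triples with the rounding of \cref{thm:rounding} and the local correction of \cref{thm:neighborhood} performed inside every extension step so that all stored partial solutions are integral, the same iterative rewriting of a competitor solution in which each connectivity pattern is the one induced by the \emph{partially rewritten} tree rather than by $\OPT$ (this is exactly the content of \cref{lem:treesSatisfyProperty}, where the paper carries out your ``surgery'' one level at a time and delegates deeper levels to the induction hypothesis), and the same $|\mathcal{K}|^{k+1}$ accounting of LP calls. The invariants you flag as the technical core are precisely those packaged in \cref{prop:laminarDP}, and they do go through as you anticipate.
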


Note that any Las Vegas algorithm can easily be transformed into a randomized approximation algorithm, i.e, with deterministic polynomial running time and where the returned solution has the desired properties with high probability: By Markov's inequality, the probability that the running time of a single run of the algorithm guaranteed by \cref{thm:MLCST_randRunningTime} is less that twice the expected running time is at least $\sfrac12$.
Consequently, among $\log_2 |V|$ many independent runs, the probability that at least one run succeeds is at least $1-\sfrac1{|V|}$.
Thus, \cref{thm:MLCST} stated in the introduction is implied by \cref{thm:MLCST_randRunningTime}.

In the following two sections, we present the modifications of our dynamic programming approach for the laminar case in detail, and we expand on the ideas highlighted above, leading to a proof of \cref{thm:MLCST_randRunningTime}.%

\subsection{Dynamic programming in the laminar case}

For a formal description of the dynamic programming algorithm for \MLCST, we stick to the notation defined for \MCCST, now of course considering the laminar family $\mathcal{L}$ instead of the chain $S_1\subsetneq\ldots\subsetneq S_k$.
We denote by $\mathcal{K}$ the set of all connectivity triples $\sfc$ where $S\in\mathcal{L}\cup\{\emptyset,V\}$, with the crossing edges $F\subseteq\delta(S)$ satisfying $a_S\leqslant |F|\leqslant \min\{\tau, b_S\}$ for $S\in\mathcal{L}$, where the parameter $\tau$ is the maximal number of edges that the dynamic program ``guesses'' in small cuts.
Like in \MCCST, we again choose $\tau=\Oh(\sfrac{\log|V|}{\varepsilon^2})$.

For every $\sfc\in\mathcal{K}$, we use our dynamic program to determine a partial solution $T_{\sfc}\subseteq E[S]$ with the following properties.
For $S\in\mathcal{L}$, we let $\mathcal{L}_{S}\coloneqq\{S'\in\mathcal{L}\mid S'\subsetneq S\}$.

\needspace{5\baselineskip}
\begin{property}\label{prop:laminarDP}\leavevmode
\begin{enumerate}
\item\label{item:T_SFC_tree} $T_{\sfc}$ is a spanning tree of $G\sfc$.
\item\label{item:T_SFC_cutConstraints} $(1-\varepsilon)\cdot a_{S'}\leqslant |(T_{\sfc}\cup F)\cap\delta(S')| \leqslant (1+\varepsilon)\cdot b_{S'}$ for all $S'\in\mathcal{L}_S$.
\item\label{item:T_SFC_cost} $c(T_{\sfc})$ is at most the cost of a cheapest edge set $U\subseteq E[S]$ that forms a spanning tree of $G\sfc$ and satisfies $a_{S'}\leqslant |(U\cup F)\cap\delta(S')| \leqslant b_{S'}$ for all $S'\in\mathcal{L}_S$.
\end{enumerate}
\end{property}

It is clear that if, in expected running time $|V|^{\Oh(\sfrac{k\log |V|}{\varepsilon^2})}$, we can obtain trees $T_{\sfc}$ with the above properties, then \cref{thm:MLCST_randRunningTime} follows because the tree $T\coloneqq T_{(V,\emptyset,\emptyset)}$ has precisely the desired properties: \cref{item:T_SFC_tree,item:T_SFC_cutConstraints} state that $T$ is a spanning tree of $G$ violating the cut constraints by a factor of at most $(1\pm\varepsilon)$, and by \cref{item:T_SFC_cost}, $c(T)$ is at most the cost of an optimal solution.

To compute all $T_{\sfc}$, we initialize $T_{(\emptyset, \emptyset,\{\emptyset\})} = \emptyset$, and then propagate to $T_{\sfc}$ for all $\sfc\in\mathcal{K}$ in an order such that for all $\sfc,\sfcprime\in\mathcal{K}$, if $S'\subsetneq S$, then $T_{\sfcprime}$ is computed before $T_{\sfc}$.
A new tree $T_{\sfcbar}$ for $\sfcbar\in\mathcal{K}$ is obtained in two steps: First, for all choices of connectivity triples $\sfc[1],\ldots,\sfc[w]$ such that $S_1,\ldots,S_w\in\mathcal{L}_{\overline{S}}$ and $S_1,\ldots,S_w$ have pairwise empty intersections, we extend the partial solutions $T_{\sfc[j]}$ for $j\in[w]$ to a solution $T$ on $\sfcbar$.
In a second step, we find the best solution among all extensions obtained this way, and keep it as $T_{\sfcbar}$.
The full propagation procedure is summarized in \cref{alg:propagationLaminarDP}, and details of a single extension step are described in \cref{alg:extensionLaminarDP}.

\begin{algorithm2e}[H]
\caption{Extending partial solutions $T_{\sfc[1]}, \ldots, T_{\sfc[w]}$ to $\sfcbar$.\strut}\label{alg:extensionLaminarDP}
\begin{algostepsarabic}

\item\label{algostep:solveLP_laminar} Let $\tau\coloneqq\lfloor \sfrac{96\ln(2|V|)}{\varepsilon^2}\rfloor$, and let $y$ be a minimizer of\strut
\begin{align*}
\min\quad c^\top z \tag{lamExLP}\label{eq:lamExLP} \\
z &\in\PST\sfcbar\\
\max\{\tau+1,a_{S'}\}\leqslant z(\delta(S'))&+|F\cap\delta(S')| \leqslant b_{S'} && \forall S'\in\mathcal{L}\colon \exists j\in[w]\text{ with }S_j\subsetneq S'\subsetneq S\\
z(e) &= \chi^{T_{\sfc[j]}}(e) && \forall e \in E[S_j],\ \forall j\in[w]\\
z(e) &= \chi^{F_j}(e) && \forall e \in \delta(S_j),\ \forall j\in[w]\enspace.
\end{align*}

\item\label{algostep:roundY_laminar} Randomly round $y$ with a rounding procedure as guaranteed by \cref{thm:rounding} to obtain a spanning tree $T_0$ of $G\sfcbar$.

\item\label{algostep:alterateTree_laminar} Find a minimum cost spanning tree $T$ among all spanning trees of $G\sfcbar$ with $|T\symdiff T_0|\leqslant 2$ and such that $y(e)\in(0,1)$ for all $e\in T\symdiff T_0$.

\item\label{algostep:checkIterate_laminar} If $T$ satisfies
\begin{equation*}
(1-\varepsilon)\cdot a_{S'}\leqslant |(T_{\sfcbar}\cup F)\cap\delta(S')| \leqslant (1+\varepsilon)\cdot b_{S'}
\end{equation*}
for all $S'\in\mathcal{L}_{\overline S}$, and $c(T)\leqslant c^\top y$, output $T$.
Else, repeat from \cref{algostep:roundY_laminar}.
\end{algostepsarabic}
\end{algorithm2e}

Let us expand on the nature of the extension step given in \cref{alg:extensionLaminarDP}.
The purpose of the linear program~\eqref{eq:lamExLP} is to find a common extension $y$ of the trees $T_{\sfc[j]}$ for $j\in[w]$ that uses precisely the edges $F_j$ in the cuts $\delta(S_j)$, and is left-compatible with $\sfcbar$.
Note that the last condition appears in~\eqref{eq:lamExLP} as the constraint $z\in\PST\sfcbar$.
Additionally, we require that the partial solution $y$ together with the edges in $\overline F$ satisfy the cut constraints on cuts $S'\in\mathcal{L}$ with $S_i\subsetneq S'\subsetneq S$, with a load of at least $\tau+1$ on all those cuts.
Recall that the latter comes from the idea of finding $\tau$-integral fractional solutions (which is what we need to control cut sizes in the rounding procedure) and letting the dynamic program try all combinations of maximal small cuts $S_1,\ldots,S_w\subsetneq S$.

Once a fractional extension $y$ is found, we use a rounding procedure as guaranteed by \cref{thm:rounding} to round it to an integral solution, namely a spanning tree $T_0$.
Note that by definition, $y$ coincides with $T_{\sfc[j]}$ on $E[S_j]$ for all $j\in[w]$.
As the rounding scheme is marginal-preserving (\cref{item:marginals} in \cref{thm:rounding}), it follows that $T_0$ will coincide with these partial solutions as well, thus inheriting their properties.
On $E[S\setminus\bigcup_{j=1}^w S_j]$, the property in \cref{item:chernoff} will make sure that all cut constraints are satisfied up to small multiplicative errors.
In \cref{algostep:alterateTree_laminar}, we exploit the exchange steps described in \cref{thm:neighborhood} to regain potential loss in the objective that may have occurred in \cref{algostep:roundY_laminar} compared to $c^\top y$.
Both \cref{algostep:roundY_laminar,algostep:alterateTree_laminar} can fail with certain probabilities, hence we repeatedly apply them until an extension with the properties listed in \cref{algostep:checkIterate_laminar} is found.

We remark that the linear program~\eqref{eq:lamExLP} might be infeasible for several reasons (impossibility of completing the edge sets $T_{\sfc[j]}$ to a point in $\PST\sfcbar$, infeasibility of the lower bound constraints on cuts, inconsistencies among the edge sets $F_j$, loops generated by edges in $F_j$, etc.).
In such a case, we interpret the cost of a common integral extension $T$ to be $\infty$, which avoids using such extensions later on.

\begin{algorithm2e}[H]
\caption{Propagation to $T_{\sfcbar}$ from all partial solutions $T_{\sfcprime}$ with $S'\subsetneq \overline S$.\strut}\label{alg:propagationLaminarDP}
\begin{algostepsarabic}
\item For every choice of connectivity triples $\sfc[j]\in\mathcal{K}'$\strut{} for $j\in[w]$ where $S_1, \ldots, S_w \in\mathcal{L}$ are strict subsets of $\overline S$ with pairwise empty intersections, apply \cref{alg:extensionLaminarDP} to extend the trees $T_{\sfc[1]},\ldots,T_{\sfc[w]}$ to $\sfcbar$.
Let $\mathcal{T}$ be the set of all trees obtained this way.
\item Return $T_{\sfcbar}\in\argmin_{T\in\mathcal{T}}c(T)$.
\end{algostepsarabic}
\end{algorithm2e}

\cref{alg:propagationLaminarDP} considers all potential candidates for $T_{\sfcbar}$ that were obtained through extension steps, and returns one of minimum cost.

\subsection{Analyzing the DP}

We first show that the trees $T_{\sfcbar}$ computed by \cref{alg:extensionLaminarDP} satisfy \cref{prop:laminarDP}.
Formally, we prove this statement by induction.
Obviously, the point $T_{(\emptyset,\emptyset,\{\emptyset\})}$ has all desired properties.
The induction step to complete the proof is captured by the following lemma.

\begin{lemma}\label{lem:treesSatisfyProperty}
Let $\sfcbar\in\mathcal{K}$ with $\overline S\neq\emptyset$.
Assume that for all $\sfcprime\in\mathcal{K}$ with $S'\subsetneq \overline S$, we are given $T_{\sfcprime}$ satisfying \cref{prop:laminarDP}, and let $T_{\sfcbar}$ be obtained from \cref{alg:propagationLaminarDP}.
Then $T_{\sfcbar}$ satisfies \cref{prop:laminarDP}, as well.
\end{lemma}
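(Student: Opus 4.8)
The plan is to prove \cref{lem:treesSatisfyProperty} by checking the three parts of \cref{prop:laminarDP} for the tree $T_{\sfc}$ produced by \cref{alg:propagationLaminarDP}, using as inductive hypothesis that every $T_{\sfcprime}$ with $S'\subsetneq S$ already satisfies \cref{prop:laminarDP}. Parts \cref{item:T_SFC_tree,item:T_SFC_cutConstraints} are the easy ones: every tree placed into the candidate set $\mathcal{T}$ is returned by \cref{alg:extensionLaminarDP}, and by construction that tree passed the test in \cref{algostep:checkIterate_laminar}, so it is a spanning tree of $G\sfc$ obeying the relaxed cut bounds. Since $T_{\sfc}$ is one of these trees, it inherits both properties. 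What remains to be checked along the way is that the run of \cref{alg:extensionLaminarDP} we use for the cost bound below actually terminates (in expected polynomial time), so that it contributes a tree to $\mathcal{T}$; I return to this at the end.

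The real content is part \cref{item:T_SFC_cost}. Fix a cheapest spanning tree $U^*\subseteq E[S]$ of $G\sfc$ with $a_{S'}\le |(U^*\cup F)\cap\delta(S')|\le b_{S'}$ for all $S'\in\mathcal{L}_S$ (if there is none, the claim is vacuous). Pick a right-compatible set $R$ for $\sfc$ and set $T^*\coloneqq U^*\cup F\cup R$, a spanning tree of $(V,E\cup R)$. Let $S_1,\dots,S_w$ be the inclusion-maximal members of $\mathcal{L}_S$ with $|T^*\cap\delta(S_i)|\le\tau$ (the ``$T^*$-small'' cuts); by laminarity together with maximality they are pairwise disjoint, and each triple $\sfc[i]$ with $F_i\coloneqq T^*\cap\delta(S_i)$ lies in $\mathcal{K}$, since $a_{S_i}\le |F_i|\le\min\{\tau,b_{S_i}\}$. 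I will show that the extension step of \cref{alg:extensionLaminarDP} run on these sibling sets, with suitably chosen connectivity patterns, yields a tree of cost at most $c(U^*)$, which forces $c(T_{\sfc})\le c(U^*)$.

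The heart of the argument is an iterative ``patching'' construction that replaces the parts of $U^*$ inside the $E[S_j]$ one at a time, recomputing the needed connectivity pattern on the fly. Start with $\tilde U_0\coloneqq U^*$; given $\tilde U_{j-1}$ with $\tilde T_{j-1}\coloneqq\tilde U_{j-1}\cup F\cup R$ a spanning tree of $(V,E\cup R)$ that still agrees with $U^*$ on $E[S_i]$ for all $i\ge j$, let $\mathcal{C}_j'$ be the partition induced on the outer endpoints of $F_j$ by the components of $\tilde T_{j-1}\cap E[V\setminus S_j]$. Because $\tilde U_{j-1}\cap E[S_j] = U^*\cap E[S_j]$, this set is left-compatible with $(S_j,F_j,\mathcal{C}_j')$ (left-compatibility does not depend on the realizing right-compatible completion, and $\tilde T_{j-1}\cap E[V\setminus S_j]$ is such a completion), and it satisfies the exact cut bounds on $\mathcal{L}_{S_j}$; hence by the inductive hypothesis $c(T_{(S_j,F_j,\mathcal{C}_j')})\le c(U^*\cap E[S_j])$. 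Set $\tilde U_j\coloneqq(\tilde U_{j-1}\setminus E[S_j])\cup T_{(S_j,F_j,\mathcal{C}_j')}$. The crucial step — a laminar analogue of \cref{lem:completionOk} — is that swapping one left-compatible interior for another keeps $\tilde T_j\coloneqq\tilde U_j\cup F\cup R$ a spanning tree of $(V,E\cup R)$. After all $w$ steps, $\tilde U\coloneqq\tilde U_w$ agrees with $T_{(S_j,F_j,\mathcal{C}_j')}$ on each $E[S_j]$, uses exactly $F_j$ across each $S_j$, and — since every swap touches only edges inside some $E[S_j]$, which are disjoint from $\delta(S')$ for each $S'\in\mathcal{L}_S$ not contained in any $S_j$ — has $|(\tilde U\cup F)\cap\delta(S')| = |T^*\cap\delta(S')|\ge\tau+1$ for every such $S'$ (these are $T^*$-large by maximality of the $S_i$) and $|(\tilde U\cup F)\cap\delta(S')|\in[a_{S'},b_{S'}]$. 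Thus $\chi^{\tilde U}$, restricted to $E[S]$, is feasible for the linear program~\eqref{eq:lamExLP} associated with the triples $\sfc[1],\dots,\sfc[w]$ (with patterns $\mathcal{C}_j'$), and $c^\top\chi^{\tilde U} = \sum_j c(T_{(S_j,F_j,\mathcal{C}_j')}) + c\bigl(U^*\setminus\bigcup_j E[S_j]\bigr)\le c(U^*)$. Hence the minimizer $y$ of that LP has $c^\top y\le c(U^*)$, and the ensuing rounding and exchange steps output a tree $T$ with $c(T)\le c^\top y\le c(U^*)$ that passes the check; since $T\in\mathcal{T}$, we get $c(T_{\sfc})\le c(U^*)$, as required.

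It remains to see that the inner loop of this particular run of \cref{alg:extensionLaminarDP} terminates in expected polynomial time, so the tree $T$ above actually exists. For this choice of sibling sets the LP solution $y$ is $\tau$-integral (with respect to $y+\chi^F$) on all of $\mathcal{L}_S$: on cuts contained in some $S_i$ it is forced integral by the equality constraints of~\eqref{eq:lamExLP}, and on the remaining cuts of $\mathcal{L}_S$ it has $(y+\chi^F)$-value at least $\tau+1$. Then, exactly as in the proof of \cref{thm:MCCST}, combining the marginal-preservation and Chernoff guarantees of \cref{thm:rounding} with \cref{thm:neighborhood} and a union bound, a single round-and-correct iteration succeeds with probability bounded below by a fixed inverse polynomial in $|V|$, so the loop finishes after an expected polynomial number of iterations. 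I expect the main obstacle to be cleanly formulating and proving the laminar version of \cref{lem:completionOk} that justifies the patching swaps, together with the bookkeeping needed to see that each swap leaves unchanged both the left-compatibility of the unmodified parts and the load on every cut of $\mathcal{L}_S$ not strictly inside one of the $S_j$; the rest is careful tracking of which edge sets and cuts are affected.
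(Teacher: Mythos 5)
Your proposal is correct and follows essentially the same route as the paper: items (i) and (ii) via the acceptance test of the extension step, and item (iii) via the iterative patching of $U^*$ inside the maximal $T^*$-small cuts with connectivity patterns re-induced on the fly, which is exactly the paper's construction in~\eqref{eq:iterativeC}. The "laminar analogue of \cref{lem:completionOk}" you flag is proved in the paper precisely by the left-/right-compatibility argument you sketch, and the termination of the round-and-correct loop is handled there in a separate lemma rather than inside this proof.
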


\begin{proof}
Let $\mathcal{T}$ be defined as in \cref{alg:propagationLaminarDP}, namely the set of all $T\subseteq E[\overline S]$ that were obtained through \cref{alg:extensionLaminarDP}.
We already remarked earlier that any such $T$ is a spanning tree of the corresponding graph $G\sfcbar$, hence satisfying \cref{item:T_SFC_tree} in \cref{prop:laminarDP}.
 For \cref{item:T_SFC_cutConstraints}, we prove that every $T\in\mathcal{T}$ satisfies
\begin{equation}\label{eq:toProveForII}
(1-\varepsilon)\cdot a_{S'}\leqslant |(T\cup \overline F)\cap\delta(S')| \leqslant (1+\varepsilon)\cdot b_{S'}
\end{equation}
for all $S'\in\mathcal{L}_{\overline S}$.
Indeed, for cuts $S'$ with $S'\subsetneq S_j$ for some $j\in [w]$, we have $(T\cup F)\cap\delta(S') = (T_{\sfc[j]}\cup F_j)\cap\delta(S')$, hence~\eqref{eq:toProveForII} follows from the assumption that $T_{\sfc[j]}$ has \cref{prop:laminarDP}.
If $S'=S_j$ for some $j\in[w]$, then $(T\cup \overline F)\cap\delta(S')=F_j$, and we have $a_{S'}\leqslant |F_j| \leqslant b_{S'}$ by definition of the connectivity pattern $\sfc[j]$.
Finally, if $S_j\subsetneq S'\subsetneq \overline S$, then~\eqref{eq:toProveForII} is guaranteed by \cref{algostep:checkIterate_laminar} in \cref{alg:extensionLaminarDP}.

To see that point \cref{item:T_SFC_cost} of \cref{prop:laminarDP} holds, fix an edge set $U\subseteq E[\overline S]$ that forms a spanning tree of $G\sfcbar$ and satisfies $a_{S'}\leqslant |(U\cup \overline F)\cap\delta(S')| \leqslant b_{S'}$ for all $S'\in\mathcal{L}$ with $S'\subsetneq \overline S$.
We have to show that $c(T_{\sfcbar})\leqslant c(U)$.
As a first step, consider any $T\in\mathcal{T}$ and let $y$ be the solution of~\eqref{eq:lamExLP} that was used to obtain $T$.
By \cref{algostep:checkIterate_laminar} in \cref{alg:extensionLaminarDP}, we have $c(T)\leqslant c^\top y$.
It is thus enough to see that one of the linear programs~\eqref{eq:lamExLP} considered while propagating to $\sfcbar$ has a solution $y$ with $c(U)\geqslant c^\top y$.

To this end, let $R\subseteq\binom{V\setminus \overline S}{2}$ be any set of edges that is right-compatible with $\sfcbar$.
Then, $T\coloneqq U\cup \overline F\cup R$ is in the spanning tree polytope of $(V,E\cup R)$.
Let $S_1,\ldots,S_w\subsetneq S$ be the maximal $\chi^T$-small cuts in $\mathcal{L}_{\overline S}$, and let $F_j=T\cap\delta(S_j)$.
We define connectivity patterns $\mathcal{C}_1,\ldots,\mathcal{C}_w$ such that $\sfc[j]\in\mathcal{K}$ iteratively as follows, where $T_0\coloneqq T$ and $j\in[w]$:
\begin{align}\label{eq:iterativeC}
\text{Let $\mathcal{C}_{j}$ s.t.~$T_{j-1}$ is compatible with $\sfc[j]$, and let $T_j\coloneqq \big(T_{j-1}\setminus E[S_j]\big)\cup T_{\sfc[j]}$.}
\end{align}

First of all, observe that all edge sets $T_j$ are indeed spanning trees, making the above operation well-defined.
To see this, we proceed inductively and assume that $T_{j-1}$ is a spanning tree.
Compatibility of $T_{j-1}$ with $\sfc[j]$ implies that $T_{j-1}\cap E[V\setminus S_j]$ is right-compatible with $\sfc[j]$, while $T_{\sfc[j]}$ is left-compatible with $\sfc[j]$, by assumption.
Thus, $(T_{j-1}\cap E[V\setminus S_j])\cup F_j \cup T_{\sfc[j]} = (T_{j-1}\setminus E[S_j])\cup T_{\sfc[j]}=T_j$ is indeed a spanning tree.
We claim that the construction in~\eqref{eq:iterativeC} leads to a tree $T_w$ with the properties
\begin{enumerate}[label=(\alph*), itemsep=.5ex]
\item\label{item:TwCost} $c(T)\geqslant c(T_w)$, and
\item\label{item:TwFeasibility} $T_w\cap E[\overline S]$ is feasible for~\eqref{eq:lamExLP} when extending from $T_{\sfc[1]},\ldots,T_{\sfc[w]}$ to $\sfcbar$.
\end{enumerate}
These two properties are enough to conclude.
As $T$ and $T_w$ are identical outside of $E[\overline S]$, \cref{item:TwCost} implies that
\[ c(U)=c(T\cap E[\overline S]) \geqslant c(T_w\cap E[\overline S])\enspace. \]
Moreover, \cref{item:TwFeasibility} implies that $c(T_w\cap E[\overline S])\geqslant c^\top y$, which together with the previous inequality gives the desired $c(U)\geqslant c^\top y$.

To see \cref{item:TwCost}, we show that for all $j\in[w]$, we have $c(T_{j-1})\geqslant c(T_j)$.
By definition of $T_j$, the latter is equivalent to $c(T_{j-1}\cap E[S_j])\geqslant c(T_j\cap E[S_j])$.
But by construction, $T_{j-1}\cap E[S_j]=U\cap E[S_j]$, and $T_j\cap E[S_j]=T_{\sfc[j]}$.
Note that $U\cap E[S_j]$ is an integral solution of the subproblem on $G\sfc[j]$, and hence \cref{item:T_SFC_cost} of \cref{prop:laminarDP} for $T_{\sfc[j]}$ implies $c(U\cap E[S_j])\geqslant c(T_{\sfc[j]})$, which is precisely what we need.
Consequently, we have $c(T_{j-1})\geqslant c(T_j)$ for all $j\in[w]$, and hence
\[ c(T) = c(T_0) \geqslant c(T_1) \geqslant\ldots\geqslant c(T_w)\enspace, \]
as desired.
For \cref{item:TwFeasibility}, we check that the constraints in~\eqref{eq:lamExLP} hold for $T_w\cap E[\overline S]$, i.e., for all $S'\in\mathcal{L}$ such that there is an $j\in[w]$ with $S_j\subsetneq S'\subsetneq \overline S$, we prove
\begin{equation}\label{eq:cutConstraintForTw}
\max\{\tau+1, a_{S'}\} \leqslant |(T_w\cap E[\overline S])\cap\delta(S')| + |\overline F\cap\delta(S')| \leqslant b_{S'}\enspace.
\end{equation}
We have $|(T_w\cap E[\overline S])\cap\delta(S')| + |\overline F\cap\delta(S')| = |T_w\cap \delta(S')| = |(U\cup \overline F)\cap\delta(S')|$, where the last equality follows from the construction of $T_w$.
Thus, the lower bound $a_{S'}$ and the upper bound $b_{S'}$ are implied by the assumption on $U$.
Moreover, the lower bound $\tau+1$ follows from the definition of $S_1,\ldots,S_w$ as the maximal $\chi^{U\cup \overline F}$-small cuts, implying~\eqref{eq:cutConstraintForTw}.
Finally, by construction, $T_w$ equals $T_{\sfc[j]}$ and $F_j$ on $E[S_j]$ and $\delta(S_j)$, respectively, for all $j\in[w]$.
Consequently, $T_w\cap E[\overline S]$ satisfies all constraints of~\eqref{eq:lamExLP}.
This finishes the proof of \cref{lem:treesSatisfyProperty}.
\end{proof}

It remains to analyze the expected running time of an extension step as described in \cref{alg:extensionLaminarDP}.

\begin{lemma}\label{lem:extensionLaminar_expRunningTime}
\cref{alg:extensionLaminarDP} has expected running time $|V|^{\Oh(1)}$.
\end{lemma}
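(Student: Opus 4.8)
The bound will be obtained by splitting the analysis into two pieces: the deterministic cost of \emph{one} pass through \cref{alg:extensionLaminarDP}, and the expected number of times the loop formed by Steps~\ref{algostep:roundY_laminar}--\ref{algostep:checkIterate_laminar} is repeated. Since each single pass will be seen to run in $|V|^{\Oh(1)}$ time, and the number of passes will be stochastically dominated by a geometric random variable whose mean is a polynomial in $|V|$, linearity of expectation (every pass costs at most a fixed polynomial, regardless of the outcome) gives the claimed expected running time $|V|^{\Oh(1)}$, after adding the one-time cost $|V|^{\Oh(1)}$ of solving~\eqref{eq:lamExLP} in \cref{algostep:solveLP_laminar}.

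\textbf{Cost of one pass.} In \cref{algostep:solveLP_laminar} we solve the linear program~\eqref{eq:lamExLP}, which has at most $\Oh(|V|^2)$ variables (one per edge of $G\sfc$), the spanning-tree-polytope constraints of $G\sfc$, at most $|\mathcal{L}|=\Oh(|V|)$ further cut constraints, and $\Oh(|V|^2)$ variable-fixing equalities. As in the proof of \cref{thm:MCCSTdPguarantee}, replacing $\PST\sfc$ by Martin's compact extended formulation~\cite{martin1991using}, whose constraint matrix has entries of absolute value at most $1$, and then applying the framework of \textcite{tardos_1986_strongly}, solves it in strongly polynomial time; if~\eqref{eq:lamExLP} is infeasible the algorithm terminates at once, reporting cost $\infty$. \cref{algostep:roundY_laminar} is a single call to the efficient rounding scheme of \cref{thm:rounding}. \cref{algostep:alterateTree_laminar} minimizes $c$ over $T_0$ itself together with the $\Oh(|V|^3)$ trees obtained from $T_0$ by exchanging one non-tree edge $e$ with $y(e)\in(0,1)$ against an edge with $y$-value in $(0,1)$ on the fundamental cycle of $e$, which is polynomial. \cref{algostep:checkIterate_laminar} only verifies $\Oh(|V|)$ inequalities. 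Hence one pass costs $|V|^{\Oh(1)}$.

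\textbf{Expected number of passes.} It suffices to show that a single execution of Steps~\ref{algostep:roundY_laminar}--\ref{algostep:checkIterate_laminar} produces an output with probability at least some fixed inverse polynomial in $|V|$; the argument mirrors the proof of \cref{thm:MCCST}. The first point is that the minimizer $y$ of~\eqref{eq:lamExLP}, together with the edge set $F$, is $\tau$-integral with respect to the cuts in $\mathcal{L}_S$ (the exact analog of \cref{item:yIsAlphInt} of \cref{prop:yPartialProps} in the chain case): for $S'\in\mathcal{L}_S$ with $S'\subseteq S_i$ for some $i\in[w]$, the equality constraints of~\eqref{eq:lamExLP} force $y$ to coincide on $\delta(S')$ with the integral vector describing $T_{\sfc[i]}$ and $F_i$, so $S'$ is $y$-small; every other $S'\in\mathcal{L}_S$ satisfies $S_i\subsetneq S'\subsetneq S$ for some $i$ and hence carries a load of at least $\tau+1$ by the lower-bound constraints of~\eqref{eq:lamExLP}. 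Consequently, because the rounding of \cref{thm:rounding} is marginal-preserving, $T_0$ already meets the cut constraints on all $y$-small cuts exactly; and by the Chernoff-type bound of \cref{thm:rounding} applied on each $y$-large cut with $\lambda=\tfrac{\varepsilon}{4}$ (using $\tau+1\geqslant\sfrac{96\ln(2|V|)}{\varepsilon^2}$) together with a union bound over the $\Oh(|V|)$ sets of $\mathcal{L}_S$, with probability $1-\Oh(|V|^{-1})$ the load of $T_0$ on every $y$-large cut $S'$ lies in $[(1-\tfrac{\varepsilon}{4})y(\delta(S')),(1+\tfrac{\varepsilon}{4})y(\delta(S'))]$. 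Independently, \cref{thm:neighborhood} applied to $y$ guarantees that with probability at least $(|V|-1)^{-1}$ the tree $T$ produced in \cref{algostep:alterateTree_laminar} satisfies $c(T)\leqslant c^\top y$. A union bound shows that both events occur with probability $\Omega(|V|^{-2})$, and when they do, the single two-edge exchange changes each cut load by at most $1$, so exactly as in the proof of \cref{thm:MCCST} one gets that $|(T\cup F)\cap\delta(S')|$ stays within a $(1\pm\tfrac{\varepsilon}{2})$-factor of $y(\delta(S'))$, hence in $[(1-\varepsilon)a_{S'},(1+\varepsilon)b_{S'}]$ using $a_{S'}\leqslant y(\delta(S'))+|F\cap\delta(S')|\leqslant b_{S'}$ (guaranteed by~\eqref{eq:lamExLP} on the large cuts and by \cref{prop:laminarDP} for the $T_{\sfc[i]}$ on the small ones), while $c(T)\leqslant c^\top y$; thus \cref{algostep:checkIterate_laminar} outputs $T$. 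Therefore the number of passes is dominated by a geometric variable of mean $\Oh(|V|^2)$, and the expected running time is $|V|^{\Oh(1)}$.

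\textbf{Main obstacle.} The conceptual work is entirely in the last paragraph: one must verify that $y$ (with $F$ appended) is $\tau$-integral on \emph{all} of $\mathcal{L}_S$, which requires tracking how $\delta_G(S')$ for $S'\subsetneq S$ splits across $E[S]$ and $\delta_G(S)$ and how the contractions defining $G\sfc$ interact with the equality constraints of~\eqref{eq:lamExLP}, and the analogous bookkeeping that a single two-edge swap cannot push a cut load outside the admissible window. Both are, however, essentially the same computations already carried out for \MCCST in the proofs of \cref{thm:MCCST} and \cref{lem:treesSatisfyProperty}, so they are routine rather than new; everything else is a straightforward accounting of per-step polynomial costs.
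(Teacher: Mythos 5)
Your proposal is correct and follows essentially the same route as the paper: bound the cost of a single pass polynomially (solving \eqref{eq:lamExLP} in strongly polynomial time via Martin's extended formulation and Tardos's framework), show the success probability of one iteration of \cref{algostep:roundY_laminar}--\cref{algostep:checkIterate_laminar} is at least an inverse polynomial in $|V|$ by replicating the analysis from the proof of \cref{thm:MCCST}, and conclude via the expectation of a geometric number of passes. You spell out the $\tau$-integrality of $y$ and the union-bound bookkeeping in more detail than the paper (which simply defers to the proof of \cref{thm:MCCST}), and your $\Omega(|V|^{-2})$ success bound is looser than the paper's $\sfrac{1}{2|V|}$ but still suffices.
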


\begin{proof}
First of all, note that every single step of \cref{alg:extensionLaminarDP} can be implemented in running time $|V|^{\mathcal{O}(1)}$.
In particular, linear programs of the type~\eqref{eq:lamExLP} can be solved in strongly polynomial time by using a compact extended formulation for the spanning tree polytope with small coefficients in the constraint matrix (one can, for example, use the one by \textcite{martin1991using}, which has coefficients that are bounded by $1$ in absolute value), and then applying the framework of \textcite{tardos_1986_strongly}.
Consequently, the above lemma is reduced to proving a bound on the expected number of iterations that are needed to achieve the properties required in \cref{algostep:checkIterate_laminar}.

Replicating the analysis in the proof of \cref{thm:MCCST}, we see that \cref{thm:rounding,thm:neighborhood} imply that the tree $T$ obtained in \cref{algostep:roundY_laminar,algostep:alterateTree_laminar} of \cref{alg:extensionLaminarDP} has the desired properties with probability at least $\sfrac{1}{2|V|}$.
As all iterations are independent, the probability that we succeed precisely at iteration $\ell\in\mathbb{Z}_{>0}$, i.e., in time $\ell\cdot|V|^{\Oh(1)}$, is $(1-\sfrac{1}{2|V|})^{\ell-1}\cdot \sfrac{1}{2|V|}$.
Consequently, the expected running time is
\begin{equation*}
\sum_{\ell\geqslant 1} \ell\cdot|V|^{\Oh(1)}\cdot \left(1-\frac{1}{2|V|}\right)^{\ell-1}\cdot \frac{1}{2|V|} = |V|^{\Oh(1)}\enspace,
\end{equation*}
where we use that $\sum_{\ell\geqslant 1} \ell(1-x)^{\ell-1}=\frac1{x^2}$ for $x\in(0,1)$.
\end{proof}

Together with the bound $|\mathcal{K}|=|V|^{\Oh(\tau)}$ from \cref{prop:sizeK}, we are finally ready to prove \cref{thm:MLCST_randRunningTime}.

\begin{proof}[Proof of \cref{thm:MLCST_randRunningTime}]
We run a dynamic program that calculates trees $T_{\sfc}$ for all $\sfc\in\mathcal{K}$ starting from the initialization $T_{(\emptyset,\emptyset,\{\emptyset\})}=\emptyset$, and using \cref{alg:propagationLaminarDP} for propagation.
Note that $T_{(\emptyset,\emptyset,\{\emptyset\})}$ satisfies \cref{prop:laminarDP}, and hence by an inductive application of \cref{lem:treesSatisfyProperty}, all trees $T_{\sfc}$ satisfy \cref{prop:laminarDP}.
In particular, $T_{(V,\emptyset,\{\emptyset\})}$ is thus a tree satisfying the guarantees of \cref{thm:MLCST_randRunningTime}.

The running time is determined by the number of calls to \cref{alg:extensionLaminarDP}.
For every triple $\sfc\in\mathcal{K}$, when calculating $T_{\sfc}$, there is one call to \cref{alg:extensionLaminarDP} for every possible choice of triples $\sfc[1],\ldots,\sfc[w]\in\mathcal{K}$ with $S_1,\ldots,S_w\in\mathcal{L}_S$ having pairwise empty intersections.
Note that $\mathcal{L}$ has width $k$, so there are at most $k$ sets with pairwise empty intersection, i.e., $w\leqslant k$.
This implies that the number of calls to \cref{alg:extensionLaminarDP} is bounded from above by $|\mathcal{K}|^{k+1}$.
Consequently, the bound on the expected running time of $|V|^{\Oh(k\tau)}$ follows from combining \cref{prop:sizeK} and \cref{lem:extensionLaminar_expRunningTime}.\sloppy
\end{proof}
\section[Adaptations to TSP variants]{Adaptations to \TSP variants}\label{sec:approachPathTSP}

\newcommand{\joinset}{R}
\newcommand{\oddelement}{r}

We now expand on how our DP approach can be adapted to obtain approximation algorithms for \TSP variants, in particular \MSCJ, which is a natural generalization of \pathTSP.
This will lead to a proof of \cref{thm:TJoins}.
We remark that in order to keep notation unambiguous, we reserve the variable $T$ for trees and use $\joinset$ instead for $\joinset$-joins throughout the rest of this paper.

Note that we presented our techniques in the context of finding constrained spanning trees.
However, connected $\joinset$-joins are not necessarily spanning trees.
The following result by \textcite{cheriyan_2015_approximating} shows that there is always a shortest connected $\joinset$-join that is a spanning tree, thus linking the two problems.
\begin{theorem}[\textcite{cheriyan_2015_approximating}]\label{thm:optQJoins}
Let $G=(V,E)$ be a complete graph with metric edge lengths $\ell\colon E\to\mathbb{R}_{\geqslant 0}$, and let $\joinset\subseteq V$ be nonempty and of even cardinality.
Given a connected $\joinset$-join $J$, a spanning tree $T$ of $G$ with $\ell(T)\leqslant\ell(J)$ that is a connected $\joinset$-join can be found efficiently.
\end{theorem}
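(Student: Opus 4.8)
The plan is to transform $J$ into the required spanning tree by a sequence of local \emph{shortcut} operations, each strictly decreasing the number of edges while preserving three invariants: (a) $(V,\cdot)$ is connected (equivalently, spanning); (b) the edge set is a $Q$-join; (c) its $\ell$-length is at most $\ell(J)$. If $(V,J)$ is acyclic we are already done, since a connected acyclic spanning subgraph is a spanning tree. Otherwise we pick a vertex $v$ together with two distinct edges $\{u,v\},\{v,w\}\in J$ with $u\ne w$, delete both, and insert the edge $\{u,w\}$, which is available because $G$ is complete.

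Two of the three invariants are immediate. For the $Q$-join property, $\chi^{\{u,v\}}+\chi^{\{v,w\}}\equiv\chi^{\{u,w\}}\pmod 2$ (the coordinate at $v$ cancels), so the parity of every vertex degree is unchanged and the odd-degree set remains $Q$. For the cost, metricity of $\ell$ gives $\ell(\{u,w\})\le \ell(\{u,v\})+\ell(\{v,w\})$, so $\ell$ does not increase. Since the edge count drops by exactly one at each step, the process terminates after finitely many (polynomially many in the input size) steps, each clearly implementable in polynomial time, and it stops at a spanning tree that is a connected $Q$-join of length at most $\ell(J)$.

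The crux is to always be able to choose $v,\{u,v\},\{v,w\}$ so that connectivity survives. A convenient sufficient condition is that deleting $\{u,v\}$ and $\{v,w\}$ leaves at most two components with $u$ and $w$ in different ones, so that re-adding $\{u,w\}$ reconnects everything. In particular, if some vertex $v$ has degree at least $3$ in $(V,J)$ and $(V,J)-v$ is connected, then after deleting any two of $v$'s edges the vertex $v$ is still attached (via a remaining edge) to the connected graph $(V,J)-v$, so any shortcut at $v$ works. To see that a usable $v$ always exists while $(V,J)$ has a cycle, I would use $Q\ne\emptyset$: this rules out $(V,J)$ being a single cycle, hence $(V,J)$ has a vertex of degree at least $3$. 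If none of these has $(V,J)-v$ connected, one passes to a leaf block of the block--cut tree that contains a cycle; there one finds either a degree-$\ge 3$ vertex $v$ of that block with $(V,J)-v$ connected, or else a cut vertex $x$ lying on a cycle through which the remainder of $(V,J)$ is attached, and then shortcutting at $x$ by pairing one cycle edge at $x$ with one edge leading into the attached part separates the graph cleanly. One also needs to eliminate parallel edges of the multiset $J$ (replacing two parallel copies of $\{u,v\}$ using a third edge incident to the higher-degree endpoint), which fits the same scheme.

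I expect the connectivity bookkeeping of the last paragraph — choosing the right vertex and the right pair of incident edges, and checking the ``at most two components, $u$ and $w$ separated'' condition in each configuration, together with the parallel-edge corner cases — to be the only genuine work; the parity invariant and the length bound are one-line consequences of $GF(2)$-arithmetic and the triangle inequality, and termination is automatic.
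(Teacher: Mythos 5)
Your overall route is the same as the one the paper attributes to Cheriyan, Friggstad and Gao (the paper itself only sketches it): repeatedly shortcut two incident edges $\{u,v\},\{v,w\}$ into $\{u,w\}$, observing that this preserves degree parities, does not increase length by the triangle inequality, and strictly decreases $|J|$. Those invariants, the termination argument, and the use of $Q\neq\emptyset$ to exclude the Hamiltonian-cycle case are all correct.

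The one genuine gap is in the connectivity case analysis. Your fallback is to ``a leaf block of the block--cut tree that contains a cycle,'' but such a block need not exist: take a triangle on $\{2,3,4\}$ with pendant edges $\{1,2\}$ and $\{4,5\}$. Here every leaf block is a bridge, and neither degree-$3$ vertex $v\in\{2,4\}$ has $(V,J)-v$ connected, so both branches of your dichotomy, as literally stated, fail to apply. The fix is the move you already gesture at in your ``or else'' clause, and it needs no block decomposition: pick any cycle $C$ in $(V,J)$; since $(V,J)\neq C$ (else $Q=\emptyset$) and the graph is connected, some $v\in C$ has degree at least $3$ and hence an incident edge $e_2=\{v,z\}$ distinct from the two $C$-edges at $v$. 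Shortcut $e_1=\{u,v\}\in C$ with $e_2$. Connectivity survives because $J\setminus\{e_1\}$ is connected ($e_1$ lies on $C$), so $J\setminus\{e_1,e_2\}$ has at most two components, one containing $v$ and one containing $z$; moreover $u$ lies in $v$'s component via the path $C\setminus\{e_1\}$, which uses the other $C$-edge at $v$ and hence avoids $e_2$. Adding $\{u,z\}$ therefore reconnects the graph (and $u\neq z$ unless $e_1,e_2$ are parallel copies of the same edge, a case you can treat by taking $C$ to be that $2$-cycle). Your stated sufficient condition should also be relaxed to ``$H$ is connected, or has exactly two components with the new edge's endpoints in different ones,'' since in the move above the two endpoints may well land in the same component.
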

The proof of this theorem exploits the assumption that the instance is metric by using that shortcutting operations are not length-increasing.
The result is then obtained by proving that whenever a $\joinset$-join has cycles, a sequence of shortcutting operations can be applied to obtain a spanning tree.
Note that \cref{thm:optQJoins} implies that, indeed, \pathTSP is a special case of \MSCJ[\joinset], as for $\joinset=\{s,t\}$ with $s\neq t$, a spanning tree that is a connected $\joinset$-join is a Hamiltonian $s$-$t$~path.
Moreover, by \cref{thm:optQJoins}, we also see that it is enough for an approximate solution to compare well to optimal spanning trees, which is a crucial observation for simplifying our analysis.

Our proof of \cref{thm:TJoins}, our dynamic programming approach can be used to obtain a good spanning tree for a Christofides-Serdyukov-type algorithm, which starts with a spanning tree $T$ and does parity correction in a second step by adding further edges.
Thus, we start with a short recap of the approach by \textcite{christofides1976worst,serdyukov_1978_onekotorykh} (also see~\cite{christofides_1976_worst_new,vanBevern_2020_historical}), which has been used heavily for both \TSP and \pathTSP, in the context of the more general \MSCJ[\joinset].
More precisely, we walk through a polyhedral analysis of the Christofides-Serdyukov algorithm due to \textcite{wolsey1980heuristic}.
These ideas form the basis of essentially all improvements in approximation algorithms for \pathTSP over the last few years~\cite{an_2015_improving, gottschalk_2018_better, sebo_2019_salesman, traub_2019_approaching, vygen_2016_reassembling, traub_2021_reducing,karlin_2021_slightly,karlin_2022_deterministic}.
A key difference compared to chain-constrained and laminarly-constrained spanning trees that we considered previously is that, ideally, one would like that the tree $T$ satisfies parity constraints for a well-defined family of cuts.
Whereas we cannot directly impose parity constraints (one can even observe that this leads to an NP-hard problem), we present a proxy that is good enough by sacrificing an arbitrarily small constant error $\varepsilon$ in the approximation guarantee.

\subsection{The Christofides-Serdyukov algorithm and Wolsey's analysis}

The Christofides-Serdyukov algorithm for \TSP builds on the observation that a solution has to satisfy two properties, namely connectivity and correct degree parities.
Connectivity can be guaranteed by starting with a spanning tree $T$.
The degree parities of $T$ are wrong precisely at the vertices in $\joinset_T\coloneqq\odd(T)\symdiff \joinset$, which can be corrected by adding a $\joinset_T$-join $J$ to $T$.\footnote{By $\odd(T)$, we denote the set of odd-degree vertices in $T$.} The multiset obtained by combining $T$ and $J$ can be shortcut to a solution of the problem, and approximation guarantees follow by choosing $T$ and $J$ such that $\ell(T)$ and $\ell(J)$ can be bounded in terms of $\ell(\OPT)$, where $\OPT$ denotes an optimal solution of the problem.

While it is easy to observe that for a shortest spanning tree $T$, we have $\ell(T)\leqslant\ell(\OPT)$, bounds on $\ell(J)$ for a shortest $\joinset_T$-join $J$ can for example be obtained by exploiting polyhedral descriptions of $\joinset$-joins.
In particular, the dominant of the $\joinset$-join polytope\footnote{The dominant of the $\joinset$-join polytope is the set of all points $x\in\mathbb{R}^E$ such that there is a convex combination $y = \sum_{i=1}^k \lambda_i\chi^{J_i}$ of characteristic vectors $\chi^{J_i}\in\{0, 1\}^E$ of $\joinset$-joins $J_i$ with $\lambda_i>0$ for $i\in[k]$ such that $y\leqslant x$.} is given by
\begin{equation*}
\TJdom{\joinset}\coloneqq\left\{ x\in\mathbb{R}^E_{\geqslant 0} \,\middle|\, x(\delta(C))\geqslant 1\ \text{$\forall$\,$\joinset$-cuts $C\subseteq V$}  \right\}\enspace,
\end{equation*}
where a \emph{$\joinset$-cut} $C$ is a subset of $V$ with $|C\cap \joinset|$ odd (see~\cite[Section 29]{schrijver2003combinatorial}).
By integrality of this polytope, in order to prove a bound of the form $\ell(J)\leqslant \gamma\cdot\ell(\OPT)$ for a shortest $\joinset_T$-join $J$, it is sufficient to find a point $z\in\TJdom{\joinset_T}$ with $\ell^\top z\leqslant\gamma\cdot\ell(\OPT)$, and a $(1+\gamma)$-approximation follows.

In many approaches in this context, an important role for finding a suitable point $z$ is taken by a linear relaxation of the problem.
For \MSCJ[\joinset], we use the formulation
\begin{equation}\tag{$\mathrm{LP}_{\mathrm{HK}}$}\label{eq:QJoinRelaxation}
\begin{aligned}
\min \quad \ell^\top x \\
x(\delta(C))&\geqslant 2 && \forall C\subsetneq V,\ C\neq\emptyset,\ |C\cap \joinset|\ \text{even}\\
x(\delta(C))&\geqslant 1 && \forall C\subseteq V,\ |C\cap \joinset|\ \text{odd}\\
x &\in \mathrlap{\mathbb{R}^E_{\geqslant 0} \enspace,}
\end{aligned}
\end{equation}%
which is an adaptation of the well-known Held-Karp relaxation for \TSP.

If $x^*$ is an optimal solution of~\eqref{eq:QJoinRelaxation}, the point $z=\sfrac{x^*}{2}$ is a good candidate for a feasible point of $\TJdom{\joinset_T}$.
More precisely, observe that the constraint $z(\delta(C))\geqslant 1$ is violated precisely for cuts $C$ with $x^*(\delta(C))<2$.
As $\delta(C)=\delta(V\setminus C)$, we can fix a vertex $\oddelement \in \joinset$, and it is enough to consider the cuts in the family
\begin{equation*}
\mathcal{N}\coloneqq\{C\subseteq V\mid \oddelement \notin C,\ x^*(\delta(C))<2\}\enspace,
\end{equation*}
the so-called \emph{narrow} cuts of $x^*$.
By the first constraint in~\eqref{eq:QJoinRelaxation}, only cuts $C$ with $|C\cap \joinset|$ odd can be narrow.
Moreover, note that the constraints $z(\delta(C))\geqslant 1$ only appear in the description of $\TJdom{\joinset_T}$ if $C$ is a $\joinset_T$-cut, but this is not necessarily the case for all $C\in\mathcal{N}$.
If indeed, none of the narrow cuts are $\joinset_{T}$-cuts, we conclude that $z=\sfrac{x^*}{2}$ is feasible for $\TJdom{\joinset_T}$.
Using that $\ell^\top x^*\leqslant\ell(\OPT)$, we get
\begin{equation*}
\ell(T)+\ell(J)\leqslant \sfrac32\cdot\ell(\OPT)\enspace,
\end{equation*}
and hence a $\sfrac32$-approximation.
In particular, if we could obtain an optimal solution $x^*$ of~\eqref{eq:QJoinRelaxation} and a tree $T^*$ with $\ell(T^*)\leqslant\ell(\OPT)$ that has an odd number of edges in every narrow cut of $x^*$, we could achieve the above result.
To see this, consider a narrow cut $C$, and observe that
\begin{equation}\label{eq:degreeCount}
\sum_{v\in C} \deg_{T^*}(v) = 2\cdot |T^*\cap E[C]| + |T^*\cap\delta(C)|\enspace.
\end{equation}
The assumption that $|T^*\cap\delta(C)|$ is odd implies that $T^*$ has an odd number of odd-degree vertices in $C$.
As $|C\cap \joinset|$ is odd, we conclude that $|C\cap \joinset_{T^*}| = |C\cap (\odd(T^*)\symdiff \joinset)|$ is even, i.e., $C$ is indeed not a $\joinset_{T^*}$-cut.

Ideally, we would thus like to find a spanning tree $T^*$ that has an odd number of edges in each narrow cut $C\in\mathcal{N}$.
\textcite{cheriyan_2015_approximating} showed that the family $\mathcal{N}$ is in fact a laminar family.
We additionally observe that
\begin{equation}\label{eq:width_N}
\operatorname{width}(\mathcal{N}) \leq |\joinset| - 1 \enspace.
\end{equation}
Indeed, every cut $C\in\mathcal{N}$ has odd intersection with $\joinset$, i.e., it contains at least one element of $\joinset\setminus\{r\}$, so there can be at most $|\joinset|-1$ many mutually disjoint sets in $\mathcal{N}$.
However, even for a chain, one can see that it is $\NP$-hard to decide whether there is a spanning tree that is odd in each cut.\footnote{
$\NP$-hardness can for example be derived by a reduction from the Hamiltonian $s$-$t$~path problem.
To this end, consider an arbitrary numbering of the vertices $V=\{v_1,\ldots, v_n\}$ with $v_1=s$ and $v_n=t$, and consider the complete chain $S_1,\ldots, S_{n-1}$, where $S_i=\{v_1,\ldots, v_i\}$ for $i\in [n-1]$.
First, the cuts $S_1$ and $S_{n-1}$ ensure that a spanning tree that is odd in each cut must have odd degree at $v_1$ and $v_n$.
Moreover, for any $i\in\{2,\ldots, n-1\}$, the vertex $v_i$ must have even degree in the spanning tree due to the cut constraints on $S_{i-1}$ and $S_i$.
Because the degrees of a spanning tree on $n$ vertices sum up to $2(n-1)$, this implies that $v_1$ and $v_n$ must have degree $1$, and all other vertices degree $2$.
However, a spanning tree with these properties is a Hamiltonian $v_1$-$v_n$~path, and any Hamiltonian $v_1$-$v_n$~path is such a spanning tree.
}
As an alternative, we can also use a tree $T$ with slightly weaker properties, and instead modify the vector $z$ to obtain a feasible point for $\TJdom{\joinset_T}$.
In particular, observe the following.

\begin{observation}\label{obs:suffCond}
Assume that we are given an optimal solution $x^*$ of~\eqref{eq:QJoinRelaxation}, a tree $T$, and a point $y\in\mathbb{R}^E_{\geqslant 0}$ with the following properties.
\begin{enumerate}
\item\label{item:observationLarge} For all narrow cuts $C$ of $x^*$, either $|T\cap\delta(C)|$ is odd, or $y(\delta(C))\geqslant \sfrac1\varepsilon$.
\item\label{item:observationBounds} $\ell^\top y \leqslant \ell(\OPT)$, and $\ell(T)\leqslant \ell(\OPT)$.
\end{enumerate}
Then, shortcutting the multiunion of $T$ and a shortest $\joinset_T$-join $J$ gives a $(1.5+\varepsilon)$-approximate solution for \MSCJ[\joinset].
\end{observation}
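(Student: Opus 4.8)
The plan is to verify Observation~\ref{obs:suffCond} by combining the polyhedral facts already assembled in the preceding discussion: the integrality of $\TJdom{Q_T}$, the structure of the narrow cuts $\mathcal{N}$ of $x^*$, and the two hypotheses \cref{item:observationLarge,item:observationBounds}. First I would recall that since $T$ is a spanning tree, it is connected, so the multiunion $T\cup J$ of $T$ with any $Q_T$-join $J$ has all degrees of the correct parity (every vertex of $Q_T$ gets its parity flipped, every other vertex keeps it), is connected, and hence is an Eulerian connected multigraph spanning $V$ whose vertex degrees match $Q$ — i.e.\ after shortcutting (which is not length-increasing by metricity) we obtain a feasible connected $Q$-join, so the output is a genuine feasible solution for \MSCJ[Q]. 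It thus remains only to bound its length by $(1.5+\varepsilon)\cdot\ell(\OPT)$, and since shortcutting does not increase length, it suffices to show $\ell(T)+\ell(J)\leqslant(1.5+\varepsilon)\cdot\ell(\OPT)$ for a shortest $Q_T$-join $J$.

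The core of the argument is to exhibit a point $z\in\TJdom{Q_T}$ with $\ell^\top z\leqslant(0.5+\varepsilon)\cdot\ell(\OPT)$; by integrality of $\TJdom{Q_T}$ this gives $\ell(J)\leqslant\ell^\top z\leqslant(0.5+\varepsilon)\cdot\ell(\OPT)$, and combining with $\ell(T)\leqslant\ell(\OPT)$ from \cref{item:observationBounds} finishes the proof. The natural candidate is
\[
z \;\coloneqq\; \frac{x^*}{2} \;+\; \varepsilon\cdot y\cdot\bigl[\text{restricted/scaled appropriately}\bigr],
\]
but let me be more careful. Define the set $\mathcal{B}\subseteq\mathcal{N}$ of narrow cuts that are $Q_T$-cuts; by \cref{item:observationLarge}, every $C\in\mathcal{B}$ has $y(\delta(C))\geqslant 1/\varepsilon$ (since $C\in\mathcal{B}$ means $C$ is a $Q_T$-cut, hence $|T\cap\delta(C)|$ is even, hence not odd). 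Set $z\coloneqq \tfrac12 x^* + \varepsilon y$. I claim $z\in\TJdom{Q_T}$: the defining inequalities are $z(\delta(C))\geqslant 1$ for every $Q_T$-cut $C$. If $C$ is a $Q_T$-cut that is \emph{not} narrow, then $x^*(\delta(C))\geqslant 2$ by definition of narrowness (recall $\mathcal{N}$ collects all cuts with $x^*(\delta(C))<2$, up to the symmetry $\delta(C)=\delta(V\setminus C)$ and the fixed vertex $q$), so $\tfrac12 x^*(\delta(C))\geqslant 1$ already and $z(\delta(C))\geqslant 1$ since $y\geqslant 0$. If $C$ is a $Q_T$-cut that \emph{is} narrow, then $C\in\mathcal{B}$, so $y(\delta(C))\geqslant 1/\varepsilon$, giving $z(\delta(C))\geqslant \varepsilon\cdot(1/\varepsilon) = 1$. (One should also double-check the boundary case where $C$ or $V\setminus C$ contains $q$ — but $\delta(C)=\delta(V\setminus C)$, so WLOG $q\notin C$, and the same dichotomy applies.) Hence $z\in\TJdom{Q_T}$, and $\ell^\top z = \tfrac12\ell^\top x^* + \varepsilon\,\ell^\top y \leqslant \tfrac12\ell(\OPT) + \varepsilon\,\ell(\OPT) = (0.5+\varepsilon)\ell(\OPT)$, using $\ell^\top x^*\leqslant\ell(\OPT)$ (as \eqref{eq:QJoinRelaxation} is a relaxation) and $\ell^\top y\leqslant\ell(\OPT)$ from \cref{item:observationBounds}.

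Putting the pieces together: $\ell(T)+\ell(J)\leqslant \ell(\OPT) + (0.5+\varepsilon)\ell(\OPT) = (1.5+\varepsilon)\ell(\OPT)$, and since shortcutting $T\cup J$ does not increase length (by metricity of $\ell$) and yields a feasible connected $Q$-join, the resulting solution is $(1.5+\varepsilon)$-approximate, as claimed. The step I expect to require the most care is the verification that $z\in\TJdom{Q_T}$ — specifically the clean case split on whether a given $Q_T$-cut is narrow, and making sure the ``$Q_T$-cut and narrow'' cuts are exactly the ones handled by the $y(\delta(C))\geqslant 1/\varepsilon$ hypothesis rather than the parity one. This is precisely where \cref{item:observationLarge} is used, and it is worth stating explicitly that for a narrow cut $C$, being a $Q_T$-cut is equivalent to $|T\cap\delta(C)|$ being even (via the identity $\sum_{v\in C}\deg_T(v) = 2|T\cap E[C]| + |T\cap\delta(C)|$ together with $|C\cap Q|$ being odd, as spelled out in the surrounding text), so the dichotomy in \cref{item:observationLarge} is exhaustive on $\mathcal{N}$. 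Everything else — feasibility of the shortcut solution, the length accounting — is routine given \cref{thm:optQJoins}-style reasoning and the integrality of the $Q$-join dominant already quoted from~\cite{schrijver2003combinatorial}.
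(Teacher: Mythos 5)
Your proposal is correct and follows essentially the same route as the paper: exhibit a point in $\TJdom{Q_T}$ of length at most roughly $\tfrac12(1+\varepsilon)\ell(\OPT)$ by perturbing $\sfrac{x^*}{2}$ with a multiple of $y$, handling non-narrow cuts via $x^*(\delta(C))\geqslant 2$ and narrow $Q_T$-cuts via the hypothesis $y(\delta(C))\geqslant\sfrac1\varepsilon$ (noting that the parity alternative cannot occur for a $Q_T$-cut). The only difference is the choice $z=\tfrac12 x^*+\varepsilon y$ instead of the paper's $z=\tfrac12(x^*+\varepsilon y)$, which is immaterial for the claimed $(1.5+\varepsilon)$ bound.
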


\begin{proof}
We claim that $z\coloneqq\frac12(x^*+\varepsilon y)\in\mathbb{R}_{\geqslant 0}^E$ satisfies $z\in\TJdom{\joinset_T}$.
From the above discussion and the bounds in \cref{item:observationBounds}, this immediately implies the observation.
The constraints of the form $z(\delta(C))\geqslant 1$ are obviously satisfied for non-narrow cuts $C$ of $x^*$, and we saw through \eqref{eq:degreeCount} that they do not appear in the description of $\TJdom{\joinset_T}$ for narrow cuts $C$ if $|T\cap\delta(C)|$ is odd.
For the remaining narrow cuts $C$, \cref{item:observationLarge} implies $y(\delta(C))\geqslant \sfrac1\varepsilon$, and hence $z(\delta(C))\geqslant \frac12(1+\varepsilon\cdot\sfrac1\varepsilon)=1$.
\end{proof}

The close relation of $y$ and $T$ that is required in \cref{obs:suffCond} motivates studying $\tau$-odd solutions, which to some extent embrace properties of $y$ and $T$.

\begin{definition}[$\tau$-odd]
For $\tau\in\mathbb{Z}_{\geqslant0}$ such that $\tau$ is odd, and a family $\mathcal{F}$ of cuts, we say that a point $y\in\mathbb{R}^E$ is \emph{$\tau$-odd} (with respect to $\mathcal{F}$), if for each $C\in\mathcal{F}$, either
\begin{enumerate}
\item\label{item:oddSmallCut} $y(\delta(C))\leqslant \tau$, $y$ is integral on the edges in $\delta(C)$, and $y(\delta(C))$ is odd, or
\item\label{item:oddLargeCut} $y(\delta(C))\geqslant \tau+2$.
\end{enumerate}
We call the cuts $C$ satisfying \cref{item:oddSmallCut,item:oddLargeCut} the \emph{$y$-small} and \emph{$y$-large} cuts, respectively.
\end{definition}

Note that, for $\tau\approx\sfrac1\varepsilon$, given a short $\tau$-odd point $y\in\PST$ with respect to the narrow cuts $\mathcal{N}$ of an optimal solution $x^*$ of~\eqref{eq:QJoinRelaxation}, a tree with the properties needed in \cref{obs:suffCond} could be obtained as a minimum length spanning tree $T$ such that $\chi^T$ coincides with $y$ on the integral edges of $y$.
This reduces the problem to finding short $\tau$-odd points, which is where our dynamic programming approach can help.

\subsection[Obtaining \texorpdfstring{$\tau$}{tau}-odd points via our DP]{Obtaining \boldmath\texorpdfstring{$\tau$}{tau}\unboldmath-odd points via our DP}

We now discuss how our dynamic programming approach can be adjusted to compute $\tau$-odd points.
Note that $\tau$-odd points are by definition very similar to $\tau$-integral points: The difference is that a $\tau$-odd point can only have an odd number of edges in small cuts.
Our dynamic programming approach can easily handle this type of constraints, as edges in small cuts are always determined by the connectivity triples used.
Thus, if we define the set
\begin{equation*}
\mathcal{K}'\coloneqq\big\{\sfc\in\mathcal{K} \,\big|\, \text{$|F|$ odd} \big\}
\cup\big\{(\emptyset,\emptyset,\{\emptyset\}), (V,\emptyset,\{\emptyset\})\big\}
\end{equation*}
and run the dynamic program presented in \cref{sec:approachDPMCCST} with $\mathcal{K}'$ instead of $\mathcal{K}$ (and no lower or upper bounds on the cuts), we immediately obtain the following analogue of \cref{thm:MCCSTdPguarantee}.

\begin{theorem}\label{thm:tauOddDP}
Let $\mathcal{C}$ be a chain of cuts.
For any $\tau\in\mathbb{Z}_{\geqslant 0}$, there is an algorithm that returns in $|V|^{\Oh(\tau)}$ time a $\tau$-odd point $y\in\PST$ with respect to $\mathcal{C}$ such that $c^\top y\leqslant c(T)$ for every spanning tree $T$ that has an odd number of edges in every cut in $\mathcal{C}$.
\end{theorem}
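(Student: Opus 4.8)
The plan is to essentially re-run the dynamic programming argument of Section~\ref{sec:approachDPMCCST} verbatim, with two small modifications: we restrict the set of connectivity triples to $\mathcal{K}'$ (which forces the guessed crossing edge set $F$ in every small cut to have odd cardinality), and we drop the lower/upper bound constraints $a_i,b_i$ entirely, since in the $T$-join application there are no such bounds. First I would observe that the definition of $\tau$-odd differs from that of $\tau$-integral (Definition~\ref{def:tauIntegral}) only in that a $y$-small cut must additionally satisfy that $y(\delta(C))$ is odd, and that the threshold for $y$-large cuts is $\tau+2$ rather than $\tau+1$ (which is the natural parity-adjusted version: if a small cut must have odd $y$-value at most $\tau$ and $\tau$ is odd, the next possible integer value jumping to ``large'' is $\tau+2$, but since we only ever guess $F$ with $|F|\le\tau$ in $\mathcal{K}'$, the constraint $z(\delta(S_h))\ge\tau+1$ for large cuts already forces $z(\delta(S_h))\ge\tau+2$ when combined with the constraint structure—actually one should just replace $\tau+1$ by $\tau+2$ in~\eqref{eq:lpCompDP} to be safe). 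The key point is that the odd-cardinality requirement is handled entirely by the choice of $\mathcal{K}'$: edges in small cuts are always exactly the guessed set $F$, so if every $F$ appearing in a triple of $\mathcal{K}'$ has $|F|$ odd, then every $y$-small cut of the output automatically has an odd integral number of edges crossing it.

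The main steps, in order: (1) State the analogue of Property~\ref{prop:yPartialProps}, now without the bound constraints~\ref{item:yFulfillsBounds} and with ``$\tau$-integral'' replaced by ``$\tau$-odd'' in item~\ref{item:yIsAlphInt}, and with item~\ref{item:yIsCheap} now comparing against the cheapest $U\subseteq E[S_i]$ whose characteristic vector is $\tau$-odd on $S_1,\dots,S_{i-1}$ with $F$ counted in. (2) State the analogue of the propagation LP~\eqref{eq:lpCompDP}, again dropping $a_h,b_h$ and keeping only $z(\delta(S_h))+|\overline{F}\cap\delta(S_h)|\ge\tau+2$ for the intermediate cuts, plus the fixing constraints $z(e)=y_{(S_j,F,\mathcal{C})}(e)$, $z(e)=1$ on $F$, $z(e)=0$ on $\delta(S_j)\setminus F$, and $z\in\PST(S_i,\overline{F},\overline{\mathcal{C}})$. (3) Re-prove Lemma~\ref{lem:completionOk}: the argument is unchanged—left-compatibility of $y_{(S_j,F,\mathcal{C})}$ plus right-compatibility of any completion of the compatible spanning tree $T$ combine to give a point in the spanning tree polytope of $(V,E\cup R)$, hence $z$ is feasible for the new LP. (4) Re-prove Lemma~\ref{lem:propagationMCCST} in this setting: given an edge set $U$ whose characteristic vector satisfies the new properties, pick $j<i$ maximal with $S_j$ a $\chi^T$-small cut (where $T=U\cup\overline F\cup R$); because $T$ is an actual spanning tree, $F\coloneqq T\cap\delta(S_j)$ has $|F|\le\tau$ and—crucially—$|F|$ is odd since $S_j$ is $\chi^T$-small and $\chi^T$ is $\tau$-odd at $S_j$; hence $(S_j,F,\mathcal{C})\in\mathcal{K}'$, and Lemma~\ref{lem:completionOk} applies, yielding $c^\top y\le c(U)$ as before. (5) Invoke Proposition~\ref{prop:sizeK} (the same counting bound $|\mathcal{K}'|\le|\mathcal{K}|=|V|^{\Oh(\tau)}$ holds a fortiori, since $\mathcal{K}'\subseteq\mathcal{K}$), note the LPs are still solvable in strongly polynomial time via the Martin~\cite{martin1991using} extended formulation and the Tardos~\cite{tardos_1986_strongly} framework, and conclude the $|V|^{\Oh(\tau)}$ running time.

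The one genuinely new verification—and the step I expect to require the most care—is in step~(4): one must confirm that when we backtrack from the optimal spanning tree $T$ (more precisely, from an arbitrary spanning tree $T$ that is odd in every cut of the chain $\mathcal{C}$, which is the object we compare against in the theorem statement), the maximal $\chi^T$-small cut $S_j$ really does have an \emph{odd} number of edges crossing it. This is exactly where the hypothesis ``$T$ has an odd number of edges in every cut in $\mathcal{C}$'' is used: it guarantees that $|T\cap\delta(S_j)|$ is odd for every cut in the chain, so in particular whenever $|T\cap\delta(S_j)|\le\tau$ we get a legitimate triple in $\mathcal{K}'$. Without this odd-degree assumption on $T$, the comparison set in item~\ref{item:yIsCheap} would be populated by trees whose induced small cuts need not match any triple in $\mathcal{K}'$, and the induction would break; with it, everything goes through identically to the $\tau$-integral case. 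I would also double-check the base case $y_{(\emptyset,\emptyset,\{\emptyset\})}=0$ trivially satisfies the new Property, and that the final output $y_{(V,\emptyset,\{\emptyset\})}$ lies in $\PST$ (from left-compatibility with $(V,\emptyset,\{\emptyset\})$, i.e.\ membership in $\PST(V,\emptyset,\{\emptyset\})=\PST$) and is $\tau$-odd on all of $\mathcal{C}$ (from item~\ref{item:yIsAlphInt}), with $c^\top y\le c(T)$ for every chain-odd spanning tree $T$ (from item~\ref{item:yIsCheap}), which is precisely the conclusion of Theorem~\ref{thm:tauOddDP}.
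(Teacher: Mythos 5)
Your proposal is correct and follows essentially the same route as the paper: the paper likewise obtains \cref{thm:tauOddDP} by rerunning the DP of \cref{sec:approachDPMCCST} with $\mathcal{K}$ replaced by $\mathcal{K}'$ (forcing $|F|$ odd in small cuts), dropping the bounds $a_i,b_i$, and using the threshold $\tau+2$ for large cuts, with the hypothesis on $T$ entering exactly where you place it, namely to guarantee that the backtraced small cuts yield triples in $\mathcal{K}'$.
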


Before continuing on \MSCJ[\joinset], we show how \cref{thm:tauOddDP} readily implies a $(1.5+\varepsilon)$-approximation for \pathTSP, replicating prior results for this problem.
(We recall that the currently best approximation for \pathTSP has a factor slightly below $1.5$~\cite{karlin_2021_slightly,karlin_2022_deterministic}; nevertheless, \pathTSP allows for nicely exemplifying our techniques in a simpler setting.)
For completeness, we recall that \pathTSP is formally defined as follows.
\begin{mdframed}[leftmargin=0.025\linewidth,rightmargin=0.025\linewidth]%
{\textbf{Shortest Hamiltonian {\boldmath$s$-$t$\unboldmath} Path Problem (\labeltarget{prb:pathTSP}{\pathTSP}):}}
Let $G=(V,E)$ be a complete graph with metric edge lengths $\ell\colon E\to\mathbb{R}_{\geqslant 0}$, and let $s,t\in V$ be two distinct vertices.
Find a path $P\subseteq E$ minimizing $\ell(P)\coloneqq\sum_{e\in P}\ell(e)$ among all Hamiltonian $s$-$t$ paths in $G$.
\end{mdframed}
We recall that \pathTSP is indeed a special case of \MSCJ[\joinset] by choosing $\joinset=\{s,t\}$.
This follows from \cref{thm:optQJoins}, because a spanning tree that is a connected $\{s,t\}$-join is a Hamiltonian $s$-$t$~path.

In the special case of \pathTSP, the family $\mathcal{N}$ of narrow cuts has width $1$, i.e., it is a chain.
This was observed by \textcite{an_2015_improving} and has been crucial in many recent improvements for \pathTSP.
In our case, it guarantees applicability of \cref{thm:tauOddDP}, and shows that \cref{alg:pathTSP} is well-defined, which, as we discuss now, is a $(1.5+\varepsilon)$-approximation for \pathTSP.

\begin{algorithm2e}[H]
\caption{Polynomial time $(\sfrac32+\varepsilon)$-approximation for Path TSP\strut}\label{alg:pathTSP}
\begin{algostepsarabic}%

\item\label{algostep:narrowChain} Let $x^*$ be an optimal solution of~\eqref{eq:QJoinRelaxation}\strut{}, and let $\mathcal{C}$ be the family of narrow cuts of $x^*$ not containing $t$.

\item\label{algostep:tauOddChain} Let $\tau\in\{\lfloor\sfrac1\varepsilon\rfloor,\lfloor\sfrac1\varepsilon\rfloor+1\}$ odd, and use the algorithm guaranteed by \cref{thm:tauOddDP} to find a $\tau$-odd point $y\in\PST$ with respect to $\mathcal{C}$.

\item\label{algostep:getTpathTSP} Let $T$ be the shortest spanning tree of $G$ such that $\chi^T$ coincides with $y$ on all integral edges of $y$.

\item\label{algostep:findJoin} Let $J$ be a shortest $\joinset_T$-join in $G$, and return the shortcutted multiunion of $J$ and $T$.
\end{algostepsarabic}
\end{algorithm2e}

\begin{proposition}\label{prop:pathTSP}
\cref{alg:pathTSP} is a $(1.5+\varepsilon)$-approximation for \pathTSP.
\end{proposition}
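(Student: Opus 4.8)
The plan is to derive \cref{prop:pathTSP} by feeding the objects produced by \cref{alg:pathTSP}---the optimal solution $x^*$ of~\eqref{eq:QJoinRelaxation}, the $\tau$-odd point $y$, and the tree $T$---into \cref{obs:suffCond}, applied to \MSCJ[Q] with $Q=\{s,t\}$. Since, as recalled above via \cref{thm:optQJoins}, a spanning tree that is a connected $\{s,t\}$-join is exactly a Hamiltonian $s$-$t$ path, and the shortcut multiunion of $T$ with a shortest $Q_T$-join is such a connected $\{s,t\}$-join whose optimum length coincides with that of \pathTSP, this will immediately give the claimed $(1.5+\varepsilon)$ guarantee.

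Before invoking \cref{obs:suffCond}, I would check that \cref{alg:pathTSP} is well defined. In \cref{algostep:narrowChain} we fix $q=t$, so $\mathcal{C}$ is the family of narrow cuts of $x^*$, which is a chain (width $1$) by the result of An--Kleinberg--Shmoys recalled earlier; hence \cref{thm:tauOddDP} applies in \cref{algostep:tauOddChain} and returns a $\tau$-odd point $y\in\PST$ with $\ell^\top y\le\ell(T')$ for every spanning tree $T'$ having an odd number of edges in every cut of $\mathcal{C}$. Moreover, since $y\in\PST\subseteq[0,1]^E$, the minimal face of $\PST$ containing $y$ has a vertex, so a spanning tree coinciding with $y$ on its integral coordinates exists and $T$ in \cref{algostep:getTpathTSP} is well defined.

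Next I would verify the two hypotheses of \cref{obs:suffCond}. For the parity/largeness condition: each $C\in\mathcal{C}$ is $y$-small or $y$-large by $\tau$-oddness; if $C$ is $y$-small then $y$ is integral on $\delta(C)$, whence $|T\cap\delta(C)|=y(\delta(C))$ is odd, and if $C$ is $y$-large then $y(\delta(C))\ge\tau+2\ge\lfloor\sfrac1\varepsilon\rfloor+2>\sfrac1\varepsilon$. For the length bounds: an optimal Hamiltonian $s$-$t$ path $\OPT$ has degree $1$ at $s,t$ and degree $2$ elsewhere, and every $C\in\mathcal{C}$ contains exactly one of $s,t$, so $|\OPT\cap\delta(C)|\equiv\sum_{v\in C}\deg_{\OPT}(v)\equiv 1\pmod 2$; thus $\OPT$ is a spanning tree odd on every cut of $\mathcal{C}$, and \cref{thm:tauOddDP} gives $\ell^\top y\le\ell(\OPT)$. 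Finally, because $\PST\subseteq[0,1]^E$, the minimal face of $\PST$ containing $y$ lies in the affine subspace fixing every integral coordinate of $y$, so $y$ is a convex combination of characteristic vectors of spanning trees coinciding with $y$ on its integral coordinates; the cheapest such tree is precisely the $T$ chosen by the algorithm, whence $\ell(T)\le\ell^\top y\le\ell(\OPT)$. Both hypotheses of \cref{obs:suffCond} are therefore met.

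Applying \cref{obs:suffCond} then yields that the shortcut multiunion of $T$ and a shortest $Q_T$-join $J$ is a $(1.5+\varepsilon)$-approximate solution for \MSCJ[Q] with $Q=\{s,t\}$, i.e.\ a Hamiltonian $s$-$t$ path of length at most $(1.5+\varepsilon)\,\ell(\OPT)$, which proves the proposition. I do not expect a genuine obstacle here, since the argument is essentially an assembly of \cref{thm:tauOddDP} and \cref{obs:suffCond}; the only two points requiring a little care are the modular-arithmetic check that $\OPT$ crosses every narrow cut an odd number of times (so that \cref{thm:tauOddDP} can be compared against it) and the polyhedral minimal-face argument bounding $\ell(T)$ by $\ell^\top y$.
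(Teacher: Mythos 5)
Your proof is correct and follows essentially the same route as the paper: verify the two hypotheses of \cref{obs:suffCond} for the pair $(y,T)$, using $\tau$-oddness for the parity/largeness condition and \cref{thm:tauOddDP} for the length bound (your explicit check that an optimal Hamiltonian $s$-$t$ path crosses every narrow cut an odd number of times is exactly the comparison the paper relies on, spelled out in its proof of \cref{thm:TJoins}). The only thing the paper adds that you omit is the brief running-time accounting (enumerating narrow cuts, solving the DP, computing a shortest $Q_T$-join), which is routine.
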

\begin{proof}
We claim that the pair $(y,T)$ generated in \cref{alg:pathTSP} has the properties listed in \cref{obs:suffCond}.
To see that \cref{item:observationLarge} holds, first note that by $\tau$-oddness of $y$ with respect to $\mathcal{C}$, we have that for every narrow cut $C$ of $x^*$, either $y(\delta(C))\geqslant \tau+2\geqslant \sfrac1\varepsilon$, or $y$ is integral on $\delta(C)$ and $y(\delta(C))$ is odd.
As $\chi^T$ coincides with $y$ on all integral edges of $y$, the latter case in particular implies that $\chi^T$ coincides with $y$ on $\delta(C)$, and consequently, $|T\cap \delta(C)|$ is odd.
Hence, \cref{item:observationLarge} is satisfied.
For \cref{item:observationBounds}, observe that by definition, $c(T)\leqslant c^\top y$, and \cref{thm:tauOddDP} implies $c^\top y\leqslant c(\OPT)$.
Together, this yields $c(T)\leqslant c(\OPT)$, as desired.

By \cref{obs:suffCond}, it follows that \cref{alg:pathTSP} is a $(1.5+\varepsilon)$-approximation for \pathTSP.
For a running time guarantee, we observe that all steps in \cref{alg:pathTSP} can be performed efficiently.
For \cref{algostep:narrowChain}, note that the minimum cut in $G$ with respect to weights $x^*$ has value at least $1$ by the lower bounds in the relaxation~\eqref{eq:QJoinRelaxation}, and enumerating all cuts with values that are within a factor of $2$ from the minimum cut can be done in time $\Oh(|V|^4|E|)$ (see~\cite{nagamochi1997computing}).
\cref{algostep:tauOddChain} can be done in time $|V|^{\Oh(\sfrac{1}{\varepsilon})}$ by \cref{thm:tauOddDP}.
Finally, the running time of finding minimum $\joinset$-joins in \cref{algostep:findJoin} is negligible compared to the running time of the second step of the algorithm.
(For an efficient algorithm to find minimum $\joinset$-joins, see~\cite{edmonds1973matching}, for example.)
Thus, the running time of \cref{alg:pathTSP} is $|V|^{\Oh(\sfrac{1}{\varepsilon})}$.
\end{proof}%

The above approach exploits that the $\tau$-odd point $y$ lies in the spanning tree polytope: This guarantees that a spanning tree $T$ coinciding with $y$ on integral edges can be found in \cref{algostep:getTpathTSP} of \cref{alg:pathTSP}.
In the more general case of connected $\joinset$-joins with $|\joinset|>2$, the narrow cuts $\mathcal{N}$ no longer form a chain, which causes additional challenges in directly extending the propagation step of our dynamic programming approach, as we highlighted in the generalization from \MCCST to \MLCST.
With minor modifications, ideas of this generalization also help for \MSCJ[\joinset].
In fact, a simple alteration of the extension step can be used to obtain a pair $(y,T)$ with the properties highlighted in the following theorem.

\begin{theorem}\label{thm:QJdPguarantee}
For any odd $\tau\in\mathbb{Z}_{\geqslant 0}$ and a laminar family $\mathcal{L}\subseteq 2^V$ of width $k$, there is an algorithm that returns in time $|V|^{\Oh(k\tau)}$  a point $y\in\mathbb{R}^E_{\geqslant 0}$ and a spanning tree $T$ of $G$ with the following properties:
\begin{enumerate}
\item $y$ is $\tau$-odd with respect to $\mathcal{L}$.
\item $\chi^T$ coincides with $y$ on all $y$-small cuts in $\mathcal{L}$.
\item $\ell(T)\leqslant \ell^\top y \leqslant \ell^\top x$ for every $\tau$-odd point $x\in\PST\cap\mathbb{Z}^E_{\geqslant 0}$.
\end{enumerate}
\end{theorem}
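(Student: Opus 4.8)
The plan is to obtain \cref{thm:QJdPguarantee} by running the laminar dynamic program of \cref{sec:extensionMLCST} with the two modifications already sketched in \cref{sec:approachPathTSP}: restrict the connectivity triples so that crossing sets of small cuts have odd cardinality, and drop the $a$--$b$ bounds. Concretely, I would work with the triple family $\mathcal{K}'=\{\sfc\in\mathcal{K} : |F|\text{ odd}\}\cup\{(\emptyset,\emptyset,\{\emptyset\}),(V,\emptyset,\{\emptyset\})\}$, and in a linear program of type~\eqref{eq:lamExLP} replace the family of bound constraints by $y(\delta(S'))+|F\cap\delta(S')|\geqslant\tau+2$ for every $S'\in\mathcal{L}$ lying strictly between one of the inner cuts $S_i$ and the outer set $S$; this way the only cuts a solution can be ``small'' on are the guessed ones (hence crossed an odd number $\leqslant\tau$ of times), while every other cut of $\mathcal{L}$ is forced to be $y$-large. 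Exactly as in \cref{alg:extensionLaminarDP}, after solving such an LP I would round its optimal solution $y$ via \cref{thm:rounding}, locally correct it via \cref{thm:neighborhood}, and iterate this inner loop until the resulting spanning tree $T$ of $G\sfc$ satisfies $\ell(T)\leqslant\ell^\top y$ and the integral point $\chi^T+\chi^F$ is $\tau$-odd on $\mathcal{L}_S$. The stored partial solution $T_{\sfc}$ is a cheapest such $T$ over all admissible choices of inner triples, and the pair finally returned is the pair $(y,T)$ produced at $(V,\emptyset,\{\emptyset\})$ for which $\ell^\top y$ is smallest.

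The core of the proof is the DP invariant, which is the parity analogue of \cref{prop:laminarDP}: for every $\sfc\in\mathcal{K}'$, the set $T_{\sfc}$ is a spanning tree of $G\sfc$; the point $\chi^{T_{\sfc}}+\chi^F$ is $\tau$-odd on $\mathcal{L}_S$; and $\ell^\top y\leqslant\ell(U)$ (so also $\ell(T_{\sfc})\leqslant\ell(U)$) for every LP solution $y$ considered while producing $T_{\sfc}$ and every spanning tree $U$ of $G\sfc$ with $\chi^U+\chi^F$ $\tau$-odd on $\mathcal{L}_S$. The first two items follow from the acceptance test of the inner loop together with marginal preservation: the $y$-small cuts are exactly the guessed inner cuts, so $\chi^T$ inherits their odd crossings, and all remaining cuts are explicitly checked to stay $\geqslant\tau+2$. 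The cheapness item is proved by induction on $S$ along the lines of the proof of \cref{lem:treesSatisfyProperty}: pick a right-compatible $R$, let $S_1,\ldots,S_w$ be the \emph{maximal} $\chi^{U\cup F\cup R}$-small cuts in $\mathcal{L}_S$, put $F_i=(U\cup F\cup R)\cap\delta(S_i)$, and iteratively patch $T_0=U\cup F\cup R$ into $T_i=(T_{i-1}\setminus E[S_i])\cup T_{\sfc[i]}$, where $\mathcal{C}_i$ is the connectivity pattern induced by $T_{i-1}$ on the right of $S_i$ -- this is exactly the place where integrality of the stored partial trees is needed. As in \cref{lem:treesSatisfyProperty}, $c(T_i)\leqslant c(T_{i-1})$ by cheapness of $T_{\sfc[i]}$, and $T_w\cap E[S]$ is feasible for the extension LP at $\sfc$ with inner triples $\sfc[1],\ldots,\sfc[w]$, so LP-optimality yields $\ell^\top y\leqslant c(T_w\cap E[S])=c(U)$.

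Given the invariant, \cref{thm:QJdPguarantee} follows quickly. Property~1 holds because the returned $y$ equals $\chi^{F_i}$ on $\delta(S_i)$ (odd, at most $\tau$ edges), agrees with the $\tau$-odd integral point $\chi^{T_{\sfc[i]}}+\chi^{F_i}$ on everything inside each $S_i$, and is $\geqslant\tau+2$ on every remaining cut of $\mathcal{L}$ by the LP. Property~2 holds because the rounding is marginal-preserving, so $\chi^T$ coincides with $y$ on every integral coordinate, in particular on the $y$-small cuts. Property~3 combines $\ell(T)\leqslant\ell^\top y$ (acceptance test) with the cheapness invariant applied at $(V,\emptyset,\{\emptyset\})$, using that every $\tau$-odd point of $\PST\cap\mathbb{Z}^E_{\geqslant0}$ is the characteristic vector of a spanning tree of $G=G(V,\emptyset,\{\emptyset\})$ that is $\tau$-odd on $\mathcal{L}=\mathcal{L}_V$, hence an admissible competitor $U$. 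For the running time, $|\mathcal{K}'|\leqslant|\mathcal{K}|=|V|^{\Oh(\tau)}$ by \cref{prop:sizeK}, at most $|\mathcal{K}'|^{k+1}$ extension steps are performed because at most $k$ inner cuts can be pairwise disjoint, and each step runs in expected polynomial time by the argument of \cref{lem:extensionLaminar_expRunningTime}; the resulting Las Vegas bound $|V|^{\Oh(k\tau)}$ is converted into a deterministic one exactly as after \cref{thm:MLCST_randRunningTime}.

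I expect the main obstacle to be the cheapness invariant, i.e.\ pushing the iterative-patching argument of \cref{lem:treesSatisfyProperty} through with parity present. The two delicate points are: (i) the induced inner triples $\sfc[i]$ must lie in $\mathcal{K}'$, which reduces to observing that $|F_i|=|(U\cup F\cup R)\cap\delta(S_i)|$ is odd because $\chi^{U}+\chi^F$ is $\tau$-odd on $\mathcal{L}_S$ and $S_i$ is $\chi^{U\cup F\cup R}$-small; and (ii) the patched set $T_w\cap E[S]$ must satisfy the new ``$\geqslant\tau+2$'' constraints, which holds since patching never changes the crossing of a cut above the patched sets, and a cut $S'$ with $S_i\subsetneq S'\subsetneq S$ is not $\chi^{U\cup F\cup R}$-small (by maximality of the $S_i$), hence $\geqslant\tau+2$ by $\tau$-oddness. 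A secondary point to verify carefully is that the inner rounding/correction loop of each extension step terminates in expected polynomial time; as in \cref{sec:extensionMLCST} this rests on the concentration bound of \cref{thm:rounding}, which has to be arranged so that the cuts carrying $y$-load at least $\tau+2$ remain $\geqslant\tau+2$ after rounding with inverse-polynomial failure probability.
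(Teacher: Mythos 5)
Your overall architecture (restricting to $\mathcal{K}'$ with odd crossing sets, replacing the bound constraints by a ``$\geqslant\tau+2$'' floor on intermediate cuts, and the iterative-patching analysis of the cheapness invariant with the two parity checks you flag) matches the paper. However, there is a genuine gap in your extension step. You store only trees and therefore require the accepted tree to satisfy that $\chi^{T}+\chi^{F}$ is integrally $\tau$-odd on $\mathcal{L}_S$, and you propose to achieve this by the randomized rounding plus local correction loop of \cref{alg:extensionLaminarDP}. This fails: a cut $S'$ with $y(\delta(S'))$ equal (or close) to $\tau+2$ has no multiplicative slack, so after rounding the tree can land at $\tau+1$ or fewer edges in $\delta(S')$ with constant probability; such an outcome is neither ``odd and $\leqslant\tau$'' nor ``$\geqslant\tau+2$'', and the one-edge swap of \cref{thm:neighborhood} repairs cost, not cut cardinalities or parities. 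With many such tight cuts the acceptance test succeeds only with exponentially small probability, so the inner loop does not terminate in expected polynomial time. Your closing remark that the concentration bound ``has to be arranged'' to preserve the $\geqslant\tau+2$ threshold is exactly the point that cannot be arranged --- unlike in \MCCST and \MLCST, where the acceptance test is a $(1\pm\varepsilon)$-relaxed constraint with Chernoff room.

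The fix, which is what the paper does, is to store a \emph{pair} $(y_{\sfc},T_{\sfc})$ and to decouple the two objects: only the fractional point needs to be $\tau$-odd (its large cuts carry load $\geqslant\tau+2$ directly from the LP, no rounding involved), while the tree only needs to coincide with $y_{\sfc}$ on the $y_{\sfc}$-small cuts, i.e.\ on the guessed cuts where the edges $F_i$ are fixed anyway. The tree can then be taken as a deterministic minimum-length spanning tree of $G\sfc$ containing the prescribed $T_{\sfc[i]}$ and $F_i$ (\cref{algostep:findT} of \cref{alg:extensionQjoins}); the chain $\ell(T)\leqslant\ell^\top z\leqslant\ell^\top y$ follows from integrality of the spanning tree LP without the $\geqslant\tau+2$ constraints and from $\ell(T_{\sfc[i]})\leqslant\ell^\top y_{\sfc[i]}$, and the combined point $y$ inherits $\tau$-oddness inside each $S_i$ from the stored fractional $y_{\sfc[i]}$ rather than from the stored tree. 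This removes all randomness from the $Q$-join DP and gives the deterministic $|V|^{\Oh(k\tau)}$ bound claimed in \cref{thm:QJdPguarantee}. The rest of your argument --- the induced triples lying in $\mathcal{K}'$ by the parity count, feasibility of $T_w\cap E[S]$ via maximality of the $S_i$, and the counting of extension steps --- is sound.
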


Observe that \cref{thm:QJdPguarantee} allows for bounding from above the length of the $\tau$-odd points $y$ by the length of $\tau$-odd spanning trees only, but by \cref{thm:optQJoins}, this is sufficient for comparing to optimal solutions of \MSCJ[\joinset], once we prove that connected $\joinset$-joins that are spanning trees are indeed $\tau$-odd.
From the above ingredients, we obtain our approximation algorithm for \MSCJ[\joinset], which is stated as \cref{alg:Qjoin} below.
We show that this algorithm implies \cref{thm:TJoins}.
Note that, when the algorithm invokes \cref{thm:QJdPguarantee}, it obtains a pair $(y,T)$, but only uses the spanning tree $T$.
The point $y$ is used only in the analysis of \cref{alg:Qjoin}.

\begin{algorithm2e}[H]
\caption{Polynomial $(\sfrac32+\varepsilon)$-approximation for \protect\MSCJ[\joinset]\strut}\label{alg:Qjoin}
\begin{algostepsarabic}

\item\label{algostep:narrowCuts} Let $x^*$ be an optimal solution of~\eqref{eq:QJoinRelaxation}\strut{}, and let $\mathcal{N}$ be the family of all narrow cuts of $x^*$ not containing a fixed element $\oddelement \in \joinset$.

\item\label{algostep:tauOdd} Let $\tau\in\{\lfloor\sfrac1\varepsilon\rfloor,\lfloor\sfrac1\varepsilon\rfloor+1\}$ be odd, and use the algorithm guaranteed by \cref{thm:QJdPguarantee} to find a pair $(y,T)$ of a $\tau$-odd point $y\in\mathbb{R}^E_{\geqslant 0}$ with respect to $\mathcal{N}$ and a spanning tree $T$ of $G$.

\item\label{algostep:mscjJoin} Let $J$ be a shortest $\joinset_T$-join in $G$, and return the shortcutted multiunion of $J$ and $T$.
\end{algostepsarabic}
\end{algorithm2e}

\begin{proof}[Proof of \cref{thm:TJoins}]
As in the proof of \cref{prop:pathTSP}, it is easy to see that the pair $(y,T)$ returned by \cref{alg:Qjoin} satisfies \cref{item:observationLarge} in \cref{obs:suffCond}.
To prove that \cref{item:observationBounds} holds, we claim that for any spanning tree $S$ that is a $\joinset$-join, $\chi^S$ is a $\tau$-odd solution of $\PST$ with respect to $\mathcal{N}$.
By \cref{thm:QJdPguarantee}, this implies that $\ell^\top y\leqslant \ell(S)$ for all spanning trees $S$ that are $\joinset$-joins.
By \cref{thm:optQJoins}, at least one such spanning tree is in fact an optimal solution to \MSCJ[\joinset], and thus $\ell^\top y\leqslant \ell(\OPT)$ follows.
Furthermore, \cref{thm:QJdPguarantee} also implies $\ell(T)\leqslant\ell^\top y\leqslant\ell(\OPT)$.

For concluding the approximation guarantee of $\sfrac32+\varepsilon$ with the help of \cref{obs:suffCond}, it is thus sufficient to prove the claim.
Thereto, consider a spanning tree $S$ that is a $\joinset$-join, and let $C\in\mathcal{N}$.
We show that $\chi^S(\delta(C))$ is odd, which implies the claim due to integrality of $\chi^S$.
Double counting the edges in $S\cap E[C]$, we get
\begin{equation*}
\chi^S(\delta(C)) = |S\cap\delta(C)| = \sum_{v\in C} \deg_{S}(v) - 2\cdot|S\cap E[C]|\enspace.
\end{equation*}
Consequently, $\chi^S(\delta(C))$ has the same parity as the number of odd-degree vertices in $C$ with respect to $S$.
Because $S$ is a $\joinset$-join, the latter number is equal to $|C\cap \joinset|$, and as we already observed previously, all $C\in\mathcal{N}$ satisfy that $|C\cap \joinset|$ is odd.
This finishes the proof of the claim.

Finally, in order to obtain a running time bound, note that \cref{algostep:narrowCuts} can be performed in time $\Oh(|V|^4|E|)$ (see~\cite{nagamochi1997computing}; we note that this step is analogous to \cref{algostep:narrowChain} of \cref{alg:pathTSP}, and hence can be analyzed as in the proof of \cref{prop:pathTSP}).
The second step has running time $|V|^{\Oh(\sfrac{k}{\varepsilon})}$ by \cref{thm:QJdPguarantee}, where $k$ is the width of the laminar family $\mathcal{N}$, which is at most $k-1$ by \eqref{eq:width_N}.
Moreover, the running time of finding minimum $\joinset$-joins in \cref{algostep:mscjJoin} is negligible compared to the running time of the second step of the algorithm.
(For an efficient algorithm to find minimum $\joinset$-joins, see~\cite{edmonds1973matching}, for example.)
Thus, the running time of \cref{alg:Qjoin} is bounded by $|V|^{\Oh(\sfrac{|\joinset|}{\varepsilon})}$.
This completes the proof of \cref{thm:TJoins}.
\end{proof}

It thus remains to give a proof of \cref{thm:QJdPguarantee}, which we outline in the remainder of this section.
We adopt the dynamic programming approach used for \MLCST, where in order to make sure that the DP guesses an odd number of edges in small cuts, we use $\mathcal{K}'$ instead of $\mathcal{K}$ throughout the procedure.
For every connectivity triple $\sfc\in\mathcal{K}'$, the dynamic programming approach will construct a pair $(y_{\sfc},T_{\sfc})\in\mathbb{R}_{\geqslant 0}^E\times 2^E$ with the following property.

\begin{property}\label{prop:TjPartialProps}\leavevmode
\begin{enumerate}
\item\label{propitem:yAlphaOdd} $\supp(y_{\sfc})\subseteq E[S]$, and
$y_{\sfc}$ is $\tau$-odd with respect to $\mathcal{L}_S$.
\item\label{propitem:Tgood} $\chi^{T_{\sfc}}\in\PST\sfc$, and
$\chi^{T_{\sfc}}$ coincides with $y_{\sfc}$ on all $y_{\sfc}$-small cuts in $\mathcal{L}_S$.
\item\label{propitem:cheapPair} For all $U\subseteq E[S]$ such that $\chi^U\in\PST\sfc$ and $\chi^U+\chi^F$ is $\tau$-odd with respect to $\mathcal{L}_S$, we have $\ell(T_{\sfc})\leqslant\ell^\top y_{\sfc}\leqslant \ell(U)$.
\end{enumerate}
\end{property}

It is clear that if we can construct such pairs for all triples $\sfc\in\mathcal{K}'$, then we are done, as $(y_{(V,\emptyset,\{\emptyset\})},T_{(V,\emptyset,\{\emptyset\})})$ satisfies the assumptions of \cref{obs:suffCond}.
To maintain pairs $(y,T)$ in the dynamic program, we replace the extension step (\cref{alg:extensionLaminarDP}) by the one presented in \cref{alg:extensionQjoins} below.

\begin{algorithm2e}[H]
\caption{Extending $(y_{\sfc[1]}, T_{\sfc[1]})$, $\ldots$\,, $(y_{\sfc[w]}, T_{\sfc[w]})$ to $\sfc$.\strut{}}\label{alg:extensionQjoins}

\begin{algostepsarabic}
\item Let $z$ be a minimizer of the linear program\strut{}
\begin{align}
\min \quad  \ell^\top z \tag{$\text{lamExLP}_{2}$}\label{eq:lamExLP2} \\[-0.3em]
z &\in\PST\sfc\nonumber\\
z(\delta(S'))+|F\cap\delta(S')| &\geqslant \tau+2 && \forall S\in\mathcal{L}\text{ such that } \exists i\in[w] \text{ with } S_i\subsetneq S'\subsetneq S \nonumber\\
z(e) &= \chi^{T_{\sfc[i]}}(e) && \forall e \in E[S_i],\ \forall i\in[w]\nonumber\\
z(e) &= \chi^{F_i}(e) && \forall e \in \delta(S_i)\setminus F,\ \forall i\in[w]\enspace.\nonumber
\end{align}

\item\label{algstep:combineY} Define $y\in\mathbb{R}^E_{\geqslant 0}$ by
$
y(e) = \begin{cases}
y_{\sfc[i]}(e) & \text{if $e\in E[S_i]$ for some $i\in[w]$,}\\
z(e) & \text{else.}
\end{cases}
$

\item\label{algostep:findT} Let $T\subseteq E[S]$ be of minimum length $\ell(T)$ such that
$$\abovedisplayskip=2pt\belowdisplayskip=0pt\text{
	\begin{enumerate*}[label=(\roman*)]
	\item $\chi^T\in \PST\sfc$,\,
	\item $T\cap E[S_i]=T_{\sfc[i]} \;\forall i\in [w]$, and\,
	\item $T\cap \delta(S_i) = F_i \;\forall i\in [w]$.
	\end{enumerate*}
}
$$

\item Output $(y, T)$.
\end{algostepsarabic}
\end{algorithm2e}

The idea of this modified extension step is to maintain both a fractional point and a spanning tree at every stage, where (as in the dynamic program for \MLCST) extension steps using the linear program are always based on trees.
A new fractional point is then obtained by combining the potentially fractional extension with the fractional solutions of the subproblems (see \cref{algstep:combineY}).
Opposed to the situation in \MLCST, the application here only requires spanning trees that have an odd number of edges in small cuts of the corresponding fractional point.
This is easily achieved by the construction of the trees in \cref{algostep:findT}.
This construction guarantees that every pair $(y,T)$ returned by \cref{alg:extensionQjoins} satisfies the first two points of \cref{prop:TjPartialProps} with respect to the connectivity triple $\sfc$, which we formally prove in \cref{lem:pairsSatisfyProperty}.

For propagation, we use \cref{alg:propagationQjoins}, which is an analogon of \cref{alg:propagationLaminarDP}.
It considers all potential candidates for $(y_{\sfc},T_{\sfc})$ that can be obtained through extension steps, and the shortest pair $(y,T)$ with respect to $\ell^\top y$ is returned.

\begin{algorithm2e}[H]
\caption{Propagation to $(y_{\sfc},T_{\sfc})$ from all $(y_{\sfcprime}, T_{\sfcprime})$ with $S'\subsetneq S$.\strut{}}\label{alg:propagationQjoins}

\begin{algostepsarabic}

\item For every choice of triples $\sfc[i]\in\mathcal{K}'$ for $i\in[w]$ where $S_1, \ldots, S_w \subsetneq S$\strut{} have pairwise empty intersections, apply \cref{alg:extensionQjoins} to extend $(y_{\sfc[1]}, T_{\sfc[1]})$,~$\ldots\,$, $(y_{\sfc[w]}, T_{\sfc[w]})$ to $\sfc$.
Collect all pairs $(y,T)$ obtained this way in the set $\mathcal{P}$.

\item Return $(y_{\sfc},T_{\sfc})\in\argmin_{(y,T)\in\mathcal{P}}\ell^\top y$.
\end{algostepsarabic}
\end{algorithm2e}

We now show that the pairs $(y,T)$ returned by \cref{alg:propagationQjoins} have \cref{prop:TjPartialProps}.
Adapting the approach pursued in the case of \MLCST (\cref{lem:treesSatisfyProperty}), we proceed by induction.
Clearly, the pair $(y_{(\emptyset,\emptyset,\{\emptyset\})}, T_{(\emptyset,\emptyset,\{\emptyset\})})=(0,\emptyset)$ satisfies \cref{prop:TjPartialProps}, and the inductive step is given by the following lemma.

\begin{lemma}\label{lem:pairsSatisfyProperty}
Let $\sfc\in\mathcal{K}'$ with $S\neq\emptyset$.
Assume that for all $\sfcprime\in\mathcal{K}'$ with $S'\subsetneq S$, we are given $(y_{\sfcprime},T_{\sfcprime})$ satisfying \cref{prop:TjPartialProps}, and let $(y_{\sfc},T_{\sfc})$ be obtained from \cref{alg:propagationQjoins}.
Then $(y_{\sfc},T_{\sfc})$ satisfies \cref{prop:TjPartialProps}, as well.
\end{lemma}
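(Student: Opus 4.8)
The plan is to mimic the structure of the proof of \cref{lem:treesSatisfyProperty}, but tracking pairs $(y,T)$ rather than single trees, and using $\mathcal{K}'$ in place of $\mathcal{K}$ throughout. First I would note that \cref{alg:propagationQjoins} returns $(y_{\sfc},T_{\sfc})$ as one of the pairs produced by applying \cref{alg:extensionQjoins}, so it suffices to show (a) every pair $(y,T)$ output by \cref{alg:extensionQjoins} when extending from $T_{\sfc[1]},\ldots,T_{\sfc[w]}$ satisfies \cref{propitem:yAlphaOdd,propitem:Tgood} of \cref{prop:TjPartialProps} with respect to $\sfc$, and (b) among all such pairs there is at least one whose $\ell^\top y$-value is at most $\ell(U)$ for every competitor $U\subseteq E[S]$ as in \cref{propitem:cheapPair}, while simultaneously every output pair satisfies $\ell(T)\le \ell^\top y$. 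Combining (a) and (b) gives \cref{prop:TjPartialProps} for $(y_{\sfc},T_{\sfc})$.

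For part (a): \cref{propitem:Tgood} is immediate from \cref{algostep:findT}, which forces $\chi^T\in\PST\sfc$ and $T\cap E[S_i]=T_{\sfc[i]}$, $T\cap\delta(S_i)=F_i$; since the $y$-value on a $y$-small cut is integral and such a cut either lies inside some $S_i$ (where $y$ equals $y_{\sfc[i]}$ and $\chi^{T}=\chi^{T_{\sfc[i]}}$, and we invoke the inductive \cref{propitem:Tgood} for $T_{\sfc[i]}$), or equals some $S_i$ (where $y(\delta(S_i))=|F_i|$ is odd by membership in $\mathcal{K}'$ and $\chi^T$ agrees by construction), or satisfies $S_i\subsetneq S'\subsetneq S$ (where the LP constraint $z(\delta(S'))+|F\cap\delta(S')|\ge\tau+2$ makes $S'$ a $y$-large cut, hence not $y$-small). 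For \cref{propitem:yAlphaOdd}: $\supp(y)\subseteq E[S]$ holds since $z\in\PST\sfc$ has support in $E[S]$ and each $y_{\sfc[i]}$ has support in $E[S_i]\subseteq E[S]$; the $\tau$-oddness with respect to $\mathcal{L}_S$ follows cut by cut exactly as in the $y$-small/$y$-large case analysis just sketched, using that cuts $S'\in\mathcal{L}_S$ not nested into any $S_i$ and not equal to any $S_i$ are forced $y$-large by the LP, and cuts nested in $S_i$ inherit $\tau$-oddness from the inductive \cref{propitem:yAlphaOdd} for $y_{\sfc[i]}$. Also $\ell(T)\le\ell^\top y$ holds because $T$ from \cref{algostep:findT} is feasible for the constraints and $z$ restricted appropriately is a fractional feasible point in that spanning-tree polytope with $\ell^\top z=\ell^\top y-\sum_i\ell(T_{\sfc[i]})$, and by LP integrality/optimality of a shortest such tree.

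For part (b), which is the heart of the matter, I would fix a competitor $U\subseteq E[S]$ with $\chi^U\in\PST\sfc$ and $\chi^U+\chi^F$ $\tau$-odd with respect to $\mathcal{L}_S$, and replay the patching construction from the proof of \cref{lem:treesSatisfyProperty}. Pick $R$ right-compatible with $\sfc$, set $T:=U\cup F\cup R$, let $S_1,\ldots,S_w$ be the maximal $\chi^{U\cup F}$-small cuts in $\mathcal{L}_S$, $F_i:=T\cap\delta(S_i)$, and iteratively define patterns $\mathcal{C}_i$ (so $\sfc[i]\in\mathcal{K}'$, using oddness of $|F_i|$, which holds since these cuts are $\chi^{U\cup F}$-small) and trees $T_i:=(T_{i-1}\setminus E[S_i])\cup T_{\sfc[i]}$ via $T_{i-1}$ compatible with $\sfc[i]$. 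As in \cref{lem:treesSatisfyProperty}, each $T_i$ is a spanning tree (left/right compatibility), $c(T_{i-1})\ge c(T_i)$ follows from the inductive \cref{propitem:cheapPair} applied to $T_{\sfc[i]}$ against $U\cap E[S_i]$ (note $\chi^{U\cap E[S_i]}\in\PST\sfc[i]$ and $\chi^{U\cap E[S_i]}+\chi^{F_i}$ is $\tau$-odd on $\mathcal{L}_{S_i}$, so $U\cap E[S_i]$ is a legal competitor for that subproblem), and $T_w\cap E[S]$ is feasible for \eqref{eq:lamExLP2} when extending from $T_{\sfc[1]},\ldots,T_{\sfc[w]}$: the lower bounds $\tau+2$ on cuts $S_i\subsetneq S'\subsetneq S$ hold precisely because the $S_i$ are the \emph{maximal} $\chi^{U\cup F}$-small cuts so any $S'$ strictly between them is $\chi^{U\cup F}$-large, i.e.\ $|(U\cup F)\cap\delta(S')|\ge\tau+2$, and $|T_w\cap\delta(S')|=|(U\cup F)\cap\delta(S')|$ by construction. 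Hence the corresponding LP minimizer $z$ satisfies $\ell^\top z\le\ell(T_w\cap E[S])-\sum_i\ell(T_{\sfc[i]})$, the combined point $y$ from \cref{algstep:combineY} has $\ell^\top y=\ell^\top z+\sum_i\ell^\top y_{\sfc[i]}\le\ell^\top z+\sum_i\ell(T_{\sfc[i]})\le\ell(T_w\cap E[S])\le\ell(T\cap E[S])=\ell(U)$, using the inductive bound $\ell^\top y_{\sfc[i]}\le\ell(T_{\sfc[i]})$ and $c(T)\ge c(T_w)$. Combined with $\ell(T_{\sfc})\le\ell^\top y_{\sfc}$ from part (a) and the fact that \cref{alg:propagationQjoins} picks the pair minimizing $\ell^\top y$, this yields \cref{propitem:cheapPair} for $(y_{\sfc},T_{\sfc})$.

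The main obstacle I anticipate is the bookkeeping in part (b) showing that the iteratively patched object $T_w$ really is feasible for \eqref{eq:lamExLP2}: one must carefully verify that the maximal-small-cut choice of $S_1,\ldots,S_w$ is exactly what makes the $\tau+2$ lower-bound constraints valid, that patching $T_{\sfc[i]}$ into $T_{i-1}$ does not disturb $\delta(S')$ for $S'$ sitting between some $S_i$ and $S$ (it changes only $E[S_i]$, and $\delta(S')$ contains no edges of $E[S_i]$ when $S_i\subseteq S'$), and that the support and integrality constraints defining $\mathcal{K}'$ (oddness of $|F_i|$) are preserved — this is where the difference between the \MLCST bounds and the parity proxy really lives, and where the relaxed-connectivity philosophy (the DP need not induce the "correct" patterns $\mathcal{C}_i$, only \emph{some} compatible ones) does the real work.
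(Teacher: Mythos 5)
Your proposal follows the paper's proof essentially step for step: the same split into (a) establishing \cref{propitem:yAlphaOdd,propitem:Tgood} of \cref{prop:TjPartialProps} together with $\ell(T)\leqslant\ell^\top y$ for every pair produced by \cref{alg:extensionQjoins}, and (b) the competitor bound via the iterative patching $T_i\coloneqq(T_{i-1}\setminus E[S_i])\cup T_{\sfc[i]}$ along the maximal small cuts, with $U\cap E[S_i]$ certified as a legal competitor for the subproblem at $\sfc[i]$ and $T_w\cap E[S]$ certified as feasible for~\eqref{eq:lamExLP2}. All the key ideas of the paper's argument are present.

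However, the final chain of inequalities in part (b) is mis-assembled and fails as written. First, the identity $\ell^\top y=\ell^\top z+\sum_i\ell^\top y_{\sfc[i]}$ is wrong: the LP variable $z$ lives on all of $E[S]$ and its restriction to $E[S_i]$ is fixed to $\chi^{T_{\sfc[i]}}$, so the correct decomposition is $\ell^\top y=\sum_i\ell^\top y_{\sfc[i]}+\ell^\top z-\sum_i\ell(T_{\sfc[i]})$ (this is~\eqref{eq:splitY} in the paper); relatedly, LP optimality gives $\ell^\top z\leqslant\ell(T_w\cap E[S])$, not $\ell^\top z\leqslant\ell(T_w\cap E[S])-\sum_i\ell(T_{\sfc[i]})$. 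Second, and more seriously, the step bounding $\sum_i\ell^\top y_{\sfc[i]}$ by $\sum_i\ell(T_{\sfc[i]})$ invokes the inductive inequality in the wrong direction: \cref{propitem:cheapPair} guarantees $\ell(T_{\sfc[i]})\leqslant\ell^\top y_{\sfc[i]}$, and in general this is strict (which is precisely why the DP must carry both a fractional point and a tree). The repair uses an ingredient you already established: apply the inductive \cref{propitem:cheapPair} with competitor $U\cap E[S_i]$ to get $\ell^\top y_{\sfc[i]}\leqslant\ell(U\cap E[S_i])$, combine with $\ell^\top z\leqslant\ell(T_w\cap E[S])=\ell\bigl(U\setminus\bigcup_{i}E[S_i]\bigr)+\sum_i\ell(T_{\sfc[i]})$, and substitute into the correct decomposition; the $\sum_i\ell(T_{\sfc[i]})$ terms cancel and one obtains $\ell^\top y\leqslant\sum_i\ell(U\cap E[S_i])+\ell\bigl(U\setminus\bigcup_iE[S_i]\bigr)=\ell(U)$. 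The same sign confusion appears in your part (a) parenthetical claiming $\ell^\top z=\ell^\top y-\sum_i\ell(T_{\sfc[i]})$; the conclusion $\ell(T)\leqslant\ell^\top y$ still holds, but via $\ell(T)\leqslant\ell^\top z$ and $\ell^\top z\leqslant\ell^\top y$, the latter again because $\ell(T_{\sfc[i]})\leqslant\ell^\top y_{\sfc[i]}$.
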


\begin{proof}
Let $\mathcal{P}$ be defined as in \cref{alg:propagationQjoins}.
We already remarked above that all pairs $(y,T)\in\mathcal{P}$ satisfy \cref{propitem:yAlphaOdd,propitem:Tgood} of \cref{prop:TjPartialProps} with respect to the connectivity triple $\sfc$.
Indeed, $\supp(y_{\sfc})\subseteq E[S]$ holds by definition, as well as $\chi^{T_{\sfc}}\in\PST\sfc$.
Additionally, the facts that $y_{\sfc}$ is $\tau$-odd with respect to $\mathcal{L}_{S}$ and that $\chi^{T_{\sfc}}$ coincides with $y_{\sfc}$ on all $y_{\sfc}$-small cuts in $\mathcal{L}_{S}$ are true by definition for cuts $L\in\mathcal{L}_{S}$ with $S_i\subsetneq L$ for some $i\in[w]$; and they are implied by the assumptions on $(y_{\sfc[i]},T_{\sfc[i]})$ through \cref{prop:TjPartialProps} for cuts $L\in\mathcal{L}_{S}$ with $L\subsetneq S_i$ for some $i\in[w]$.

Moreover, for any pair $(y,T)\in\mathcal{P}$, we also have $\ell(T)\leqslant\ell^\top y$: If $z$ is the solution of the linear program~\eqref{eq:lamExLP2} that was used to define $y$, then $\ell(T)\leqslant \ell^\top z$ because $\chi^T$ is in fact an optimal solution of the same linear program without the constraints $z(\delta(S))\geqslant\tau+2$.
As by assumption, $\ell(T_{\sfc[i]})\leqslant \ell^\top y_{\sfc[i]}$, we further see that $\ell^\top z\leqslant \ell^\top y$, and hence $\ell(T)\leqslant\ell^\top y$, which is the first statement in \cref{propitem:cheapPair}.

Thus, it remains to prove that there exists a pair $(y,T)\in\mathcal{P}$ such that $\ell^\top y\leqslant \ell(U)$ for every set $U\subseteq E[S]$ such that $\chi^{U}\in\PST\sfc$ and $\chi^{U}+\chi^{F}$ is $\tau$-odd with respect to $\mathcal{L}_{S}$.
Fix such a set $U$, and let $R\subseteq\binom{V\setminus S}{2}$ be any set of edges that is right-compatible with $\sfc$.
Then, $T\coloneqq U\cup F\cup R$ is in the spanning tree polytope of $(V,E\cup R)$.
Let $S_1,\ldots,S_w\subsetneq S$ be the maximal $\chi^T$-small cuts in $\mathcal{L}_{S}$, and let $F_i=T\cap\delta(S_i)$.
We define connectivity patterns $\mathcal{C}_1,\ldots,\mathcal{C}_w$ such that $\sfc[i]\in\mathcal{K}'$ iteratively as follows, where $T_0\coloneqq T$ and $i\in[w]$.
\begin{align}\label{eq:iterativeC_Qjoins}
\text{Let $\mathcal{C}_{i}$ s.t.~$T_{i-1}$ is compatible with $\sfc[i]$, and let $T_i\coloneqq \big(T_{i-1}\setminus E[S_i]\big)\cup T_{\sfc[i]}$.}
\end{align}
Assume that we extend $(y_{\sfc[1]},T_{\sfc[1]}),\ldots,(y_{\sfc[w]},T_{\sfc[w]})$ to $\sfc$ using \cref{alg:extensionQjoins}.
If the algorithm returns the pair $(y,T)$, then $\ell^\top y\leqslant \ell(U)$.
To this end, observe that if $z$ is the solution of the linear program~\eqref{eq:lamExLP2} used in this call to \cref{alg:extensionQjoins}, then we can write
\begin{equation}\label{eq:splitY}
\ell^\top y = \sum_{i\in[w]}\ell^\top y_{\sfc[i]} + \ell^\top z-\sum_{i\in[w]}\ell(T_{\sfc[i]})\enspace.
\end{equation}
We will bound the right-hand side by $\ell(U)$.
Thereto, we claim that by the construction in~\eqref{eq:iterativeC_Qjoins}, we have $\ell^\top y_{\sfc[i]}\leqslant \ell(U\cap E[S_i])$.
This follows from invoking \cref{propitem:cheapPair} of \cref{prop:TjPartialProps} for the pair $(y_{\sfc[i]},T_{\sfc[i]})$ (which is satisfied by assumption) with the edge set $U\cap E[S_i]$.
To this end, we have to show that \begin{enumerate*}[label=(\alph*)]\item\label{inlineitem:UinPST} $\chi^{U\cap E[S_i]}\in\PST\sfc[i]$, and \item\label{inlineitem:UandFOdd} $\chi^{U\cap E[S_i]}+\chi^{F_i}$ is $\tau$-odd with respect to $\mathcal{L}_{S_i}$.\end{enumerate*} Indeed, \cref{inlineitem:UinPST} follows from the fact that $T_{i-1}$ is compatible with $\sfc[i]$, hence $\chi^{T_{i-1}\cap E[S_i]}\in\PST\sfc[i]$, and $T_{i-1}\cap E[S_i]=U\cap E[S_i]$ by construction; \cref{inlineitem:UandFOdd} follows from $\chi^U+\chi^{F}$ being $\tau$-odd with respect to $\mathcal{L}_{S}$, as $S_i\subsetneq S$.
Furthermore, note that $T_w$ is compatible with $\sfc$, hence $T_w\cap E[S]$ is left-compatible with $\sfc$, i.e., $\chi^{T_w\cap E[S]}\in\PST\sfc$.
Moreover, we can write
\begin{equation*}\textstyle
T_w\cap E[S]=\left(U\setminus\bigcup_{i\in[w]} E[S_i]\right)\cup\textstyle\,\bigcup_{i\in[w]}T_{\sfc[i]}\enspace,
\end{equation*}
which, together with the fact that $S_1,\ldots,S_w$ are maximal $\chi^T$-small cuts in $\mathcal{L}_{S}$, implies that $T_w\cap E[S]$ is an integral solution of~\eqref{eq:lamExLP2}.
As $z$ is an optimal fractional solution of the same linear program, we have $\ell^\top z\leqslant \ell(T_w\cap E[S])$.
Applying the inequalities just obtained to~\eqref{eq:splitY}, we get
\begin{align*}
\ell^\top y &\leqslant \sum_{i\in[w]} \ell(U\cap E[S_i]) + \ell(T_w\cap E[S]) - \sum_{i\in[w]}\ell(T_{\sfc[i]})\\
&= \sum_{i\in[w]} \ell(U\cap E[S_i]) + \ell\left(U\setminus{\textstyle\bigcup_{i\in[w]} E[S_i]}\right)\\
&=\ell(U)\enspace,
\end{align*}
as desired.
\end{proof}

In order to bound the running time of our dynamic program, we need an upper bound on the number of connectivity triples in $\mathcal{K}'$, but this is easily obtained from the upper bound $|\mathcal{K}|\leqslant |V|^{\Oh(\tau)}$ in \cref{prop:sizeK}: We have $\mathcal{K}'\subseteq\mathcal{K}$, and thus also $|\mathcal{K}'|\leqslant |V|^{\Oh(\tau)}$.
With these ingredients, we are finally ready to formally prove \cref{thm:QJdPguarantee}.

\begin{proof}[Proof of \cref{thm:QJdPguarantee}]
We run a dynamic program that calculates pairs $(y_{\sfc},T_{\sfc})$ for all $\sfc\in\mathcal{K}'$ starting with the initialization $(y_{(\emptyset,\emptyset,\{\emptyset\})},T_{(\emptyset,\emptyset,\{\emptyset\})})=(0,\emptyset)$, and using \cref{alg:propagationQjoins} for propagation in an order such that $(y_{\sfcprime},T_{\sfcprime})$ is computed before $(y_{\sfc},T_{\sfc})$ if $S'\subsetneq S$.
Note that $(y_{(\emptyset,\emptyset,\{\emptyset\})},T_{(\emptyset,\emptyset,\{\emptyset\})})$ satisfies \cref{prop:TjPartialProps}, and hence by an inductive application of \cref{lem:pairsSatisfyProperty}, all pairs $(y_{\sfc},T_{\sfc})$ satisfy \cref{prop:TjPartialProps}.
In particular, $(y_{(V,\emptyset,\{\emptyset\})},T_{(V,\emptyset,\{\emptyset\})})$ is thus a pair satisfying the guarantees of \cref{thm:QJdPguarantee}.

In terms of running time, the dominating operation is repeatedly solving linear programs of the type~\eqref{eq:lamExLP2}.
The total number of linear programs that we have to solve during this procedure is bounded from above by $|\mathcal{K}'|^{k+1}$, and the running time of $|V|^{\Oh(k\tau)}$ thus follows from $\mathcal{K}'\subseteq\mathcal{K}$ and \cref{prop:sizeK}, as remarked above, and the fact that linear programs of the type~\eqref{eq:lamExLP2} can be solved in strongly polynomial time by using a compact extended formulation for the spanning tree polytope with small coefficients in the constraint matrix (one can, for example, use the one by \textcite{martin1991using}, which has coefficients that are bounded by $1$ in absolute value), and then applying the framework of \textcite{tardos_1986_strongly}.
\end{proof}

\begingroup
\setlength{\emergencystretch}{3em}
\hbadness 2772\relax
\printbibliography
\endgroup

\appendix
\section{Weakness of the natural relaxation}\label{sec:relaxationWeak}

In this section, we demonstrate that the natural relaxation of \MCCST, which is given by
\begin{equation*}
Q = \left\{ x\in \PST \,\middle|\, a_i \leqslant x(\delta(S_i)) \leqslant b_i \;\forall i\in [k] \right\}\enspace,
\end{equation*}
is too weak for allowing small bounds on constraint violation when comparing an integral solution to the optimal value of the (fractional) relaxation.
More precisely, we show the following.

\begin{theorem}\label{thm:relaxationWeak}
For every $\varepsilon>0$, there is an instance of \MCCST and a point $y\in Q$ such that for any tree $T$ satisfying the chain constraints, there is a cut $C$ in the chain such that
\begin{equation*}\label{eq:relaxationWeak}
|T\cap \delta(C)|\geqslant (2-\varepsilon)\cdot y(\delta(C))\enspace.
\end{equation*}
\end{theorem}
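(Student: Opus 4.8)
The plan is to produce, for each $\varepsilon>0$, an explicit instance together with a fractional point $y\in Q$, and to verify two things: that $y$ satisfies the chain constraints, and that every spanning tree $T$ of the instance has $|T\cap\delta(C)|\geqslant(2-\varepsilon)\,y(\delta(C))$ for some cut $C$ of the chain. I would choose the bounds $a_i,b_i$ loosely enough that $y$ is feasible, so that the real content is the second, structural statement (equivalently: the cut profile of the LP optimum $y$ cannot be matched by any integral solution within a factor $2-\varepsilon$).

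For the construction I would build the graph from many copies of a small \emph{gadget} (think of a triangle, or of a four-cycle ``diamond'') glued in series along a path of interface vertices, and take the chain to consist of the nested cuts that separate consecutive interface points; this forces a spanning tree of the whole instance, restricted to any one gadget, to be a spanning tree of that gadget. The fractional point $y$ would be a symmetric convex combination of spanning trees---e.g.\ in each gadget independently, the uniform distribution over its spanning trees---so that by symmetry $y$ has the same moderate value $w_i$ on every chain cut $S_i$; then one checks $a_i\leqslant w_i\leqslant b_i$, giving $y\in Q$.

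The core of the proof is the lower bound, which I would obtain through a local-to-global argument: first show that within a single gadget, a gadget-tree that is near-minimal on the ``incoming'' cut is forced to be near-maximal on the ``outgoing'' cut; then argue that, propagating this along the chain, every spanning tree must concentrate (use roughly $2w_i$ edges) on some cut $S_i$, since following the forced choices everywhere eventually contradicts global connectivity and acyclicity. This yields the claimed overshoot with $C=S_i$.

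The main obstacle is the tension between the two parts of the proof. A feasible fractional point is a convex combination of spanning trees, each of which uses at least about $w_i$ edges of $\delta(S_i)$, so keeping $y$ well below the blow-up threshold on \emph{every} chain cut forces ``most'' trees in its support to be balanced on each individual cut---whereas we need \emph{every} spanning tree to be unbalanced on \emph{some} cut. Reconciling these requires many gadgets and cuts (so that averaging helps $y$ while a union-bound-type obstruction still pins down each integral tree), and, more delicately, pushing the ratio all the way to $2-\varepsilon$ rather than to a weaker constant (a naive chain of diamonds only reaches $\tfrac43$) forces a recursive, series--parallel-style construction in which the fractional point can halve the connectivity demand passing through each cut at every level, while no integral tree can.
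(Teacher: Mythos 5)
Your final paragraph correctly diagnoses the situation and lands on exactly the right idea: a flat chain of gadgets in series cannot beat a constant like $\sfrac43$, and what is needed is a recursive, multi-level construction in which the fractional point pays only $\sfrac12$ per level on every cut while any integral tree is forced to pay a full unit per level on \emph{some} cut. This is precisely what the paper's construction $H_k$ does (a path of length $2^{k-1}$ with a dyadic hierarchy of length-two detours $v_i\text{--}w_{i,j}\text{--}v_j$ at scales $2^0,\dots,2^{k-1}$; the fractional point puts $1$ on path edges and $\sfrac12$ on detour edges, so every chain cut has $y$-value $1+\sfrac{k}{2}$, while some cut must contain at least $k$ tree edges, giving ratio $2-\sfrac{2}{k+2}$).

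However, as written the proposal has a genuine gap: the only argument you actually describe in any detail --- the ``local-to-global'' propagation where a gadget-tree that is light on the incoming cut must be heavy on the outgoing cut --- applies to the flat series construction, which you yourself concede cannot reach $2-\varepsilon$. For the recursive construction you gesture at, neither the graph nor the lower-bound argument is supplied, and the left-to-right propagation along the chain is not the right mechanism there. What is needed instead is an induction on the recursion depth: the top-level detour forces at least one of its two edges into the tree, say the one entering the left half; the left half is a copy of the depth-$(k-1)$ instance, so by induction some cut inside it already carries $k-1$ tree edges; and the chosen top-level edge also crosses that same cut, giving $k$. Without this (or an equivalent) argument, the claimed overshoot of $2-\varepsilon$ is not established. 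A secondary, more minor point: you should also verify that the fractional point is genuinely in $\PST$ (in the paper this is done by exhibiting it as the average of two explicit spanning trees), not merely that it satisfies the cut bounds.
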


\begin{proof}
We construct a family of instances of the \MCCST problem depending on a parameter $k\in\mathbb{Z}_{>0}$, where each instance has the properties listed in \cref{thm:relaxationWeak}, but with a factor $2-\sfrac{2}{(k+2)}$ instead of $2-\varepsilon$.
This clearly implies the theorem because for a fixed $\varepsilon>0$ and large enough $k$, we obtain an instance with the desired properties.

For $k\in\mathbb{Z}_{>0}$, let $H_k$ be the graph obtained as follows.
Start with a path of length $2^{k-1}$ on vertices $v_0,v_1,\ldots,v_{2^{k-1}}$, and for all $i,j\in\{0,1,\ldots,2^{k-1}\}$ such that $j-i=2^\ell$ for some $\ell\in\mathbb{Z}_{\geqslant 0}$, add a path of length $2$ between $v_i$ and $v_j$.
More precisely, for any such $i$ and $j$, we add a new vertex $w_{i,j}$ as well as edges $\{v_i,w_{ij}\}$ and $\{w_{ij},v_j\}$.
Additionally, we define cuts $S_\ell$ for $\ell\in [2^{k}]$ by
\begin{equation*}
S_\ell \coloneqq \{v_i\mid 2i<\ell\}\cup\{w_{i,j}\mid i+j<\ell\}\enspace.
\end{equation*}
Note that for $\ell_1<\ell_2$, we have $S_{\ell_1}\subsetneq S_{\ell_2}$, and thus the family $\mathcal{S}_k\coloneqq\{S_1,S_2,\ldots,S_{2^{k}}\}$ is a chain.
Finally, we define arbitrary uniform edge costs.
An illustration of this construction for $k=3$ is given in \cref{fig:H_k}.
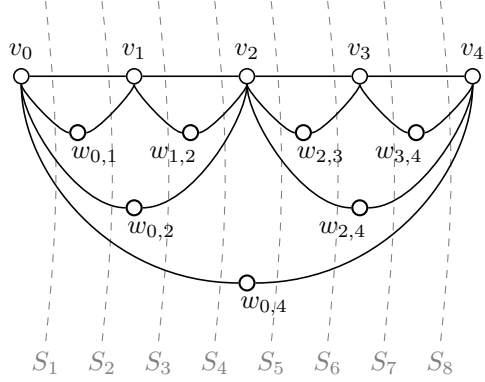
\begin{figure}[ht]
\centering
\begin{tikzpicture}[xscale=1.5]
\small
\pgfmathsetmacro{\k}{3}
\pgfmathsetmacro{\n}{int(2^(\k-1))} %

\foreach \x in {0,...,\n} {
  \node[draw, black, circle, inner sep=2, semithick, label={$v_{\x}$}] (v\x) at (\x,0) {};
}
\foreach \x [evaluate=\x as \y using \x-1] in {1,...,\n} {
  \draw[semithick] (v\y) -- (v\x);
}
\foreach [
  evaluate=\x as \first using int(2*\x-1),
  evaluate=\x as \second using int(2*\x)
] \x in {1,...,\n} {
  \draw[gray, dashed, xshift=-1] (\x-0.75,1) to [bend left=4] (\x-0.75,-\k-.55) node [below] {$S_{\first}$};
  \draw[gray, dashed, xshift=-1] (\x-0.25,1) to [bend left=4] (\x-0.25,-\k-.55) node [below] {$S_{\second}$};
}
\foreach [
    evaluate=\level as \nstep using int(2^(\k-\level)),
    evaluate=\level as \stepsize using int(2^(\level-1))
  ] \level in {1,...,\k}{
  \foreach [
    evaluate=\step as \origin using int((\step-1)*\stepsize),
    evaluate=\step as \target using int(\step*\stepsize)
  ] \step in {1,...,\nstep}{
    \pgfmathsetmacro{\middle}{0.5*(\target+\origin)}
    \pgfmathsetmacro{\ypos}{-\level+0.25}
    \ifodd\step
      \newcommand{\pos}{below right}
    \else
      \newcommand{\pos}{below left}
    \fi
    \node [draw, black, circle, inner sep=2, thick, label={[\pos=5pt and -6pt]:$w_{\origin,\target}$}] (w) at (\middle,\ypos) {};
    \pgfmathsetmacro{\loose}{0.3*\level}
    \draw[semithick] (v\origin) to[out=-90, in=180, looseness=\loose] (w);
    \draw[semithick] (w) to[out=0, in=-90, looseness=\loose] (v\target);
  }
}
\end{tikzpicture}  \caption{The graph $H_k$ and the family $\mathcal{S}_k=\{S_1,\ldots,S_{2^k}\}$ of cuts for $k=3$.}
\label{fig:H_k}
\end{figure}

To complete the instances of the \MCCST problem and to ensure feasibility, we can define $a_\ell=0$ and $b_\ell=k+1$ for all $\ell\in[2^k]$.
Note that this is equivalent to not putting any constraints on the sizes of the given cuts.
In fact our arguments are independent of the precise degree bounds (given feasibility).

We first observe that the corresponding relaxation $Q$ has solutions with small weight on all cuts in $\mathcal{S}_k$.
To this end, define $y\in \mathbb{R}^E_{\geqslant 0}$ by
\begin{equation*}
y(e) = \begin{cases}
1 & \text{if $e=\{v_{i-1},v_{i}\}$ for some $i\in[2^{k-1}]$}\\
\sfrac12 & \text{else}
\end{cases}\enspace.
\end{equation*}
Note that $y$ is indeed a point in $\PST$.
This can be seen by writing $y=\sfrac{\left(x_1+x_2\right)}{2}$, where $x_1,x_2\in\mathbb{R}^E_{\geqslant 0}$ are incidence vectors of spanning trees given by
\begin{equation*}
x_1(e) = \begin{cases}
1 & \text{if $e=\{v_{i-1},v_{i}\}$ for some $i$}\\
1 & \text{if $e=\{v_{i},w_{i,j}\}$ for some $i,j$}\\
0 & \text{if $e=\{w_{i,j},v_j\}$ for some $i,j$}
\end{cases}\quad\text{and}\quad
x_2(e) = \begin{cases}
1 & \text{if $e=\{v_{i-1},v_{i}\}$ for some $i$}\\
0 & \text{if $e=\{v_{i},w_{i,j}\}$ for some $i,j$}\\
1 & \text{if $e=\{w_{i,j},v_j\}$ for some $i,j$}
\end{cases}
\enspace.
\end{equation*}
Moreover, we observe that for every $i\in[2^k]$, we have
\begin{equation}\label{eq:weightFrac}
y(\delta(S_i)) = 1+\frac{k}{2}\enspace.
\end{equation}
The above implies that indeed, $y\in Q$.
Now, consider a spanning tree $T$ of $H_k$.
We claim that for every $k\in\mathbb{Z}_{>0}$, there exists $i\in[2^k]$ such that
\begin{equation}\label{eq:weightTree}
|T\cap\delta(S_i)| \geqslant k\enspace.
\end{equation}
Once we prove this, we can combine~\eqref{eq:weightFrac} and~\eqref{eq:weightTree} to obtain that there exists $i\in[2^k]$ such that
\begin{equation*}
|T\cap \delta(S_i)| \geqslant \frac{k}{1+\sfrac{k}{2}} \cdot y(\delta(S_i)) = \left(2-\frac{2}{k+2}\right)\cdot y(\delta(S_i))\enspace.
\end{equation*}
By choosing $k$ large enough such that $\varepsilon\geqslant\sfrac{2}{(k+2)}$, we thus obtain an instance satisfying the properties listed in \cref{thm:relaxationWeak}.
It remains to prove the claim.
To this end, we show the following stronger lemma.

\begin{adjustwidth}{1em}{}
\begin{lemma}\label{lem:claimInductive}
Let $F$ be a subset of the edges of $H_k$ such that any vertex $w_{i,j}$ of $H_k$ is incident to at least one edge of $F$.
Then there exists $i\in[2^k]$ such that $|F\cap\delta(S_i)|\geqslant k$.
\end{lemma}

\begin{proof}[Proof of \cref{lem:claimInductive}]
We proceed by induction on $k$.
The statement of the base case $k=1$ is directly implied by the assumption on $F$.
Indeed, at least one of the two edges $\{v_0,w_{0,1}\}$ and $\{w_{0,1},v_1\}$ is in $F$, and correspondingly, at least one of $|F\cap\delta(S_1)|\geqslant 1$ or $|F\cap\delta(S_2)|\geqslant 1$ holds.

For the inductive step, let $k\geqslant 2$ and consider the graph $H_k=(V,E)$.
By the assumption on $F$, at least one of the edges $\{v_0,w_{0,2^{k-1}}\}$ and $\{w_{0,2^{k-1}},v_{2^{k-1}}\}$ is in $F$.
By symmetry, we can assume without loss of generality that $\{v_0,w_{0,2^{k-1}}\}\in F$.
Observe that the subgraph of $H_k$ induced by the vertex set $V_0\coloneqq\{v_i\mid i \leqslant 2^{k-2}\}\cup\{w_{i,j}\mid i,j\leqslant 2^{k-2}\}$ is isomorphic to $H_{k-1}$, and $F\cap E[V_0]$ has an edge incident to every vertex $w_{i,j}$ of this copy of $H_{k-1}$.
Hence, by induction, there exists $i\in[2^{k-1}]$ such that $|(F\cap E[V_0])\cap\delta(S_i)|\geqslant k-1$.
As additionally, $\{v_0,w_{0,2^{k-1}}\}\in\delta(S_i)$, this implies $|F\cap\delta(S_i)|\geqslant k$.
\end{proof}%
\end{adjustwidth}
Finally, observe that by connectivity, every spanning tree $T$ of $H_k$ contains at least one edge incident to $w_{i,j}$, for all vertices $w_{i,j}$ of $H_k$.
Consequently, \cref{lem:claimInductive} does indeed imply existence of $i\in[2^k]$ such that $|T\cap\delta(S_i)|\geqslant k$.
This completes the proof of \cref{thm:relaxationWeak}.
\end{proof}
\section[Analyzing the DP by backtracing \texorpdfstring{$\OPT$}{OPT} fails in the general case]{Analyzing the DP by backtracing \boldmath\texorpdfstring{$\OPT$}{OPT}\unboldmath{} fails in the general case}\label{sec:exampleBacktracingOPT}

The aim of this section is to extend an example from \cref{sec:extensionMLCST} that showed why the analysis of our dynamic programming approach cannot be done in a straightforward classical way, i.e., by backtracing an optimal solution.
While the issues in the example constructed in \cref{sec:extensionMLCST} can be fixed by breaking ties in the right way, we now present a slightly more involved instance (see \cref{fig:fullInstance}) where the naive approach faces problems that cannot be avoided easily.

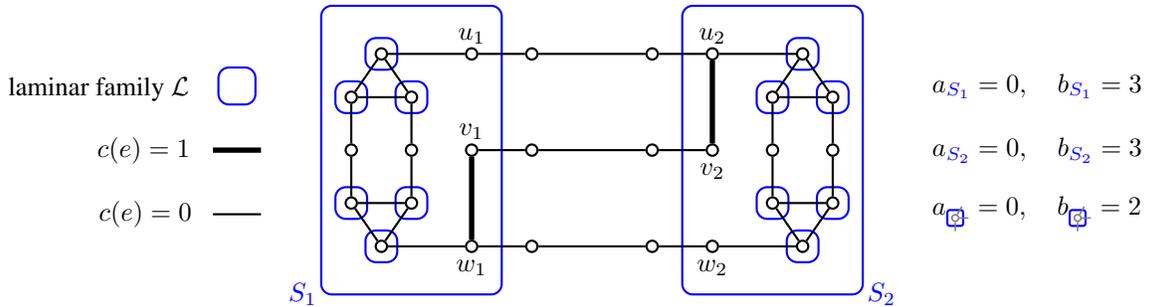
\begin{figure}[ht]
\centering
﻿\begin{tikzpicture}[yscale=1.6, scale=0.8]
\small

\pgfdeclarelayer{background}
\pgfsetlayers{background,main}

\clip (-8,-1.7) rectangle (12,1.7);

\begin{scope}[every node/.style={ns}]
\node (u1) at (0,1) {};
\node (v1) at (0,0) {};
\node (w1) at (0,-1) {};
\node (u1') at (1,1) {};
\node (v1') at (1,0) {};
\node (w1') at (1,-1) {};
\node (u2') at (3,1) {};
\node (v2') at (3,0) {};
\node (w2') at (3,-1) {};
\node (u2) at (4,1) {};
\node (v2) at (4,0) {};
\node (w2) at (4,-1) {};
\node (tl) at (-1.5,1) {};
\node (tl1) at (-2,0.55) {};%
\node (tl2) at (-1,0.55) {};%
\node (bl) at (-1.5,-1) {};
\node (bl1) at (-2,-0.55) {};%
\node (bl2) at (-1,-0.55) {};%
\node (ml1) at (-2,0) {};%
\node (ml2) at (-1,0) {};%
\node (tr) at (5.5,1) {};
\node (tr1) at (6,0.55) {};%
\node (tr2) at (5,0.55) {};%
\node (br) at (5.5,-1) {};
\node (br1) at (6,-0.55) {};%
\node (br2) at (5,-0.55) {};%
\node (mr1) at (6,0) {};%
\node (mr2) at (5,0) {};%
\end{scope}

\begin{scope}[]d
\node [above=-2pt of u1] {$u_1$};
\node [above=-2pt of v1] {$v_1$};
\node [below=-1pt of w1] {$w_1$};
\node [above=-2pt of u2] {$u_2$};
\node [below=-1pt of v2] {$v_2$};
\node [below=-1pt of w2] {$w_2$};
\end{scope}

\begin{scope}[thick]
\draw (tl2) -- (ml2) -- (bl2) -- (bl) -- (bl1) -- (ml1) -- (tl1) -- (tl) -- (u1) -- (u1') -- (u2') -- (u2) -- (v2) -- (v2') -- (v1') -- (v1) -- (w1) -- (w1') -- (w2') -- (w2) -- (br) -- (br1) -- (mr1) -- (tr1) -- (tr) -- (tr2) -- (mr2) -- (br2);
\draw (tl) -- (tl2) -- (tl1);
\draw (bl1) -- (bl2);
\draw (bl) -- (w1);
\draw (br) -- (br2) -- (br1);
\draw (tr1) -- (tr2);
\draw (tr) -- (u2);
\draw [line width=2pt] (v1) -- (w1);
\draw [line width=2pt] (v2) -- (u2);
\end{scope}

\begin{pgfonlayer}{background}
\begin{scope}[rounded corners, thick, blue]
\draw (0.5,1.5) rectangle (-2.5,-1.5) node[left=-2pt] {$S_1$};
\draw (3.5,1.5) rectangle (6.5,-1.5) node[right=-2pt] {$S_2$};
\end{scope}
\begin{scope}[every node/.style={ns, inner sep=.6em, draw=blue, rectangle, rounded corners}]
\node at (tl) {};
\node at (tl1) {};
\node at (tl2) {};
\node at (bl) {};
\node at (bl1) {};
\node at (bl2) {};
\node at (tr) {};
\node at (tr1) {};
\node at (tr2) {};
\node at (br) {};
\node at (br1) {};
\node at (br2) {};
\end{scope}
\end{pgfonlayer}

\node[thick, draw, blue, rectangle, rounded corners, inner sep=0.7em] (laminarLegend) at (-3.9,0.66666) {};
\node[left=0.25cm of laminarLegend] {laminar \smash{family} $\mathcal{L}$};
\draw[line width=2pt] (-3.5,0) --++ (-0.8,0) node[left=0.15cm] {$c(e)=1$};
\draw[thick] (-3.5,-0.66666) --++ (-0.8,0) node[left=0.15cm] {$c(e)=0$};

\node[anchor=west] at (7.5,0.66666) {$a_{\color{blue}S_1}=0,\quad b_{\color{blue}S_1}=3$};
\node[anchor=west] at (7.5,0) {$a_{\color{blue}S_2}=0, \quad b_{\color{blue}S_2}=3$};
\node[anchor=west] at (7.5,-0.66666) {$a_{%
\tikz[baseline=($(x)!0.5!(xx.south)$)]{%
\node[ns, semithick, draw=gray, inner sep=0pt, minimum size=2.5pt, anchor=center] (x) at (0,0) {};
\node[ns, inner sep=.3em, draw=blue, rectangle, rounded corners=1, anchor=center, fill=none] (xx) at (x) {};
\draw[semithick, draw=gray] (x) --++ (0:0.5em);
\draw[semithick, draw=gray] (x) --++ (60:0.5em);
\draw[semithick, draw=gray] (x) --++ (-90:0.5em);}
}=0,
\quad b_{
\tikz[baseline=($(x)!0.5!(xx.south)$)]{%
\node[ns, semithick, draw=gray, inner sep=0pt, minimum size=2.5pt, anchor=center] (x) at (0,0) {};
\node[ns, inner sep=.3em, draw=blue, rectangle, rounded corners=1, anchor=center, fill=none] (xx) at (x) {};
\draw[semithick, draw=gray] (x) --++ (0:0.5em);
\draw[semithick, draw=gray] (x) --++ (60:0.5em);
\draw[semithick, draw=gray] (x) --++ (-90:0.5em);}
}=2$};

\end{tikzpicture} \caption{An instance of the MLCST problem: The laminar family $\mathcal{L}$ is given by the blue sets, with lower and upper bounds indicated on the right.
Edge costs $c\colon E\to\mathbb{R}_{\geqslant 0}$ are $0$ on all edges except for $c((v_1,w_1))=c((u_2,v_2))=1$.}
\label{fig:fullInstance}
\end{figure}

The problem instance in \cref{fig:fullInstance} is very similar to the instance discussed in \cref{sec:extensionMLCST} (\cref{fig:incompatiblePatterns}).
While the latter had edges $(u_i,w_i)$ for $i\in\{1,2\}$, the vertices $u_i$ and $w_i$ are connected by an auxiliary graph in the new instance.
This auxiliary graph has the following two crucial properties:
\begin{enumerate}
\item\label{propItem:intInfeasibility} The auxiliary graph does not contain a spanning tree that satisfies the laminar constraints.
\item\label{propItem:fracFeasibility} The spanning tree polytope of the auxiliary graph contains a point that satisfies the laminar constraints.
\end{enumerate}
In other words, the two properties state that it is possible to ``fractionally connect'' $u_i$ and $w_i$ in the auxiliary graph, while an integral solution cannot connect $u_i$ and $w_i$ through the auxiliary graph.
In particular, this implies that any feasible integral solution will use both edges $(v_1,w_1)$ and $(u_2,v_2)$, and thus have cost at least~$2$.
One such integral solution is given in \cref{subfig:OPTsol}.
Observe that no matter how we choose an integral solution, the connectivity patterns induced on the sets $S_1$ and $S_2$ will always be $\sfc[1]$ and $\sfc[2]$, respectively, as indicated in \cref{subfig:OPTsol}.

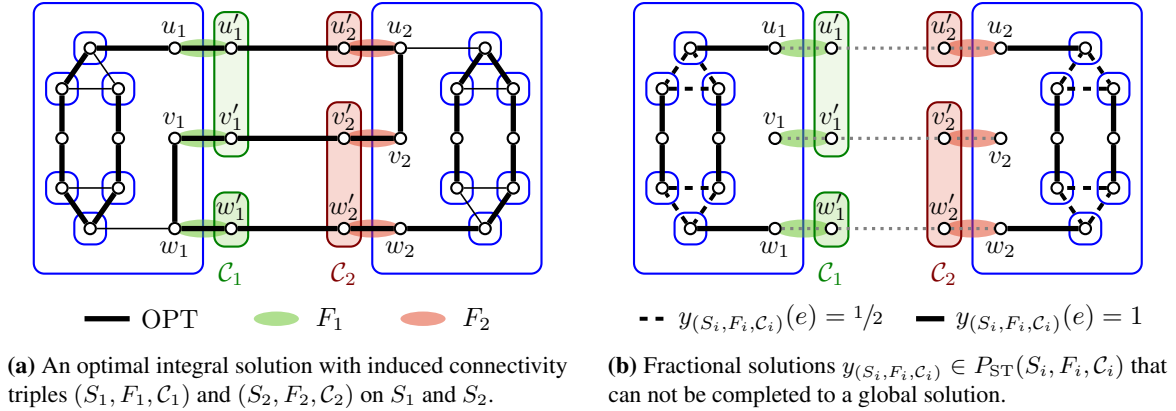
\begin{figure}[!ht]
\hfill
\begin{subfigure}{0.45\linewidth}
\centering
﻿\begin{tikzpicture}[yscale=1.6, scale=0.75]
\small

\pgfdeclarelayer{background}
\pgfsetlayers{background,main}

\clip (-2.6,-2.2) rectangle (6.6,1.55);

\begin{scope}[every node/.style={ns}]
\node (u1) at (0,1) {};
\node (v1) at (0,0) {};
\node (w1) at (0,-1) {};
\node (u1') at (1,1) {};
\node (v1') at (1,0) {};
\node (w1') at (1,-1) {};
\node (u2') at (3,1) {};
\node (v2') at (3,0) {};
\node (w2') at (3,-1) {};
\node (u2) at (4,1) {};
\node (v2) at (4,0) {};
\node (w2) at (4,-1) {};
\node (tl) at (-1.5,1) {};
\node (tl1) at (-2,0.55) {};%
\node (tl2) at (-1,0.55) {};%
\node (bl) at (-1.5,-1) {};
\node (bl1) at (-2,-0.55) {};%
\node (bl2) at (-1,-0.55) {};%
\node (ml1) at (-2,0) {};%
\node (ml2) at (-1,0) {};%
\node (tr) at (5.5,1) {};
\node (tr1) at (6,0.55) {};%
\node (tr2) at (5,0.55) {};%
\node (br) at (5.5,-1) {};
\node (br1) at (6,-0.55) {};%
\node (br2) at (5,-0.55) {};%
\node (mr1) at (6,0) {};%
\node (mr2) at (5,0) {};%
\end{scope}

\begin{scope}[]
\node [above=-2pt of u1] {$u_1$};
\node [above=-2pt of v1] {$v_1$};
\node [below=-1pt of w1] {$w_1$};
\node [above=-2pt of u2] {$u_2$};
\node [below=-1pt of v2] {$v_2$};
\node [below=-1pt of w2] {$w_2$};

\node [above=-3pt of u1'] {$u_1'$};
\node [above=-3pt of v1'] {$v_1'$};
\node [above=-3pt of w1'] {$w_1'$};
\node [above=-3pt of u2'] {$u_2'$};
\node [above=-3pt of v2'] {$v_2'$};
\node [above=-3pt of w2'] {$w_2'$};
\end{scope}

\begin{scope}[semithick]
\draw[line width=2pt] (tl2) -- (ml2) -- (bl2) -- (bl) -- (bl1) -- (ml1) -- (tl1) -- (tl) -- (u1) -- (u1') -- (u2') -- (u2) -- (v2) -- (v2') -- (v1') -- (v1) -- (w1) -- (w1') -- (w2') -- (w2) -- (br) -- (br1) -- (mr1) -- (tr1) -- (tr) -- (tr2) -- (mr2) -- (br2);
\draw (tl) -- (tl2) -- (tl1);
\draw (bl1) -- (bl2);
\draw (bl) -- (w1);
\draw (br) -- (br2) -- (br1);
\draw (tr1) -- (tr2);
\draw (tr) -- (u2);
\end{scope}

\begin{pgfonlayer}{background}
\begin{scope}[rounded corners, thick, blue]
\draw (0.5,1.5) rectangle (-2.5,-1.50);%
\draw (3.5,1.5) rectangle (6.5,-1.5);%
\end{scope}
\begin{scope}[every node/.style={ns, inner sep=.6em, draw=blue, rectangle, rounded corners}]
\node at (tl) {};
\node at (tl1) {};
\node at (tl2) {};
\node at (bl) {};
\node at (bl1) {};
\node at (bl2) {};
\node at (tr) {};
\node at (tr1) {};
\node at (tr2) {};
\node at (br) {};
\node at (br1) {};
\node at (br2) {};
\end{scope}
\begin{scope}[fill opacity=0.5]
\fill[fill=green!50!brown] (0.5,1) ellipse [x radius=0.5cm,y radius=0.1cm];
\fill[fill=green!50!brown] (0.5,0) ellipse [x radius=0.5cm,y radius=0.1cm];
\fill[fill=green!50!brown] (0.5,-1) ellipse [x radius=0.5cm,y radius=0.1cm];
\fill[fill=red!50!brown] (3.5,1) ellipse [x radius=0.5cm,y radius=0.1cm];
\fill[fill=red!50!brown] (3.5,0) ellipse [x radius=0.5cm,y radius=0.1cm];
\fill[fill=red!50!brown] (3.5,-1) ellipse [x radius=0.5cm,y radius=0.1cm];
\end{scope}
\begin{scope}[rounded corners, thick, fill opacity=0.2]
\filldraw[fill=green!50!brown!50, draw=green!50!black] (0.7, -0.2) rectangle (1.3,1.4);
\filldraw[fill=green!50!brown, draw=green!50!black] (0.7, -1.2) rectangle (1.3,-0.6);
\filldraw[fill=red!50!brown, draw=red!50!black] (2.7, 0.8) rectangle (3.3,1.4);
\filldraw[fill=red!50!brown, draw=red!50!black] (2.7, -1.2) rectangle (3.3,0.4);
\end{scope}
\end{pgfonlayer}

\node[below=7pt of w1', green!50!black] {$\mathcal{C}_1$};
\node[below=7pt of w2', red!50!black] {$\mathcal{C}_2$};

\draw[line width=2pt] (-1.6,-2) --++ (0.8,0) node[right] {$\OPT$};
\fill[fill=green!50!brown, fill opacity=0.5] (1.8,-2) ellipse [x radius=0.4cm,y radius=0.08cm] node[right=0.4cm, black, opacity=1] {$F_1$};
\fill[fill=red!50!brown, fill opacity=0.5] (4.4,-2) ellipse [x radius=0.4cm,y radius=0.08cm] node[right=0.4cm, black, opacity=1] {$F_2$};

\end{tikzpicture} \caption{An optimal integral solution with induced connectivity triples $\sfc[1]$ and $\sfc[2]$ on $S_1$ and $S_2$.}
\label{subfig:OPTsol}
\end{subfigure}
\hfill
\begin{subfigure}{0.45\linewidth}
\centering
\begin{tikzpicture}[yscale=1.6, scale=0.75]
\small

\pgfdeclarelayer{background}
\pgfsetlayers{background,main}

\clip (-2.6,-2.2) rectangle (6.6,1.55);

\begin{scope}[every node/.style={ns}]
\node (u1) at (0,1) {};
\node (v1) at (0,0) {};
\node (w1) at (0,-1) {};
\node (u1') at (1,1) {};
\node (v1') at (1,0) {};
\node (w1') at (1,-1) {};
\node (u2') at (3,1) {};
\node (v2') at (3,0) {};
\node (w2') at (3,-1) {};
\node (u2) at (4,1) {};
\node (v2) at (4,0) {};
\node (w2) at (4,-1) {};
\node (tl) at (-1.5,1) {};
\node (tl1) at (-2,0.55) {};%
\node (tl2) at (-1,0.55) {};%
\node (bl) at (-1.5,-1) {};
\node (bl1) at (-2,-0.55) {};%
\node (bl2) at (-1,-0.55) {};%
\node (ml1) at (-2,0) {};%
\node (ml2) at (-1,0) {};%
\node (tr) at (5.5,1) {};
\node (tr1) at (6,0.55) {};%
\node (tr2) at (5,0.55) {};%
\node (br) at (5.5,-1) {};
\node (br1) at (6,-0.55) {};%
\node (br2) at (5,-0.55) {};%
\node (mr1) at (6,0) {};%
\node (mr2) at (5,0) {};%
\end{scope}

\begin{scope}[]
\node [above=-2pt of u1] {$u_1$};
\node [above=-2pt of v1] {$v_1$};
\node [below=-1pt of w1] {$w_1$};
\node [above=-2pt of u2] {$u_2$};
\node [below=-1pt of v2] {$v_2$};
\node [below=-1pt of w2] {$w_2$};

\node [above=-3pt of u1'] {$u_1'$};
\node [above=-3pt of v1'] {$v_1'$};
\node [above=-3pt of w1'] {$w_1'$};
\node [above=-3pt of u2'] {$u_2'$};
\node [above=-3pt of v2'] {$v_2'$};
\node [above=-3pt of w2'] {$w_2'$};
\end{scope}

\begin{scope}[very thick, gray, dotted]
\draw (u1) -- (u1') -- (u2') -- (u2);
\draw (v2) -- (v2') -- (v1') -- (v1);
\draw (w1) -- (w1') -- (w2') -- (w2);
\end{scope}
\begin{scope}[line width=2pt]
\draw (u1) -- (tl);
\draw (tl1) -- (ml1) -- (bl1);
\draw (tl2) -- (ml2) -- (bl2);
\draw (w1) -- (bl);
\draw (u2) -- (tr);
\draw (tr1) -- (mr1) -- (br1);
\draw (tr2) -- (mr2) -- (br2);
\draw (w2) -- (br);
\end{scope}
\begin{scope}[line width=1.5pt, dashed]
\draw (tl) -- (tl1) -- (tl2) -- (tl);
\draw (bl) -- (bl1) -- (bl2) -- (bl);
\draw (tr) -- (tr1) -- (tr2) -- (tr);
\draw (br) -- (br1) -- (br2) -- (br);
\end{scope}

\begin{pgfonlayer}{background}
\begin{scope}[rounded corners, thick, blue]
\draw (0.5,1.5) rectangle (-2.5,-1.5);%
\draw (3.5,1.5) rectangle (6.5,-1.5);%
\end{scope}
\begin{scope}[every node/.style={ns, inner sep=.6em, draw=blue, rectangle, rounded corners}]
\node at (tl) {};
\node at (tl1) {};
\node at (tl2) {};
\node at (bl) {};
\node at (bl1) {};
\node at (bl2) {};
\node at (tr) {};
\node at (tr1) {};
\node at (tr2) {};
\node at (br) {};
\node at (br1) {};
\node at (br2) {};
\end{scope}
\begin{scope}[fill opacity=0.5]
\fill[fill=green!50!brown] (0.5,1) ellipse [x radius=0.5cm,y radius=0.1cm];
\fill[fill=green!50!brown] (0.5,0) ellipse [x radius=0.5cm,y radius=0.1cm];
\fill[fill=green!50!brown] (0.5,-1) ellipse [x radius=0.5cm,y radius=0.1cm];
\fill[fill=red!50!brown] (3.5,1) ellipse [x radius=0.5cm,y radius=0.1cm];
\fill[fill=red!50!brown] (3.5,0) ellipse [x radius=0.5cm,y radius=0.1cm];
\fill[fill=red!50!brown] (3.5,-1) ellipse [x radius=0.5cm,y radius=0.1cm];
\end{scope}
\begin{scope}[rounded corners, thick, fill opacity=0.2]
\filldraw[fill=green!50!brown!50, draw=green!50!black] (0.7, -0.2) rectangle (1.3,1.4);
\filldraw[fill=green!50!brown, draw=green!50!black] (0.7, -1.2) rectangle (1.3,-0.6);
\filldraw[fill=red!50!brown, draw=red!50!black] (2.7, 0.8) rectangle (3.3,1.4);
\filldraw[fill=red!50!brown, draw=red!50!black] (2.7, -1.2) rectangle (3.3,0.4);
\end{scope}
\end{pgfonlayer}

\node[below=7pt of w1', green!50!black] {$\mathcal{C}_1$};
\node[below=7pt of w2', red!50!black] {$\mathcal{C}_2$};

\draw[line width=1.5pt, dashed] (-2.4,-2) --++ (0.5,0) node[right] {$y_{\sfc[i]}(e) = \sfrac12$};
\draw[line width=2pt] (2.5,-2) --++ (0.5,0) node[right] {$y_{\sfc[i]}(e) = 1$};

\end{tikzpicture} \caption{Fractional solutions $y_{\sfc[i]}\in\PST\sfc[i]$ that can not be completed to a global solution.}
\label{subfig:fracSol}
\end{subfigure}
\hfill~
\caption{Optimal integral and partial fractional solutions connect vertices differently inside $S_1$ and $S_2$.}
\label{fig:OPTvsPartialFracSol}
\end{figure}

To analyze our DP approach by classical backtracing of an optimal solution, our goal would be to find partial solutions $y_{\sfc[i]}\in\PST\sfc[i]$ for $i\in\{1,2\}$, and show that a common extension of these solutions has smaller value than the actual optimal solution we started with.
In our example, however, the property in \cref{propItem:fracFeasibility} of the auxiliary graph allows for partial fractional solutions $y_{\sfc[i]}$ that differ substantially from integral solutions in terms of connectivity.
More precisely, the two fractional solutions $y_{\sfc[1]}$ and $y_{\sfc[2]}$ given in \cref{subfig:fracSol} are both of cost $0$ (and hence optimal), but there does not exist a common extension that is feasible for the natural linear relaxation of our problem instance at all.

\medskip

There is one last caveat that has to be addressed: Our dynamic program is designed to construct partial solutions inside all \emph{small} cuts of the laminar family for any choice of edges in the small cuts and corresponding connectivity patterns, and it always extends previously found solutions.
Above, the threshold $\tau$ for deciding whether a cut is small was implicitly assumed to be at least~$3$ so that both $S_1$ and $S_2$ are small cuts.
In the particular example, this implies that all the other cuts in $\mathcal{L}$ (which are precisely the singleton cuts) would be small cuts, as well, forcing our dynamic program to first construct partial solutions in these small cuts and only then extend to $S_1$ and $S_2$.
This would inevitably lead to integral partial solutions, hence we do need a setting where the singleton cuts are \emph{large} cuts.

To achieve this, we introduce dummy edges that increase the number of edges in the singleton cuts of $\mathcal{L}$.
More precisely, for any given threshold $\tau$ and every small singleton cut $\{x\}\in\mathcal{L}$, we can modify the problem instance as follows to turn $\{x\}$ into a large cut: Introduce new vertices $x_1,\ldots,x_\tau$ and edges $\{x, x_i\}$ for $i\in \{1,\dots, \tau\}$, and increase the bounds $a_{\{x\}}$ and $b_{\{x\}}$ by $\tau$.
Feasible solutions of the old and the new instance are in one-to-one correspondence and can be transformed into one another by adding or removing all the edges $\{x, x_i\}$, which are obviously part of any feasible solution of the new instance.

\medskip

To conclude, by introducing dummy edges in the graph given in \cref{fig:fullInstance} as described above, we obtain an instance where an analysis of our DP approach by backtracing an optimal solution in a classical way fails, which supports our novel approach.

\end{document}